\documentclass[12pt,a4paper]{article}
\usepackage[utf8]{inputenc}
\usepackage{amsmath, amssymb, amsthm, mathtools, tikz}
\usetikzlibrary{calc, decorations.markings, patterns.meta}
\usepackage{upgreek}
\usepackage{hyperref}
\numberwithin{equation}{section}
\usepackage{enumitem}

\makeatletter

\newcommand{\authorinfo}[3]{%
  \g@addto@macro\@authorinfo{%
    \par\medskip
    {\scshape #1}\par
    \begingroup
      \leftskip=2em
      \parindent=0pt
      #2\par
      \texttt{#3}\par
    \endgroup
  }%
}

\newcommand{\@authorinfo}{}

\newcommand{\printaddresses}{%
  \par\vspace{2\baselineskip}
  {\footnotesize\@authorinfo}
}

\makeatother

\usepackage[maxnames=10, maxalphanames=5, minalphanames=4, style=alphabetic]{biblatex}
\DeclareFieldFormat[article]{title}{\mkbibemph{#1\isdot}}
\DeclareFieldFormat{journaltitle}{#1\isdot}
\renewbibmacro{in:}{%
  \ifentrytype{article}
    {}
    {\printtext{\bibstring{in}\intitlepunct}}}

\newcommand{\ri}{{\mathrm{i}}}
\newcommand{\rd}{{\mathrm{d}}}
\newcommand{\SL}{\mathrm{SL}_2(\mathbb{C})}
\newcommand{\RSmp}{\mathbb{CP}^1\backslash\{t_i\}_{i=1}^n}

\newtheorem{theorem}{Theorem}[section]
\newtheorem{prop}[theorem]{Proposition}
\newtheorem{lemma}[theorem]{Lemma}
\newtheorem{corr}[theorem]{Corollary}
\newtheorem{definition}[theorem]{Definition}
\newtheorem{remark}[theorem]{Remark}
\newtheorem{example}[theorem]{Example}

\title{$(1,k)$ CFT and RH problem with the $c=-2$ case}

\date{}

\author{Mikhail Bershtein \and Andrei Grigorev \and Anton Shchechkin}

\authorinfo
    {Mikhail Bershtein}
    {Scuola Internazionale Superiore di Studi Avanzati (SISSA), Trieste, Italy
    
    Istituto Nazionale di Fisica Nucleare (INFN), Section of Trieste, Italy

    Institute for Geometry and Physics (IGAP), Trieste, Italy}
    {mbersht@sissa.it} 

\authorinfo
  {Andrei Grigorev}
  {Skolkovo Institute of Science and Technology, 121205, Moscow, Russia

   National Research University Higher School of Economics, Faculty of Mathematics, Moscow, Russia}
  {\nolinkurl{andrei.al.grigorev@gmail.com}}

\authorinfo
    {Anton Shchechkin}
    {Scuola Internazionale Superiore di Studi Avanzati (SISSA), Trieste, Italy
    
    Istituto Nazionale di Fisica Nucleare (INFN), Section of Trieste, Italy
    
    Institute for Geometry and Physics (IGAP), Trieste, Italy

     National Research University Higher School of Economics, Faculty of Mathematics, Moscow, Russia}
    {shch145@gmail.com}

\begin{document}

\maketitle

\begin{abstract}
    Following approach of Iorgov--Lisovyy--Teschner,  we construct solutions of the (modified) Riemann--Hilbert problem using conformal blocks of $(1,k)$ Virasoro models. 
    For $k>1$ case, the solution of this Riemann--Hilbert problem is not unique due to more singular behavior at punctures. 
    On the CFT side the dimension of the space of conformal blocks also increases.

    We specifically study the $k=2$ case, which corresponds to the central charge $c=-2$ and symplectic fermions. 
    We explicitly construct a corresponding solution of the modified Riemann--Hilbert problem in the case of 3 punctures and prove its uniqueness under suitable initial data conditions. 
    We also obtain new bilinear relations for $c=-2$ tau functions.
\end{abstract}

\begin{quote}
\tableofcontents
\end{quote}

\section{Introduction}
\label{sec:intro}
\paragraph{Context.} The subject of the paper is the Isomonodromy/CFT correspondence -- relation between local systems and conformal field theory (CFT). 
The seminal statement in this correspondence is the Gamayun--Iorgov--Lisovyy \cite{GIL12} conjecture (proven in \cite{ILT14}, \cite{Bershtein:2015bilinear}, \cite{Gavrylenko:2018fredholm}) of the CFT formula for the tau function of the Painlev\'e VI equation. 
This tau function is an isomonodromic tau function for the \(\mathrm{SL}_2\) local system on the sphere with 4 punctures. 
The corresponding CFT formula is a series of Virasoro (that is \(W\) algebra for \(\mathfrak{sl}_2\)) 4-point conformal blocks on the sphere. 
This relation has many extensions and connections in mathematics and physics, see e.g. references above and also \cite{Gavrylenko:2015isomonodromic}, \cite{Jimbo:2017qPainleve},  \cite{Bonelli:2017fermi}, \cite{Bonelli:2017painleveGauge}, \cite{Bershtein:2018cluster}, \cite{Bonelli:2020torus}, \cite{Nekrasov:2024blowups}, \cite{Jeong:2020riemann}. 

A remarkable approach to the Isomonodromy/CFT correspondence was proposed in \cite{ILT14}. 
The idea is to construct a solution of the Riemann-Hilbert (RH) problem with a given monodromy using Verlinde loop operators, i.e. insertions of degenerate vertex operators (\(\phi_{1,2}\) in CFT notation). 
Winding of the degenerate operators around generic Virasoro primary operators is expressed through fusion and braiding matrices (in a Moore-Seiberg formalism \cite{MS88}) and shifts of momenta proportional to \(\mathsf{b}\). 
Here we use the (Wick-rotated) central charge parametrization \(c=1-6(\mathsf{b}-\mathsf{b}^{-1})^2\). 
It appears that for \(\mathsf{b}^2=1\) (i.e. \(c=1\)) the shift operators commute with fusion and braiding of the degenerate field. 
This allows one to construct a solution of the RH problem. 

The commutativity mentioned above also holds for \(\mathsf{b}^2\in \mathbb{Z}\). 
Hence one can construct a solution of the (modified) RH problem using Virasoro conformal blocks with degenerate fields for such central charges. 
From now on we set
\begin{equation}\label{eq:intro c(k)}
    \mathsf{b}^2=k,\;\;   k\in \mathbb{Z}_{> 0},\qquad  c=1-6\frac{(k-1)^2}k.
\end{equation}
This \(c\) is the central charge of the \((1,k)\) Virasoro minimal model.

The series of Virasoro conformal blocks appearing in the tau functions and solutions of the RH problem can be viewed as conformal blocks of a CFT with an extended symmetry. 
The relevant extensions for central charges~\eqref{eq:intro c(k)} are triplet algebras \cite{FGST06}, the \(\mathfrak{sl}_2\) case of Feigin--Tipunin algebras \(\mathcal{FT}_k(\mathfrak{sl}_2)\)~\cite{FT10}. 
These algebras have a Lie group \(\mathrm{SL}_2\) of symmetries. 
This allows one to define conformal blocks of \(\mathcal{FT}_k(\mathfrak{sl}_2)\) depending on a local system. 
One can view solutions of the (modified) RH problem and corresponding tau functions (for any \(k\in \mathbb{Z}_{>0}\)) as examples of such conformal blocks. 
This remarkable proposal is due to J.~Teschner, see more in the lecture \cite{Teschner22PIRSA}\footnote{When the paper was almost ready, we learned about related project~\cite{FLT26}, which has some overlap with our results.}.

The \(\mathcal{FT}_k(\mathfrak{g})\) CFTs have important connections with quantum groups \(U_q(\mathfrak{g})\) and quantum cluster algebras at roots of unity and certain 3d TQFTs. 
Such relations were probably the main motivation for our work. 
We refer to \cite{CDGG21} and \cite{Teschner22PIRSA} for the original detailed exposition.

\paragraph{Results of the paper.}
In this paper we restrict ourselves to the \(\mathrm{SL}_2\) local systems on the Isomonodromy side and Virasoro symmetry on the CFT side. 
In this case the primary vertex operator \(V^{\theta_2}_{\theta_3,\theta_1}\colon \mathbb{M}_{\theta_1, c} \rightarrow \mathbb{M}_{\theta_3, c}\) is unique up to normalization. 
We define the vertex operators between sums of the Virasoro modules over the lattice. 
\begin{equation}
    \bar{V}^{\theta_2}_{[\theta_3]_k, [\theta_1]_k}(t)\colon \bigoplus_{n_1\in \mathbb{Z}}\mathbb{L}_{\theta_1{+}n_1k} \to \bigoplus_{n_3\in \mathbb{Z}}\mathbb{L}_{\theta_3{+}n_3k},
    \qquad \bar{V}^{\theta_2}_{[\theta_3]_k, [\theta_1]_k}(t)\Big|_{\mathbb{L}_{\theta_1{+}n_1k}\to \mathbb{L}_{\theta_3{+}n_3k}}\coloneqq V^{\theta_2}_{\theta_3{+}n_3k,\theta_1{+}n_1k}.
\end{equation}
These sums should be viewed as twisted modules over the triplet algebra \(\mathcal{FT}_k(\mathfrak{sl}_2)\).
We show that in the appropriate normalization the operators \(\bar{V}\) are closed under fusion with the degenerate \(\phi_{1,2}\) fields. 
This is essentially a manifestation of the commutativity between fusion and the lattice shifts for \(\mathsf{b}^2\in \mathbb{Z}\) mentioned above. 
We call the operators \(\bar{V}\) \emph{periodic vertex operators}.
One can view them as vertex operators for \(\mathcal{FT}_k(\mathfrak{sl}_2)\) twisted by a local system.
For \(k=1\), the periodic vertex operators appeared in \cite{GM16} in fermionic language.

The matrix elements of a solution of the modified RH problem are conformal blocks of periodic vertex operators at punctures \(t_i,\, {\scriptstyle i=1,\ldots, n}\) with two degenerate fields \(\psi_{\pm}\) at points \(z\) and \(z_0\) (normalization point):
\begin{gather}\label{eq:intro:Phi}
    \Big(\Phi\big((\theta_i)_{i=1}^n;(\sigma_i,s_i)_{i=2}^{n-2}\big|(t_i)_{i=1}^n\big)\Big)_{\varsigma_0\varsigma}(z,z_0)\sim 
    \left\langle \psi_{-\varsigma_0}(z_0) \psi_{\varsigma}(z)\left(\prod\nolimits_{j=n}^1 s_j^{-{a_0}/{\sqrt{k}}}\,\bar{V}_{[\sigma_j]_k,[\sigma_{j-1}]_k}^{\theta_j}(t_j)\right)\right\rangle.
\end{gather}
Here and below in the Introduction \(\sim\) represents proportionality up to factors that are not important for the main idea. 
These conformal blocks depend on intermediate momenta that we denote by \(\sigma_j,\, {\scriptstyle j=2,\ldots, n-2}\) and we also introduce dual coordinates \(s_j\).
The fusion of periodic vertex operators with degenerate fields mentioned above implies that \(\Phi\) has constant monodromy with respect to loops of \(z\). 
Furthermore, the \(\sigma_j,s_j\) are the  Fenchel--Nielsen coordinates of the corresponding local system. 
See Theorem~\ref{thm:cm2_linear_solve} for more details.

We emphasize two differences between \(k>1\) and \(k=1\) cases. 
First, the monodromy is invariant under shifts of \(\sigma\)'s by the lattice \(\mathbb{Z}\), while in the definition of the periodic vertex operators only the lattice \(k\mathbb{Z}\) was used. 
In CFT terms this is the difference between shifts by \(\mathsf{b}^{-1}\) and by \(\mathsf{b}\).
Hence, for a given monodromy we construct a \(k^{n-3}\)-parameter family of solutions of the RH problem. 
This corresponds to the growth of the dimension of the space of conformal blocks for \(\mathcal{FT}_k(\mathfrak{sl}_2)\) (for odd \(k\)).
See Corollary~\ref{cor:generic solution}.

Second, the corresponding RH problem has more solutions for \(k>1\).
This is due to a different, ``more singular'' behavior of \(\Phi(z)\) at punctures \(t_i,\, {\scriptstyle i=1,\ldots, n}\).
The new parameters of the solution can be seen in terms of the connection matrix \(A(z)=\Phi'(z)\Phi^{-1}(z)\).  
This matrix has poles at \(t_i,\, {\scriptstyle i=1,\ldots, n}\) but also additional poles \(w_j, \, {\scriptstyle j=1,\ldots, n(k-1)}\) without monodromy of \(\Phi(z)\) around them. 
These new poles and residues at them can be viewed as additional parameters of the \emph{modified RH problem}. 
These apparent (i.e. monodromy-free) singularities resemble the ones in the description of Bethe ansatz for the Gaudin model, see e.g. \cite{Frenkel:2003opers}.

Tau functions are defined as conformal blocks of the periodic vertex operators 
\begin{equation}
    \uptau\big((\theta_i)_{i=1}^n;(\sigma_i,s_i)_{i=2}^{n-2}\big|(t_i)_{i=1}^n\big)
    \sim 
    \left\langle \left(\prod\nolimits_{j=n}^1 s_j^{-{a_0}/{\sqrt{k}}}\,\bar{V}_{[\sigma_j]_k,[\sigma_{j-1}]_k}^{\theta_j}(t_j)\right)\right\rangle
\end{equation}
This is a natural generalization of the \(k=1\) formula \cite{GIL12}, \cite{ILT14} and the \(k=2\) definition~\cite{Bershtein:2019painleve}. 
Moreover, the space of tau functions for a given local system is also of dimension \(k^{n-3}\); a generic representative has the form 
\begin{equation}
    \uptau\big(\Lambda;(\theta_i)_{i=1}^n;(\sigma_i,s_i)_{i=2}^{n-2}\big|(t_i)_{i=1}^n\big)= 
    \sum_{0\le l_2,\dots, l_{n-2}\le k-1} \Lambda_{l_2,\dots,l_{n-2}}
    \uptau\big((\theta_i)_{i=1}^n;(\sigma_i+l_i,s_i)_{i=2}^{n-2}\big|(t_i)_{i=1}^n\big)
\end{equation}

The equations on tau functions usually have a bilinear form. 
In studying them we restrict ourselves to the case \(k=2\) (which corresponds to central charge \(c=-2\)), while the methods in principle work for any \(k>1\). 

The idea is to consider \(\det \Phi\). 
Since the monodromy of \(\Phi\) belongs to \(\mathrm{SL}_2\), the determinant is a rational function. 
The poles of \(\det \Phi\) are at the punctures \(t_i,\, {\scriptstyle i=1,\ldots, n}\) and zeroes are at the \(w_j, \, {\scriptstyle j=1,\ldots, n(k-1)}\) (apparent singularities of the connection matrix). It appears that this rational function has a CFT construction in the tensor square \(\mathcal{FT}_k(\mathfrak{sl}_2)^{\otimes 2}\):
\begin{gather}
    \label{eq:intro:det Phi}
    \det \Phi(z,z_0) \sim 
    \left\langle I(z_0) I(z) \left( \prod\nolimits_{j=n}^1 \left( s_j^{-{a_0}/{\sqrt{k}}}
    \bar{V}_{[\sigma_j]_k,[\sigma_{j-1}]_k}^{\theta_j}(t_j) \right)^{\otimes 2} \right)\right\rangle 
    \\
    \text{where}\qquad I(z)=\psi_+(z)\otimes \psi_-(z) - \psi_-(z)\otimes \psi_+(z).
\end{gather}
The OPE with the field \(I(z)\) determines singular behavior of \(\det \Phi\) near \(t_j\). 
This determines the rational function \eqref{eq:intro:det Phi}.
Asymptotic behavior of this function in the limit \(z,z_0\to \infty\) gives bilinear equations on tau functions. 
We present them (for \(k=2\)) in Theorem~\ref{thm:tau_identities}. 
We also present (for \(k=2\)) a formula for the logarithmic derivative of the tau function in terms of the connection matrix, see Prop.~\ref{prop:tau_isom_def}. 
This can be compared with the definition of isomonodromic tau function through its logarithmic derivative.

\paragraph{Discussion.} 
This paper could be viewed as a step towards understanding the \((1,k)\) version of the Isomonodromy/CFT correspondence. 
A lot of basic questions remain unanswered.

The modified RH problem appears here from the CFT side. It would be interesting to motivate it also from the Isomonodromy side. A related question is an isomonodromic definition of the space of \((1,k)\) tau functions for general \(k\). 

Another important question is the relation to the BPS/CFT interpretations of the Isomonodromy/CFT correspondence \cite{Nekrasov:2024blowups}, \cite{Jeong:2020riemann} (based on previous works \cite{Nekrasov:2016qqCharacters}, \cite{Nekrasov:2017bpsCftV}, \cite{Nekrasov:2019bpsCftIV}). 
Perhaps, the apparent singularities \(w_j\) could be interpreted as Hecke modifications, which, in the BPS/CFT framework, are realized through the fusion of surface defects in \cite{Jeong:2026parallel}. 
Notice also that the method of \cite{Nekrasov:2024blowups}, \cite{Jeong:2020riemann} was based on the blowup relation between classical (Nekrasov--Shatashvili limit) functions and quantum (e.g. \(c=1\)) functions (see also \cite{Bershtein:2025highestWeight} for the CFT formulation). 
To understand \((1,k)\)-tau functions in this way it is natural to consider instanton counting on the Hirzebruch surface \cite{Bruzzo:2011poincare} in the presence of the surface defect.

There is different approach to the Isomonodromy/CFT correspondence through blowup relations, initiated in \cite{Bershtein:2015bilinear}. 
It appears that it can also be applied to \(k=2\) (i.e. \(c=-2\)); this was studied in \cite{Bershtein:2019painleve}. 
It appears that relations from \cite{Bershtein:2019painleve} are different from those in Theorem~\ref{thm:tau_identities}. 
It would be interesting to understand the relationship between these two sets of relations.

There are three types of bilinear equations in Theorem~\ref{thm:tau_identities}: algebraic, first-order and second-order. 
Perhaps the first-order equation can be viewed as a version of the KZ equation. 
The algebraic and second-order ones can be compared with algebraic and Toda equations, respectively, in Painlev\'e theory.
We believe that there are similar identities for tau functions when $k>2$. 
But we do not have a conjecture on their form and their possible applications. 

Finally, note that recently the Isomonodromy/CFT correspondence was deformed to arbitrary central charges on the CFT side and to \emph{quantum} Painlev\'e equations on the Isomonodromy side, see \cite{Bershtein:2018cluster}, \cite{Gavrylenko:2020irregular}, \cite{Bonelli:2025refined}. 
It would be interesting to understand the connection between this and the results of the paper. 

\paragraph{Plan of the paper.} 
In Section~\ref{sec:local_RH problem} we recall basic notions on \(\SL\) local systems on \(\mathbb{CP}^1\) with \(n\) punctures,  Fenchel--Nielsen coordinates, and convenient RH problems. Section~\ref{sec:CFT} serves the same goal for Virasoro representations, vertex operators and conformal blocks.

After these preparations we construct solutions of the modified RH problem from CFT in Section~\ref{sec:sol}. 
An important technical point here is the normalization of vertex operators. 
We use the Wick-rotated Ponsot--Teschner normalization \cite{PT99} with an additional sign factor that makes fusion matrices periodic (not just quasi-periodic). 
We also redefine the normalization of degenerate fields basically using a free field realization. 
Thereby, the matrix-valued function \(\Phi\) defined by formula~\eqref{eq:intro:Phi} has a constant \(\mathrm{SL}_2\) monodromy. 

The proof of Theorem~\ref{thm:cm2_linear_solve} is based on this fact. 
The \(k\)-modified RH problem (see Definition~\ref{def:RH_modif}) differs from the standard one by the asymptotic behavior at punctures \(t_i\). 
Due to this \(\det\Phi\) has poles at punctures \(t_i\) of order \(k-1\) and has additional zeroes (apparent singularities) at \(w_j\) mentioned above.

This is illustrated for \(k=2\) and \(n=3\). 
We present an explicit solution of the modified RH problem in terms of hypergeometric functions, see Prop.~\ref{prop:explicite_cb}. 
We also show that imposing additional normalization conditions, one can prove uniqueness of the solution of the \(2\)-modified RH problem for \(n=3\), see Prop.~\ref{prop:gRH problem_unique}. 
Together with Wick rules for symplectic fermions proven in Section~\ref{sec:fermions}, this in principle determines the periodic vertex operators in terms of local system data, similarly to the \(k=1\) case in \cite{GM16}.

In Section~\ref{sec:fermions} we continue the study of the \(k=2\) case. 
In this case the triplet algebra can be embedded into the doublet algebra, which is the symplectic fermion algebra. 
Also OPEs with degenerate fields in this case have relatively mild singularities (usually the order of poles is linear in \(k\)). 
This allows us to prove the main results of the section: bilinear relations (Theorem~\ref{thm:tau_identities}), the logarithmic derivative of the tau function (Prop.~\ref{prop:tau_isom_def}), and Wick rules for symplectic fermions (Prop.~\ref{prop:Wick SF}). 
Some computations are moved to the appendix.

\paragraph{Acknowledgements.} 
We thank I.~Chaban, T.~Creutzig, T.~Dimofte, S.~Jeong, P.~Gavrylenko, O.~Lisovyy, A.~Litvinov, A.~Marshakov, V.~Poberezhny, I.~Sechin, J.~Teschner, M.~Yamazaki for stimulating discussions.

M.B. is very grateful to Kavli IPMU and the University of Edinburgh for hospitality and excellent working conditions. 
A.G. is grateful to SISSA and BIMSA for hospitality.
The work of A.G. has been supported by the Russian Science Foundation under grant 26-11-00342.

\section{Local systems and Riemann-Hilbert problem}
\label{sec:local_RH problem}

In this section we review the notion of the $\SL$ local systems on the punctured sphere, the corresponding Riemann-Hilbert problem, and the isomonodromic deformations. For more details see \cite{JMU81}.

\subsection{$\SL$ local systems}
\label{ssec:local}

\begin{figure}[h]
    \centering
    \begin{tikzpicture}[decoration={
    markings,
    mark=at position 0.25 with {\arrow{>}},
    mark=at position 0.5 with {\arrow{>}},
    mark=at position 0.75 with {\arrow{>}}}]
     \coordinate (t1)  at (0,0);
     \coordinate (t2)  at (2,0);
     \coordinate (t3)  at (4,0);
     \coordinate (tm1) at (9,0);
     \coordinate (tm)  at (11,0);
     \coordinate (z0)  at (3,4); 

    \filldraw (6,0) circle (0.5pt) (6.5,0) circle (0.5pt) (7,0) circle (0.5pt);
    
    \draw[postaction=decorate,fill=gray!4] (z0) to[out=-130,in=45] ($(t1)-(1,-0.3)$)
              to[out=-135,in=180] ($(t1)+(0,-1)$)
              to[out=0,in=-90] ($(t1)+(1,1)$)
              to[out=90,in=-120] (z0);
    
    \draw[postaction=decorate,fill=gray!4] (z0) to[out=-120,in=70] ($(t2)-(0.8,-0.3)$)
              to[out=-110,in=180] ($(t2)+(0,-1)$)
              to[out=0,in=-60] ($(t2)+(0.8,1)$)
              to[out=120,in=-115] (z0);
    
    \draw[postaction=decorate,fill=gray!4] (z0) to[out=-115,in=90] ($(t3)-(0.8,-0.3)$)
              to[out=-90,in=180] ($(t3)+(0,-1)$)
              to[out=0,in=-60] ($(t3)+(1.2,0.6)$)
              to[out=120,in=-110] (z0);

    \draw[postaction=decorate,fill=gray!4] (z0) to[out=-70,in=120] ($(tm1)-(1,0)$)
              to[out=-60,in=180] ($(tm1)+(0,-1)$)
              to[out=0,in=-60] ($(tm1)+(0.8,0)$)
              to[out=120,in=-65] (z0);

    \draw[postaction=decorate,fill=gray!4] (z0) to[out=-60,in=90] ($(tm)-(0.8,0)$)
              to[out=-90,in=180] ($(tm)+(0,-1)$)
              to[out=0,in=-80] ($(tm)+(1,0)$)
              to[out=100,in=-55] (z0);

        \filldraw (t1) circle (2pt) node[anchor=north] {$t_1$};
    \filldraw (t2) circle (2pt) node[anchor=north] {$t_2$};
    \filldraw (t3) circle (2pt) node[anchor=north] {$t_3$};
    \filldraw (tm1) circle (2pt) node[anchor=north] {$t_{n{-}1}$};
    \filldraw (tm) circle (2pt) node[anchor=north] {$t_n$};  
    \filldraw (z0) circle (2pt) node[anchor=south] {$z_0$};

       \node at ($(t1)+(0.2,0.6)$) {$U_1$};
       \node at ($(t2)+(0.1,0.6)$) {$U_2$};
       \node at ($(t3)+(0.1,0.6)$) {$U_3$};
       \node at ($(tm1)+(-0.6,0.6)$) {$U_{n-1}$};
       \node at ($(tm)+(0.1,0.6)$) {$U_n$};

    \node at ($(t1)-(1,1)$) {$\gamma_1$};
    \node at ($(t2)-(0.2,1.3)$) {$\gamma_2$};
    \node at ($(t3)-(-0.7,1.1)$) {$\gamma_3$};
    \node at ($(tm1)+(0.8,-1.1)$) {$\gamma_{n-1}$};
    \node at ($(tm)+(1,-1)$) {$\gamma_n$};
\end{tikzpicture}
    
    \caption{Fundamental group generators $\gamma_i$}
    \label{fig:fundamental_group}
\end{figure}

\paragraph{Definition.} Consider the Riemann sphere with $n\geq3$ punctures $\RSmp$. Its fundamental group with respect to a base point $z_0$ has the standard description implying a fixed ordering of punctures $t_i$:
\begin{equation}\label{fundamental_group}
    \pi_1\big(\RSmp,z_0\big) = \Big\langle\big([\gamma_i]\big)_{i= 1}^n\Big|[\gamma_1][\gamma_2] \ldots [\gamma_n] = 1\Big\rangle,
\end{equation}
where $\gamma_i$'s are loops with the winding number $\delta_{ij}$ around puncture $t_j$, see Fig.~\ref{fig:fundamental_group}. Linear representations of the fundamental group of a path-connected topological space up to their equivalence are called \textit{local systems}. In particular, in this paper we consider \textit{anti-}representations with values in $\SL$.
\begin{definition}
        Consider anti-representations $\rho:\pi_1\big(\RSmp,z_0\big)\to\SL$
        with the generator images $M_i\coloneqq\rho([\gamma_i])$.  
        The space of local systems $\mathcal{L}_n$ on $\RSmp$ with values in $\SL$ is given by
    \begin{equation}\label{def_local}
        \mathcal{L}_n=\Big\{\big(M_i\big)_{i=1}^n\Big|M_i\in \SL:M_n M_{n-1}\ldots M_1 = \mathbf1_{2\times 2} \Big\}\Big/\sim\,,
    \end{equation}
    where $\sim$ stands for the overall $\SL$-conjugation.
\end{definition}

\paragraph{Linear system.}  A natural source of the fundamental group representations are the monodromy groups of the Fuchsian flat connections.
Namely, consider holomorphic $\mathfrak{sl}_2$-connections on the trivial bundle over $\RSmp$ of the form
\begin{equation}\label{A}
    A(z) = \sum_{i=1}^n\frac{A_i}{z-t_i}\,\,\mathrm{d} z, \qquad  A_i\in  \mathfrak{sl}_2(\mathbb{C})
\end{equation}
and the corresponding linear system together with the initial condition
\begin{equation}\label{linear_system}
    \Phi^{\prime}(z) = A(z)\Phi(z),\qquad \Phi(z_0) = \mathbf1_{2\times2}.
\end{equation}
Then, for any $[\gamma] \in \pi_1(\RSmp,z_0)$ the monodromy matrix is defined as the analytic continuation $\Phi_{\gamma}(z_0)$ along any representative of $[\gamma]$. Thus, each choice of residues $(A_i)_{i=1}^n$ gives an anti-representation $\pi_1(\RSmp,z_0) \to \SL$, called the monodromy group. Moving to the local systems, one considers $\SL$-conjugacy classes of the monodromy matrices; this removes the dependence on any chosen linear problem initial data, in particular, on the base point $z_0$.

\paragraph{Local behavior.}The monodromy matrix along each elementary loop $\gamma_i$ can be obtained from the local solution of the linear system \eqref{linear_system} in a small neighborhood of the corresponding puncture $t_i$. Such solution is described by the standard statement of the analytic theory of differential equations:
\begin{prop}\label{prop:Phi_puncture}
    For the residue $A_i$ with eigenvalues $\pm\theta_i, 2\theta_i\notin\mathbb{Z}$ there exists a punctured neighborhood $U_{t_i}^\times$ of $t_i$, where the solution of the initial value problem \eqref{linear_system} can be expressed in the form
    \begin{equation}\label{Phi_puncture}
        \Phi(z) = H_i(z)   \begin{pmatrix}(z-t_i)^{\theta_i} & 0\\ 0 & (z-t_i)^{-\theta_i}\end{pmatrix}C_i.
    \end{equation}
    Here $C_i\in\SL$ is a constant matrix, while $H_i(z)$ is a $\mathrm{GL}_2(\mathbb{C})$-valued holomorphic function in the non-punctured $U_{t_i}$ such that $A_i=H_i(0)\mathrm{diag}(\theta_i,-\theta_i)H_i^{-1}(0)$.
\end{prop}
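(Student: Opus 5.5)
This is the classical Frobenius/Levelt theory for a Fuchsian singular point; I would argue by a holomorphic gauge transformation to the normal form.

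\emph{Step 1: diagonalize the residue.} Translate so that the puncture is at the origin, writing $x=z-t_i$. Near $x=0$ the system \eqref{linear_system} reads $\Phi'=\bigl(A_i/x+B(x)\bigr)\Phi$ with $B$ holomorphic, namely $B(x)=\sum_{j\neq i}A_j/(x+t_i-t_j)$. Since $2\theta_i\notin\mathbb{Z}$, in particular $\theta_i\neq0$, so $A_i$ has distinct eigenvalues $\pm\theta_i$. Choose $G_0\in\mathrm{GL}_2(\mathbb{C})$ with $A_i=G_0\,\mathrm{diag}(\theta_i,-\theta_i)\,G_0^{-1}$.

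\emph{Step 2: gauge away the holomorphic part formally.} Look for a series $H(x)=G_0\bigl(\mathbf 1+\sum_{m\ge1}h_m x^m\bigr)$ such that $\Psi=H(x)\,x^{\Theta}$, with $\Theta=\mathrm{diag}(\theta_i,-\theta_i)$, solves the system. The condition is $H'=AH-H\Theta/x$. Comparing coefficients of $x^{m-1}$ gives the recursion
\[
m h_m-[\Theta,h_m]=\textstyle\sum_{l<m}\tilde B_{m-1-l}h_l ,
\]
where $\tilde B_j$ denote the Taylor coefficients of $G_0^{-1}B G_0$. The operator $h\mapsto mh-[\Theta,h]$ acts on the matrix units with eigenvalues $m$ and $m\mp2\theta_i$. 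These are all nonzero precisely because $2\theta_i\notin\mathbb{Z}$, so every $h_m$ is uniquely determined.

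\emph{Step 3: prove convergence (the main analytic point).} Let $\rho$ be smaller than the distance to the other punctures and bound $\|\tilde B_j\|\le K\rho^{-j}$. For large $m$ the inverse of the operator above has norm at most $2/m$. A majorant argument then gives $\|h_m\|\le C R^{m}$ for some $R$: compare with the scalar recursion $m\,c_m=\kappa\sum_{l<m} K\rho^{-(m-1-l)}c_l$, whose generating function solves a scalar linear ODE with a regular singular point and is therefore analytic. Hence $H$ converges on a disc $U_{t_i}$, and $\det H(0)=\det G_0\neq0$ ensures $H$ takes values in $\mathrm{GL}_2$ after shrinking the disc. (Alternatively, one can invoke the standard theorem that a formal solution of a Fuchsian system converges.)

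\emph{Step 4: identify $\Phi$.} Both $\Phi$ and $\Psi=H\,x^{\Theta}$ are fundamental solutions on a slit punctured disc. Therefore $C_i:=\Psi^{-1}\Phi$ is constant, which gives $\Phi=H_i(z)\,(z-t_i)^{\Theta}C_i$ with $H_i=H$ and $A_i=H_i(0)\Theta H_i(0)^{-1}$, as required.

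To arrange $C_i\in\mathrm{SL}_2(\mathbb{C})$, first use Liouville's formula: $\det\Phi$ satisfies $(\det\Phi)'=\mathrm{tr}A\,\det\Phi=0$, so $\det\Phi\equiv1$. Then $\det C_i=1/\det H$, which is constant, so $\det H\equiv c\neq0$. Rescaling $H\mapsto H\,c^{-1/2}$ and $C_i\mapsto c^{1/2}C_i$ preserves the factorization, since scalars commute with $x^{\Theta}$. After this rescaling, $\det C_i=1$.
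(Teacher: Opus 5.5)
Your argument is correct and follows essentially the paper's route: the paper gives no proof of its own and simply invokes this as the standard local theory of a Fuchsian singular point with non-resonant residue (as in Jimbo--Miwa--Ueno), and your Frobenius-type gauge construction is exactly that standard proof written out in full. Each step checks out. The recursion is uniquely solvable because $m\neq0$ and $m\mp2\theta_i\neq0$ for all $m\ge1$. The majorant $(1-x/\rho)^{-\kappa K\rho}$ gives convergence on $|x|<\rho$. Liouville's formula ($\det\Phi\equiv1$, hence $\det H$ is constant) followed by the scalar rescaling correctly yields $C_i\in\SL$.
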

\noindent This immediately yields the monodromy matrices $M_i$ around the generating loops $\gamma_i$:
\begin{equation}\label{Monodromy_puncture}
M_i=\Phi_{\gamma_i}(z_0)=C_i^{-1}\begin{pmatrix}e^{2\pi\ri\theta_i} & 0\\ 0 & e^{-2\pi\ri\theta_i}\end{pmatrix}C_i.
\end{equation}
Conversely, generic $M_i$'s from \eqref{def_local} can be represented in this form. Note that $C_i$'s are determined up to the left multiplication by matrices from the diagonal matrix subgroup $T=\big\{\mathrm{diag}(\lambda,\lambda^{-1})|\lambda\in \mathbb{C}^\times\big\}$.

\paragraph{$\boldsymbol\theta$-leaf of $\mathcal{L}_n$.}The space of local systems $\mathcal{L}_n$ can be foliated into the level sets of the matrix invariant $\mathrm{Tr}\, M_i=2\cos2\pi\theta_i$. Namely, denoting by $\boldsymbol{M}^{\mathrm{SL}_2}=(M_1,M_2,\ldots, M_n)^{\SL}$ a local system represented by tuple $(M_1,M_2,\ldots, M_n)$, we use
\begin{definition}
        The $\boldsymbol{\theta}$-leaf of the space of the local systems $\mathcal{L}_n$ for tuple $\boldsymbol{\theta}=(\theta_i)_{i=1}^n$ is the level set 
\begin{equation}
\mathcal{L}_n^{\boldsymbol{\theta}}=\Big\{\boldsymbol{M}^{\mathrm{SL}_2}\in\mathcal{L}_n\Big|\mathrm{Tr}\,M_i=2\cos2\pi\theta_i\Big\}.
\end{equation}
\end{definition}
Note that integer shifts and sign changes of $\theta_i$'s do not change the level set.
However, we formally distinguish the same level sets $\mathcal{L}_n^{\boldsymbol{\theta}}$ with different linear system parameters $\boldsymbol{\theta}$. 

For generic tuple $\boldsymbol{\theta}$ it is easy to see that $\dim \mathcal{L}_n^{\boldsymbol{\theta}}=2(n-3)$. Furthermore, $\mathcal{L}_3^{\boldsymbol{\theta}}$ consists of one point. 

\subsection{Fenchel--Nielsen parametrization of $\SL$ local systems}
\label{ssec:FN}

\begin{figure}[h]
    \centering
    \begin{tikzpicture}[decoration={
    markings,
    mark=at position 0.5 with {\arrow{>}}}]
    \draw[dashed] (0,0) arc (-90:90:0.2cm and 0.5cm); 
    \draw[postaction=decorate] (0,0) arc (-90:-270:0.2cm and 0.5cm);
    
    \draw[postaction=decorate] (-3,0) arc (-90:90:0.2cm and 0.5cm);
    \draw (-3,1) arc (90:270:0.2cm and 0.5cm);
    
    \draw(-1,2.2) arc (0:180:0.5cm and 0.2cm);
    \draw[postaction=decorate](-2,2.2) arc (180:360:0.5cm and 0.2cm);

    \draw (-3,0) .. controls (-2.5,0.2) and (-0.5,0.2) .. (0,0);
    \draw (-3,1) .. controls (-2,1) .. (-2,2.2);
    \draw (0,1) .. controls (-1,1) .. (-1,2.2);

    \filldraw (-3,0.5) circle (2pt) node[left=4pt] {$t_1$};
    \filldraw (-1.5,2.2) circle (2pt) node[above=4pt] {$t_2$};
    \node at (-0.5,0.5) {$c_2$};
    \node at (-2.5,0.5) {$c_{t_1}$};
    \node at (-1.5,1.7) {$c_{t_2}$};

    \begin{scope}[xshift=3cm]
    \draw[dashed] (0,0) arc (-90:90:0.2cm and 0.5cm); 
    \draw[postaction=decorate] (0,0) arc (-90:-270:0.2cm and 0.5cm);
    
    \draw(-1,2.2) arc (0:180:0.5cm and 0.2cm);
    \draw[postaction=decorate](-2,2.2) arc (180:360:0.5cm and 0.2cm);
    
    \draw (-3,0) .. controls (-2.5,0.2) and (-0.5,0.2) .. (0,0);
    \draw (-3,1) .. controls (-2,1) .. (-2,2.2);
    \draw (0,1) .. controls (-1,1) .. (-1,2.2);

    \filldraw (-1.5,2.2) circle (2pt) node[above=4pt] {$t_3$};
    \node at (-0.5,0.5) {$c_3$};
    \node at (-1.5,1.7) {$c_{t_3}$};
    
    \draw (0,1) .. controls (0.8,1) .. (1,1.1);
    \draw (0,0) .. controls (0.2,0.1) .. (1,0.1);
    \end{scope}

     \filldraw (4.5,0.5) circle (0.5pt) (5,0.5) circle (0.5pt) (5.5,0.5) circle (0.5pt);
    
    \begin{scope}[xshift=10cm]
    \draw (-3,1) .. controls (-3.8,1) .. (-4,1.1);
    \draw (-3,0) .. controls (-3.2,0.1) .. (-4,0.1);
    
    \draw[dashed] (-3,0) arc (-90:90:0.2cm and 0.5cm); 
    \draw[postaction=decorate] (-3,0) arc (-90:-270:0.2cm and 0.5cm);
    
    \draw(0,0) arc (-90:90:0.2cm and 0.5cm);
    \draw[postaction=decorate](0,1) arc (90:270:0.2cm and 0.5cm);
    
    \draw(-1,2.2) arc (0:180:0.5cm and 0.2cm);
    \draw[postaction=decorate](-2,2.2) arc (180:360:0.5cm and 0.2cm);
    
    \draw (-3,0) .. controls (-2.5,0.2) and (-0.5,0.2) .. (0,0);
    \draw (-3,1) .. controls (-2,1) .. (-2,2.2);
    \draw (0,1) .. controls (-1,1) .. (-1,2.2);

    \node at (-2.3,0.5) {$c_{n-2}$}; 
    \filldraw (-1.5,2.2) circle (2pt) node[above=4pt] {$t_{n-1}$};
    \filldraw (0,0.5) circle (2pt) node[right=4pt] {$t_n$};
    \node at (-1.5,1.7) {$c_{t_{n-1}}$};
    \node at (-0.5,0.5) {$c_{t_n}$};
    \end{scope}
    
    \end{tikzpicture}

    \caption{Pants decomposition of the Riemann sphere with $n$ holes}
    \label{fig:pants}
\end{figure}

\paragraph{Pants decomposition.}
Let us remove from $\mathbb{CP}^1$ not just punctures $\{t_i\}_{i=1}^n$ but small open disks (holes) containing them, bounded by (oriented) cycles $c_{t_i}$. Then we can consider a pants decomposition of the sphere with $n$ ordered holes: we cut it by (oriented) cycles $c_i,\,{\scriptstyle i=2,\ldots,n-2}$ into the spheres with three holes as shown in Fig.~\ref{fig:pants}. Each cycle $c_i$, if considered as a loop with a base point $z_0\in c_i$, represents product $[\gamma_1][\gamma_2]\ldots [\gamma_i]$; the corresponding monodromy matrix is $M_iM_{i-1}\ldots M_1\eqqcolon M_{[i:1]}$. Therefore, the pants decomposition induces a map $\mathcal{L}_n\to \mathcal{L}_3^{\times(n-2)}$ acting by
\begin{multline}
(M_1,\ldots,M_n)^{\mathrm{SL}_2}\mapsto\Big(\big(M_1,M_2,M_{[2:1]}^{-1}\big)^{\mathrm{SL}_2}, \big(M_{[2:1]},M_3,M_{[3:1]}^{-1}\big)^{\mathrm{SL}_2}, \big(M_{[3:1]},M_4,M_{[4:1]}^{-1}\big)^{\mathrm{SL}_2},\ldots,\\ (M_{[n-3:1]},M_{n-2},M^{-1}_{[n-2:1]})^{\mathrm{SL}_2},(M_{[n-2:1]},M_{n-1},M_n)^{\mathrm{SL}_2}\Big).
\end{multline}
Additionally foliating $\mathcal{L}_n^{\boldsymbol{\theta}}$ into the level sets of the partial product traces
\begin{equation}
\mathcal{L}_n^{\boldsymbol{\theta};\boldsymbol{\sigma}}=\Big\{\boldsymbol{M}^{\mathrm{SL}_2}\in\mathcal{L}_n^{\boldsymbol{\theta}}\Big|\mathrm{Tr}M_{[i:1]} = 2\cos 2\pi\sigma_i,\quad i=2,\ldots,n{-}2\Big\},
\end{equation}
we can restrict the above map to
\begin{equation}
\phi\,:\,\mathcal{L}_n^{\boldsymbol{\theta};\boldsymbol{\sigma}} \to \mathcal{L}_3^{(\theta_1,\theta_2,\sigma_2)}\times \mathcal{L}_3^{(\sigma_2,\theta_3,\sigma_3)}\times \dots \times \mathcal{L}_3^{(\sigma_{n-2},\theta_{n-1},\theta_n)}.
\end{equation}

\paragraph{Decomposition of $\mathcal{L}_n^{\boldsymbol{\theta};\boldsymbol{\sigma}}$.} For generic tuples $\boldsymbol{\theta}$ and $\boldsymbol{\sigma}$ the codomain of map $\phi$ is just a point, while $\dim\mathcal{L}_n^{\boldsymbol\theta;\boldsymbol\sigma} = n-3$. Let us fix a generic local system $\check{\boldsymbol{M}}^{\mathrm{SL}_2}\in\mathcal{L}_n^{\boldsymbol\theta;\boldsymbol\sigma}$ together with its representative~$\check{\boldsymbol{M}}=(\check{M}_1,\ldots,\check{M}_n)$. Then a  representative of an arbitrary element of $\mathcal{L}_n^{\boldsymbol\theta;\boldsymbol\sigma}$ has the form $(S_i^{-1}\check{M}_iS_i)_{i=1}^n\eqqcolon \boldsymbol{M}_{\boldsymbol{S}}$ for some matrices $S_i\in\SL, {\scriptstyle i=1,\ldots, n}$. Any tuple of such form is a representative of an element in $\mathcal{L}_n^{\boldsymbol\theta;\boldsymbol\sigma}$ iff $\phi(\boldsymbol{M}_{\boldsymbol{S}}^{\mathrm{SL}_2})= \phi(\check{\boldsymbol{M}}^{\mathrm{SL}_2})$.
This condition restricts possible tuples $(S_i)_{i=1}^n$, namely we have isomorphism
\begin{equation}\label{S_param}
\mathcal{L}_n^{\boldsymbol{\theta};\boldsymbol{\sigma}}\cong\Big\{\big(S_i^{-1}\check{M}_iS_i\big)_{i=1}^n\Big|S_1=S_2,\,S_n= S_{n-1};\; S_{i+1}S_i^{-1}\in \mathbf{C}_{\mathrm{SL}_2}\big(\check{M}_{[i:1]}\big),\, i=2,\ldots, n{-}2\Big\}\Big/\sim,
\end{equation}
where by $\mathbf{C}_{\mathrm{SL}_2}(\cdot)$ we denote the centralizer of a group element. These centralizers act on $\mathcal{L}_n^{\boldsymbol{\theta};\boldsymbol{\sigma}}$ effectively up to the scalar sign (center $\mathbf{Z}_{\mathrm{SL}_2}$), i.e. we have 
\begin{equation}\label{centr_param}
\mathcal{L}_n^{\boldsymbol{\theta};\boldsymbol{\sigma}}\cong \prod_{i=2}^{n-2} \left(\mathbf{C}_{\mathrm{SL}_2}\big(\check{M}_{[i:1]}\big)/\mathbf{Z}_{\mathrm{SL}_2}\right), \qquad  \mathbf{Z}_{\mathrm{SL}_2}=\{\pm\mathbf{1}_{2\times2}\},
\end{equation}
where each factor is one-dimensional.

\paragraph{Parametrization of centralizers.}
To parametrize such centralizers explicitly, let us successively diagonalize partial products of given monodromy matrices by $\SL$ matrices as follows
\begin{equation}\label{partial_product_param}
M_{[i:1]}=C_{1,0}^{-1}C_{2,1}^{-1}\ldots C_{i,i-1}^{-1}\begin{pmatrix}e^{2\pi\ri\sigma_i} & 0\\ 0 & e^{-2\pi\ri\sigma_i}\end{pmatrix}C_{i,i-1}\ldots C_{2,1} C_{1,0},\quad i=1,\ldots, n{-}1,
\end{equation}
Here for convenience of the future exposition we added cases $i=1,n{-}1$ and define
\begin{equation}
    \sigma_1\coloneqq\theta_1, \quad \sigma_{n-1}\coloneqq-\theta_n, \qquad C_{1,0}\coloneqq C_1, \quad C_{n-1,n-2}\coloneqq C_n\cdot C_{1,0}^{-1}C_{2,1}^{-1}\ldots C_{n-2,n-3}^{-1}.
\end{equation}
For $2\sigma_i\notin \mathbb{Z}$, each centralizer $\mathbf{C}_{\mathrm{SL}_2}(M_{[i:1]})$ is given by the diagonal subgroup $T\subset \SL$ up to the same conjugation:
\begin{equation}\label{centralizer_diag}
\mathbf{C}_{\mathrm{SL}_2}(M_{[i:1]})=\bigg\{ C_{1,0}^{-1}C_{2,1}^{-1}\ldots C_{i,i-1}^{-1}\begin{pmatrix}
    s_{i}^{1/2}&0\\ 0 & s_i^{-1/2}
\end{pmatrix}  C_{i,i-1}\ldots C_{2,1} C_{1,0}\bigg|s_i^{1/2}\in\mathbb{C}^\times\bigg\}, \quad i=1,\ldots, n{-}1.
\end{equation}
Note that the branching ambiguity
of parameter $s_i^{1/2}$ can produce the scalar sign, thus parameters $s_i$ parametrize factor $\mathbf{C}_{\mathrm{SL}_2}(M_{[i:1]})/\mathbf{Z}_{\mathrm{SL}_2}$.

\paragraph{Fenchel-Nielsen coordinates.} Considering centralizers \eqref{centralizer_diag} for the fixed representative $\check{\boldsymbol{M}}$,
one can parametrize $\mathcal{L}_n^{\boldsymbol{\theta};\boldsymbol{\sigma}}$ given by \eqref{centr_param} with coordinates $(s_i)_{i=2}^{n-2}\in (\mathbb{C}^\times)^{n-3}$. Then the representatives $\boldsymbol{M}$ of a generic local system $\boldsymbol{M}^{\mathrm{SL}_2}\in\mathcal{L}_n^{\boldsymbol{\theta};\boldsymbol{\sigma}}$
can be written explicitly from \eqref{S_param}. Namely, for $\mathbf{C}_{\mathrm{SL}_2}\big(\check{M}_{[i:1]}\big)$ given by \eqref{centralizer_diag} with diagonalization matrices $\check{C}_{i,i-1}$, the partial products of tuple $\boldsymbol{M}$ are given by \eqref{partial_product_param} with diagonalization matrices 
\begin{equation}\label{FN_from_gluing}
C_{1,0}=S_1 \check{C}_{1,0}, \qquad C_{i,i-1}=\begin{pmatrix}
    s_i^{1/2}&0\\ 0 & s_i^{-1/2}
\end{pmatrix}\check{C}_{i,i-1}, \quad i=2,\ldots, n-2, \qquad C_{n-1,n-2}=\check{C}_{n-1,n-2}.
\end{equation}
 
The obtained coordinates $\sigma_i$ and $s_i$ for $i = 2,\ldots,n-2$ on $\mathcal{L}_n^{\boldsymbol{\theta}}$ are referred to Fenchel-Nielsen lengths and angles, respectively. Angles $s_i$ are not canonical due to the ambiguity in the choice of $\check{\boldsymbol{M}}$ and, furthermore, due to the freedom of the left multiplication of the corresponding $\check{C}_{i,i-1}$'s by matrices of the diagonal matrix subgroup $T\subset\SL$. Lengths $\sigma_i$ are defined up to integer shifts and also there are symmetries $(\sigma_i,s_i)\mapsto (-\sigma_i,s_i^{-1})$, coming from the transposition matrix action $C_{i,i-1}\mapsto \begin{pmatrix}
0 & \mspace{-10mu} 1\\[-0.1cm]
1 & \mspace{-10mu} 0
\end{pmatrix}C_{i,i-1}.$

\subsection{Riemann-Hilbert problem}
\label{ssec:RH problem}
\paragraph{Multi-valued function formulation.}
One can ask to reconstruct the linear problem~\eqref{linear_system} and its solution $\Phi(z)$ that yields a given local system. The problem of constructing a matrix function $\Phi(z)$ with a prescribed monodromy and local behavior is called \textit{Riemann-Hilbert problem}. For the considered $\SL$ case it is formally given by

\begin{definition}\label{def:RH_local}
    For a given generic local system $\boldsymbol{M}^{\mathrm{SL}_2}\in\mathcal{L}_n^{\boldsymbol{\theta}}$, the Riemann-Hilbert (RH) problem asks for a $2\times 2$ matrix holomorphic multi-valued function $\Phi(z)$ on $\RSmp$, such  
    that in the neighborhood of each $t_i$ it obeys the local behavior given by Prop.~\ref{prop:Phi_puncture} with $C_i$'s that yield \eqref{Monodromy_puncture}.
\end{definition}
The local behavior of Prop.~\ref{prop:Phi_puncture} near each puncture $t_i$ implies $\det\Phi(z)$ to be holomorphic in the neighborhood $U^\times(t_i)$. Thus, due to the Liouville's theorem we have that 
\begin{prop}\label{prop:detPhi}
The determinant $\det \Phi(z)$ of the RH solution is a nonzero constant function, in particular, $\Phi(z)$ is invertible on the whole $\RSmp$.
\end{prop}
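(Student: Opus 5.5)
We first show that $\det\Phi(z)$ is a single-valued function on $\RSmp$. Analytic continuation of $\Phi$ along any loop $\gamma$ gives $\Phi_\gamma(z)=\Phi(z)M_\gamma$ with $M_\gamma\in\SL$. This is guaranteed by the local forms \eqref{Phi_puncture} with prescribed $C_i$, together with the fact that $\Phi$ is holomorphic and multi-valued only through continuation. Since $\det M_\gamma=1$, we get $\det\Phi_\gamma(z)=\det\Phi(z)$. Hence $d(z)\coloneqq\det\Phi(z)$ is a single-valued holomorphic function on the punctured sphere.

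Next we analyse $d(z)$ near each puncture using Prop.~\ref{prop:Phi_puncture}. In $U_{t_i}^\times$ we have
\begin{equation*}
d(z)=\det H_i(z)\cdot\det\mathrm{diag}\big((z-t_i)^{\theta_i},(z-t_i)^{-\theta_i}\big)\cdot\det C_i=\det H_i(z),
\end{equation*}
because the diagonal factor has determinant $1$ and $C_i\in\SL$. Since $H_i$ is holomorphic and $\mathrm{GL}_2(\mathbb{C})$-valued on the full disk $U_{t_i}$, $\det H_i$ is holomorphic and nonvanishing there. So each puncture is a removable singularity of $d$, and $d(t_i)\neq0$.

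Thus $d$ extends to a holomorphic function on the compact surface $\mathbb{CP}^1$. By Liouville's theorem, equivalently the maximum principle on a compact Riemann surface, $d$ is constant. If one of the $t_i$ is $\infty$, or if $\infty$ is not a puncture, we use the appropriate local coordinate; this changes nothing, because holomorphy at every point of $\mathbb{CP}^1$ is all that is needed. The constant is nonzero because $d(t_i)=\det H_i(t_i)\neq0$. Therefore $\Phi(z)$ is invertible at every point of $\RSmp$.

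The only subtle step is single-valuedness: we must make sure the monodromy of $\Phi$ lies in $\SL$ rather than merely $\mathrm{GL}_2$. This follows from \eqref{Monodromy_puncture}, since every monodromy is a product of the $M_i$, each conjugate to $\mathrm{diag}(e^{2\pi\ri\theta_i},e^{-2\pi\ri\theta_i})$. We also use that continuation around $t_i$ inside $U_{t_i}^\times$ affects only the explicit diagonal factor. The remaining steps are routine.
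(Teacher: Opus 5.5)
Your proof is correct and follows the paper's argument: the local form of Prop.~\ref{prop:Phi_puncture} makes $\det\Phi=\det H_i$ holomorphic near each puncture, and Liouville's theorem then gives constancy. You also spell out two steps the paper leaves implicit: single-valuedness, via the $\SL$ monodromy, and nonvanishing, via $H_i$ being $\mathrm{GL}_2$-valued on the full disk.
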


As a consequence, we have that $A(z)=\Phi'(z)\Phi^{-1}(z)$ is given by \eqref{A}, i.e. the linear system of \eqref{linear_system} is reconstructed. Also, the solution of RH problem is essentially unique:
\begin{prop}\label{prop:RH_unique}
    For $\boldsymbol{M}^{\mathrm{SL}_2}\in\mathcal{L}_n^{\boldsymbol{\theta}}$ with generality condition
    $2\theta_i \notin \mathbb{Z},\,{\scriptstyle
    i = 1,\ldots, n}$, the RH problem solution $\Phi(z)$ is unique up to the left multiplication by a constant matrix.
\end{prop}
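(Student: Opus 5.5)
Take two solutions $\Phi(z)$ and $\tilde\Phi(z)$ of the RH problem of Definition~\ref{def:RH_local} for the same local system $\boldsymbol{M}^{\mathrm{SL}_2}$, and study the ratio $R(z)\coloneqq\tilde\Phi(z)\Phi^{-1}(z)$. By Prop.~\ref{prop:detPhi}, $\det\Phi$ is a nonzero constant, so $\Phi^{-1}$ is holomorphic (multi-valued) on $\RSmp$ and $R$ is well defined there. The claim follows once we show that $R$ is single-valued on $\RSmp$, extends holomorphically to every $t_i$, and is therefore constant by Liouville's theorem on $\mathbb{CP}^1$.

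First, single-valuedness. Both solutions realize the same local system, so after an overall constant left multiplication (absorbed into the conclusion) they share the same monodromy: analytic continuation along $\gamma$ gives $\Phi\mapsto\Phi M_\gamma$ and $\tilde\Phi\mapsto\tilde\Phi M_\gamma$. Strictly, the local system is an $\SL$-conjugacy class, so a priori $\tilde\Phi$ has monodromy $G^{-1}M_\gamma G$. Replacing $\tilde\Phi$ by $\tilde\Phi G^{-1}$ fixes this. Note that right multiplication preserves the form of Prop.~\ref{prop:Phi_puncture}, since $C_i\mapsto C_iG^{-1}$. The statement concerns left multiplication, so I will interpret it as left multiplication once the monodromy representatives are matched; this is the convention implicit in the definition, which fixes the $C_i$. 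Then $R_\gamma=\tilde\Phi M_\gamma M_\gamma^{-1}\Phi^{-1}=R$, so $R$ is single-valued.

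Second, local behaviour at $t_i$, which is the key step. Near $t_i$ write $\Phi=H_i D_i C_i$ and $\tilde\Phi=\tilde H_i D_i\tilde C_i$, with $D_i=\mathrm{diag}\big((z-t_i)^{\theta_i},(z-t_i)^{-\theta_i}\big)$. Equal monodromy $M_i$ gives $\tilde C_i^{-1}\mathrm{diag}(e^{2\pi\ri\theta_i},e^{-2\pi\ri\theta_i})\tilde C_i=C_i^{-1}\mathrm{diag}(\cdot)C_i$. Since $2\theta_i\notin\mathbb{Z}$, the eigenvalues are distinct, so $\tilde C_iC_i^{-1}$ commutes with a diagonal matrix having distinct entries and is therefore diagonal, say $\mathrm{diag}(\lambda,\lambda^{-1})$. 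It then commutes with $D_i$, and
\[
R=\tilde H_i D_i\,\mathrm{diag}(\lambda,\lambda^{-1})\,D_i^{-1}H_i^{-1}=\tilde H_i\,\mathrm{diag}(\lambda,\lambda^{-1})\,H_i^{-1}.
\]
This is holomorphic at $t_i$, because $H_i\in\mathrm{GL}_2$ is holomorphic and invertible on all of $U_{t_i}$. This is exactly where the genericity condition enters; without it, logarithmic or resonant terms would spoil the cancellation.

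Hence $R$ is an entire single-valued matrix function on $\mathbb{CP}^1$. Here I assume $\infty$ is either one of the $t_i$ or a regular point where the solutions are holomorphic. Liouville's theorem then makes $R$ constant, i.e. $\tilde\Phi=R\,\Phi$ with $R$ constant, and $\det R\neq0$ by Prop.~\ref{prop:detPhi}. The main obstacle is the second step, namely carefully matching the connection matrices $C_i$ up to the torus $T$ using $2\theta_i\notin\mathbb{Z}$. The rest is formal.
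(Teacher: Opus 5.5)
Your proof is correct and is the argument the paper relies on. The paper writes out no separate proof, but later remarks that this uniqueness requires the invertibility of $\Phi$ from Prop.~\ref{prop:detPhi}. That is exactly your route: form $\tilde\Phi\Phi^{-1}$, show it is holomorphic at each $t_i$ because $2\theta_i\notin\mathbb{Z}$ forces $\tilde C_iC_i^{-1}$ to be diagonal, and apply Liouville (the same ratio device reappears in the proof of Prop.~\ref{prop:gRH problem_unique}).

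One small slip: left multiplication never changes the monodromy, so your parenthetical about an ``overall constant left multiplication'' is misplaced. It is also unnecessary, since Definition~\ref{def:RH_local} already fixes the representatives $M_i$ through the $C_i$. Hence both solutions have identical monodromy, and no right multiplication by $G^{-1}$ is needed.
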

\noindent We can fix this constant matrix by the normalization condition $\Phi(z_0) = \mathbf1_{2\times 2}$ for a base point $z_0$.

\begin{figure}[h]
    \centering
    \begin{tikzpicture}
    \def\rn{0.3};
    \def\rp{1.3};
    \def\rs{2.3};
    \def\ru{5.3};
    
    \draw (\rn,0) -- (\rs+0.75,0) (\ru-1.25,0) -- (\ru+1,0);
    \draw (0,0) circle (\rp) circle (\rs) circle (\ru);

    \fill[even odd rule,pattern={Lines[angle=45,distance=6pt,line width=2pt]},
    pattern color=gray!10] (0,0) circle (\rs+0.75) circle (\ru-1.25);

    \draw[dash pattern=on 2pt off 8pt] (0,0) circle (\rs+0.75) circle (\ru-1.25);
    \filldraw ({(\rs+\ru)/2-0.5},0) circle (0.5pt) ({(\rs+\ru)/2-0.25},0) circle (0.5pt) ({(\rs+\ru)/2},0) circle (0.5pt);

    \filldraw (\rn,0) circle (2pt) node[anchor=south east] {$t_1$};
    \filldraw (\rp,0) circle (2pt) node[anchor=south west] {$t_2$};
    \filldraw (\rs,0) circle (2pt) node[anchor=south west] {$t_3$};
    \filldraw (\ru,0) circle (2pt) node[anchor=south west] {$t_{n-1}$};
    \filldraw (\ru+1,0) circle (2pt) node[anchor=south west] {$t_n$};

    \node at (0,-0.5) {$K_1$};
    \node at ({(\rp+\rs)/2*cos(-45)},{(\rp+\rs)/2*sin(-45)}) {$K_2$};
    \node at ({(\rs+0.4)*cos(-45)},{(\rs+0.4)*sin(-45)}) {$K_3$};
   \node at ({(\ru-0.6)*cos(-45)},{(\ru-0.6)*sin(-45)}) {$K_{n-2}$};
    \node at ({(\ru+0.8)*cos(-30)},{(\ru+0.8)*sin(-30)}) {$K_{n-1}$};
 
    \node[align=left] at ({(\rn+\rp)/2},-0.15) {\tiny $+$};
    \node[align=left] at ({(\rp+\rs)/2},-0.15) {\tiny $+$};
    \node[align=left] at (\rs+0.5,-0.15) {\tiny $+$};
    \node[align=left] at (\ru-0.5,-0.15) {\tiny $+$};
    \node[align=left] at (\ru+0.5,-0.15) {\tiny $+$};
    \end{tikzpicture}
    \caption{Dissected Riemann sphere with punctures}
    \label{fig:dissec}
\end{figure}
\paragraph{Single-valued formulation on  dissection.}
The other formulation of the RH problem is given in the setting of the single-valued functions on dissected $\mathbb{CP}^1$ instead of the multi-valued functions on the whole $\RSmp$. For definiteness, we order the punctures in the absolute value increasing order\footnote{The strict inequalities can be provided by a M\"obius transformation.} $0 \leq |t_1|  < |t_2| < \dots  < |t_{n-1}| < |t_n| \leq \infty$. Then $\mathbb{CP}^1$ is dissected by circles $|z|=|t_i|,{\scriptstyle i=2,\ldots n-1}$ and segments $[t_i,t_{i+1}], {\scriptstyle i=1,\ldots n-1}$ into regions 
$K_1=\{|z|<|t_2|\}$, $K_i=\{|t_i|<|z|<|t_{i+1}|\}\backslash(t_i,t_{i+1})$ for $i=2,\ldots, n{-}2$, and $K_{n-1}=\{|z|>t_{n-1}\}$ as presented on Fig.~\ref{fig:dissec}.
\begin{definition}\label{def:RH_annulus}
    For a given generic local system $\boldsymbol{M}^{\mathrm{SL}_2}\in\mathcal{L}^{\boldsymbol{\theta}}_n$ with parametrization \eqref{partial_product_param}, the Riemann-Hilbert problem asks for a set of invertible holomorphic single-valued functions $\Phi_i(z):K_i\to \mathrm{Mat}_{2\times2}(\mathbb{C})$ for $i=1,\ldots, n-1$ such that
    \begin{enumerate}
        \item \label{item:circle} On each circle cut $|z|=|t_i|$, $\scriptstyle i=2,\ldots, n-1$ the corresponding $\Phi_i(z)$ and $\Phi_{i-1}(z)$ are glued as
        $\Phi_{i-1}(z)=\Phi_i(z) C_{i,i-1}$.
        
        \item \label{item:segment} On each segment cut $(t_i,t_{i+1})$ the corresponding $\Phi_i(z)$ on the two sides of the segment is glued by 
        $\Phi_i^+(z)=\Phi_i^-(z)\begin{pmatrix}e^{2\pi\ri\sigma_i} & 0\\ 0 & e^{-2\pi\ri\sigma_i}\end{pmatrix}$,
        where the "+" side of the cut is marked on Fig.~\ref{fig:dissec}.
        
        \item \label{item:local} For each $t_i$ there exists its punctured neighborhood $U^\times(t_i)$, where the corresponding functions are given by (c.f. \eqref{Phi_puncture}): 
        \begin{equation}\label{RH problem_diss_local}
            \begin{aligned}
                U^\times(t_1)&:\Phi_1^\pm(z) =H_1(z)   \begin{pmatrix}(z-t_1)^{\theta_1} & 0\\ 0 & (z-t_1)^{-\theta_1}\end{pmatrix}C_1^{\pm}, 
                \\
                U^\times(t_i)&:\Phi_{i-1}^\pm(z)=\Phi_i^\pm(z) C_{i,i-1}=H_i(z)   \begin{pmatrix}(z-t_i)^{\theta_i} & 0\\ 0 & (z-t_i)^{-\theta_i}\end{pmatrix}C_i^{\pm}, \quad  i=2,\ldots, n-1, 
                \\
                U^\times(t_n)&:\Phi_{n-1}^\pm(z)=H_n(z)   \begin{pmatrix}(z-t_n)^{\theta_n} & 0\\ 0 & (z-t_n)^{-\theta_n}\end{pmatrix}C_n^{\pm},  
                \\ 
                \textrm{with}\quad &  C_i^+=C_i^-\begin{pmatrix}e^{2\pi\ri\sigma_{i-1}} & 0\\ 0 & e^{-2\pi\ri\sigma_{i-1}}\end{pmatrix},\quad 
                i=1,\ldots, n, \qquad \sigma_0\coloneqq \theta_1, \quad \sigma_n\coloneqq -\theta_n.
           \end{aligned}
        \end{equation}
    \end{enumerate}
\end{definition}

The equivalence of these two formulations of the RH problem is rather straightforward:
\begin{prop}\label{prop:local_disks}
For a given generic local system $\boldsymbol{M}^{\mathrm{SL}_2}\in\mathcal{L}_n^{\boldsymbol{\theta}}$, the normalized RH solution $\Phi(z)$ of Definition~\ref{def:RH_local} is in one-to-one correspondence with the normalized RH solution $(\Phi_i)_{i=0}^n$ of Definition~\ref{def:RH_annulus}.
\end{prop}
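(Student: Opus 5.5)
We construct the correspondence explicitly in both directions and check that the two constructions are mutually inverse.

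\emph{From multi-valued to dissected.} Given the normalized solution $\Phi(z)$ of Definition~\ref{def:RH_local}, we fix a branch of it on each simply connected region $K_i$; these regions are simply connected because the segments $[t_i,t_{i+1}]$ cut the annuli open. We then set
\begin{equation*}
\Phi_i(z)\coloneqq \Phi(z)\,C_{1,0}^{-1}C_{2,1}^{-1}\cdots C_{i,i-1}^{-1},
\end{equation*}
where the branch on $K_i$ is obtained by continuing from $K_1\ni z_0$ across the circle $|z|=|t_i|$ away from the segments. With this definition, condition~\ref{item:circle} of Definition~\ref{def:RH_annulus} holds by construction. For condition~\ref{item:segment}, crossing the segment $(t_i,t_{i+1})$ inside $K_i$ amounts to going around a loop homotopic to the cycle $c_i$. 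This continues $\Phi$ to $\Phi M_{[i:1]}$. By~\eqref{partial_product_param}, right multiplication by $M_{[i:1]}$ becomes, after conjugation by $C_{i,i-1}\cdots C_{1,0}$, right multiplication by $\mathrm{diag}(e^{2\pi\ri\sigma_i},e^{-2\pi\ri\sigma_i})$. This is exactly the required jump, once the orientation and sign conventions of the "$+$" side in Fig.~\ref{fig:dissec} are matched with the anti-representation convention. Condition~\ref{item:local} follows from Prop.~\ref{prop:Phi_puncture}: near $t_i$ we have $\Phi=H_i(z)(z-t_i)^{\theta_i\sigma_3}C_i$. Multiplying on the right by the product of the $C$'s gives the local form, with constants $C_i^\pm$ on the two sides of the cut through $t_i$. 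These constants differ by the jump on the adjacent segment, which yields $C_i^+=C_i^-\,\mathrm{diag}(e^{2\pi\ri\sigma_{i-1}},e^{-2\pi\ri\sigma_{i-1}})$. For $t_1$ and $t_n$ we use the conventions $\sigma_0=\theta_1$ and $\sigma_n=-\theta_n$, together with the definition of $C_{n-1,n-2}$.

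\emph{From dissected to multi-valued.} Conversely, given $(\Phi_i)$, we define $\Phi\coloneqq\Phi_1$ near $z_0$ and continue it analytically using the gluing rules. Conditions~\ref{item:circle} and~\ref{item:segment} guarantee that the continuation across each cut is again holomorphic, so $\Phi$ is a holomorphic multi-valued function on $\RSmp$. Its monodromy around $c_i$ is then $M_{[i:1]}$, recovered from~\eqref{partial_product_param}. Taking successive ratios $M_{[i:1]}M_{[i-1:1]}^{-1}$ recovers the individual $M_i$, so the monodromy is the prescribed local system. The local behavior at each $t_i$ is read off from~\eqref{RH problem_diss_local}. The two constructions are manifestly inverse to each other, and normalizations are matched by requiring $\Phi_1(z_0)=C_{1,0}^{-1}$ in the dissected picture, which corresponds to $\Phi(z_0)=\mathbf1$. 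Uniqueness on both sides, by Prop.~\ref{prop:RH_unique}, then makes the correspondence bijective.

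\emph{Main obstacle.} The only real difficulty is bookkeeping. One must verify that the homotopy class of the loop crossing segment $(t_i,t_{i+1})$ is $[\gamma_1]\cdots[\gamma_i]$, rather than its inverse or a conjugate, relative to the base point $z_0$ in Fig.~\ref{fig:fundamental_group}. One must also check that the anti-representation convention yields $M_{[i:1]}=M_i\cdots M_1$ with the right orientation. I would first fix $z_0\in K_1$ and a specific path into each $K_i$, and then check the case $n=3$ by direct computation before writing the general induction on $i$.
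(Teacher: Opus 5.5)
Your proposal is correct and takes the same route as the paper's proof: take the single-valued branch of $\Phi$ off the segment cuts, restrict it to each $K_i$ and right-multiply by the matrices that diagonalize the jump, and conversely glue the pieces via $\Phi|_{K_i}=\Phi_i C_{i,i-1}\cdots$. You are a bit more explicit than the paper, which only says the jump is conjugate to $M_{[i:1]}$ and sets $\Phi|_{K_1}=\Phi_1$ (so it recovers the local system only up to conjugation); you include the factor $C_{1,0}^{-1}$ and fix the normalization $\Phi_1(z_0)=C_{1,0}^{-1}$, and the orientation and base-point bookkeeping you flag is left implicit in the paper as well.
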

\begin{proof}
\ref{def:RH_local}$\Rightarrow$\ref{def:RH_annulus}.
     A solution of the RH problem $\Phi(z)$ on $\mathbb{CP}^1\backslash \bigcup_{i=1}^{n{-}1} [t_i,t_{i+1}]$ becomes a single-valued analytic function. Restricting given $\Phi(z)$ on each $K_i$, we see that the monodromy matrix $M_{[i:1]}$ is conjugated to the jump-matrix of $\Phi(z)|_{K_i}$ through the cut $(t_i,t_{i+1})$. Multiplying $\Phi(z)|_{K_i}$ by a constant matrix from the right to diagonalize this jump-matrix yields $\Phi_i(z)$. 

    \ref{def:RH_annulus}$\Rightarrow$\ref{def:RH_local}. For a given solution $(\Phi_i)_{i=1}^{n-1}$ of the RH problem define $\Phi(z)|_{K_1}\coloneqq\Phi_1(z)$ and then successively analytically continue\footnote{This actually uses the identity theorem.} $\Phi(z)$ as
    $\Phi(z)|_{K_i}=\Phi_i(z)C_{i,i{-}1}\ldots C_{2,1}$ to the whole $\mathbb{CP}^1\backslash \bigcup_{i=1}^{n{-}1} [t_i,t_{i+1}]$. Analytically continuing obtained single-valued function $\Phi(z)$ we obtain a multi-valued function $\Phi(z)$ with prescribed local behavior and thus monodromies.
\end{proof}

\paragraph{Isomonodromic deformations.}
While moving the punctures $t_1,\ldots t_n$, we can ask for the simultaneous transformations of the RH problem solution to preserve the monodromy matrices. To formulate this precisely, let us choose smooth representatives (loops) $(\gamma_i)_{i=1}^n$ of the fundamental group generators \eqref{fundamental_group} that have no mutual or self-intersections (except $z_0$), as on Fig.~\ref{fig:fundamental_group}. Each such representative $\gamma_i$ bounds a region containing $t_i$, which we denote by $U_i$. For any choice of $\tilde{t}_i\in U_i$ cycles $(\gamma_i)_{i=1}^n$ still define a system of generators of $\pi_1(\mathbb{CP}^1\backslash\{\tilde{t}_i\}_{i=1}^n,z_0)$. 
A multivariable holomorphic function  $\Phi\big(\tilde{t}_1,\ldots,\tilde{t}_n\big|z\big)$ for $\tilde{t}_i \in U_i,\, {\scriptstyle i=1,\ldots,n}$ is called the solution of \textit{isomonodromic deformation problem} if it solves the RH problem as function of $z$ for a given common local system.

For the linear system \eqref{linear_system}, isomonodromic deformations correspond to  certain transformations of the connection $A(z)$. More precisely, transformations of the residues $(A_i)_{i=1}^n$ are governed by the Schlezinger equations. A trivial example of the isomonodromic deformations is generated by algebra $\mathfrak{sl}_2$ of the M\"obius transformations that acts on the linear system. Up to that, in the simplest non-trivial case of $n=4$ punctures the Schlezinger equations are equivalent 
the Painlev\'e~VI equation.

\section{Virasoro vertex operators and conformal blocks}\label{sec:CFT}
In this section we recall some basics and fix conventions on the representation theory of the Virasoro algebra in the 2D CFT framework \cite{DMS_book}, focusing on the vertex operators and their fusion with the degenerate ones \cite{ILT14}. 

\subsection{Verma module of the Virasoro algebra and its bosonization}
\label{ssec:Virasoro}

\paragraph{Heisenberg algebra and Fock module.}
\begin{definition}\label{def:Heis}
The Heisenberg algebra $\mathrm{Heis}$ is an infinite dimensional Lie algebra defined by
\begin{equation}
   \mathrm{Heis} = \Big\langle\{a_n\}_{n\in \mathbb{Z}}, \mathbf1 \Big| [a_n, a_m] = \frac{n}2\delta_{n+m, 0}\mathbf1, \quad  [\mathbf1, a_n] = 0 \Big\rangle.
\end{equation}
\end{definition}
\begin{definition}\label{def:Fock}
The Fock module $\mathbb{F}_{\alpha}$ of weight $\alpha$ is the induced module 
\begin{equation}
\mathrm{Ind}^{\mathrm{Heis}}_{\mathrm{Heis}_{\geq0}}\mathbb{C}|\alpha\rangle=\mathbb{C}[\mathrm{Heis}_{<0}]|\alpha\rangle, \qquad \mathrm{Heis}_{\geq0}=\langle\{a_n\}_{n\in \mathbb{Z}_{\geq0}}, \mathbf1\rangle, \qquad  \mathrm{Heis}_{<0}=\langle\{a_n\}_{n\in \mathbb{Z}_{<0}}\rangle,
\end{equation}
where the subalgebra $\mathrm{Heis}_{\geq0}$ acts on the cyclic vector $|\alpha\rangle$ (defined up to a constant factor) by
\begin{equation}
    \ri a_0 |\alpha\rangle = \alpha |\alpha\rangle,\qquad
    a_n |\alpha\rangle = 0, \,\, n\in\mathbb{Z}_{>0}, \qquad \mathbf1|\alpha\rangle = |\alpha\rangle. 
\end{equation}
\end{definition}
In the 2D CFT framework, generators $\{a_n\}_{n\in \mathbb{Z}}$ constitute modes of the \textit{boson field}
\begin{equation}
\ri\varphi (z) \coloneqq -\frac12\partial_{a_0}+ a_0\log z-\sum_{n\in \mathbb{Z}\setminus\{0\}} \frac{a_n}nz^{-n}, \qquad [\partial_{a_0},a_0]=\mathbf1,  
\end{equation}
so that commutation relations between $a_n$'s turn into the boson operator-product expansion (OPE):
\begin{equation}\label{eq:OPE_phi}
\varphi(z)\varphi(w)=-\frac12\log(z-w)+\mathrm{reg}.
\end{equation}
Recall that operator-product expansions present products of two operators (fields) as a function (usually a series) in the difference of their arguments. 
Generally, only the singular part is important and is thus presented; the regular remainder is denoted by $\mathrm{reg}$ as above.

\paragraph{Virasoro algebra and its Verma module.}
\begin{definition}
The Virasoro algebra $\mathrm{Vir}$ is an infinite dimensional Lie algebra generated by $\{L_n\}_{n\in\mathbb{Z}}$ and the central generator $C$ with the commutation relations
\begin{equation}
[L_n,L_m]=(n-m)L_{n+m}+\frac{n(n^2{-}1)}{12}\delta_{n+m,0}\,C, \qquad [L_n,C]=0.
\end{equation}
\end{definition}
Note that below we consider only representations where the central element $C$ acts by a scalar $c\in\mathbb{C}$, called \textit{central charge}. Collecting the other generators into a current $T(z)\coloneqq\sum\limits_{n\in\mathbb{Z}}L_n z^{-n-2}$ turns the Virasoro commutation relations into the following OPE:
\begin{equation}
T(z)T(w)=\frac{c/2}{(z-w)^4}+\frac{2T(w)}{(z-w)^2}+\frac{\partial T(w)}{z-w}+\mathrm{reg}.     
\end{equation}
\noindent In the 2D CFT framework, the current $T(z)$ is referred to as the \textit{stress-energy tensor}.
\begin{definition}
The Virasoro Verma module $\mathbb{M}_{\Delta,c}$ of the highest weight $\Delta$ is the induced module 
\begin{equation}
\mathrm{Ind}^{\mathrm{Vir}}_{\mathrm{Vir}_{\geq0}}\mathbb{C}|\Delta\rangle=U(\mathrm{Vir}_{<0})|\Delta\rangle, \qquad \mathrm{Vir}_{\geq0}=\langle\{L_n\}_{n\in \mathbb{Z}_{\geq0}}, C\rangle, \qquad  \mathrm{Vir}_{<0}=\langle\{L_n\}_{n\in \mathbb{Z}_{<0}}\rangle,
\end{equation}
where the subalgebra $\mathrm{Vir}_{\geq0}$ acts on the cyclic vector $|\Delta\rangle$ (defined up to a constant factor) by
\begin{equation}\label{eq:hw Virasoro}
    L_0|\Delta\rangle=\Delta|\Delta\rangle, \qquad L_n|\Delta\rangle = 0, \,\, n\in\mathbb{Z}_{>0}, \qquad C|\Delta\rangle=c|\Delta\rangle.
\end{equation}
The cyclic vector $|\Delta\rangle$ is referred to as the highest weight vector of $\mathbb{M}_{\Delta,c}$.
\end{definition}
For generic $(\Delta,c)$ the Verma module is irreducible but the precise condition is naturally written in the bosonization framework, which we describe just below.

\paragraph{Bosonization and reducibility of the Verma module \cite{FF90}.} For generic $\alpha$, the Virasoro algebra action defined by the stress-energy tensor
\begin{equation}\label{T_phi}
T(z)=Q\partial^2\varphi(z)-:\big(\partial\varphi(z)\big)^2:
\end{equation}
turns the Fock module $\mathbb{F}_{\alpha}$ into the Verma module $\mathbb{M}_{\Delta_{\alpha},c}$ with the highest weight vector $|\alpha\rangle$ and
\begin{equation}\label{Liouville_param}
c=1+6Q^2, \qquad \Delta_{\alpha}=\alpha(Q-\alpha).
\end{equation}
Note that $\mathbb{M}_{\Delta_{\alpha},c}$ is obtained in this way not only from $\mathbb{F}_{\alpha}$ but also from $\mathbb{F}_{Q-\alpha}$.

We also introduce the parameter \(b\) by \(Q=b+b^{-1}\). The \emph{Kac spectrum} is given by
\begin{equation}\label{Kac_spectrum}
    \mathrm{Kac}_b=\big\{\alpha_{r,s},Q-\alpha_{r,s}\mid {r,s\in\mathbb{Z}_{\geq1}}\big\}, \qquad 
\alpha_{r,s}=\frac{Q}2-\frac{rb^{-1}+sb}2.
\end{equation}
Then the Verma module $\mathbb{M}_{\Delta_{\alpha},c}$ is irreducible iff \(\alpha\) does not belong to the Kac spectrum. 
We will call such \(\alpha\) \emph{generic}. 
We denote by \(\mathbb{L}_{\Delta,c}\) the irreducible module generated by a highest weight vector satisfying~\eqref{eq:hw Virasoro}. 
For example, for generic \(\alpha\) we have \(\mathbb{L}_{\Delta_\alpha,c} \simeq \mathbb{M}_{\Delta_{\alpha},c}\).

If \(\alpha=\alpha_{r,s}\) for some $r,s\in\mathbb{Z}_{\geq1}$ then the Virasoro Verma module $\mathbb{M}_{\Delta_{\alpha_{r,s}},c}$ is reducible. 
For \(b^2\notin\mathbb{Q}\) this module contains a single (up to a constant factor) vector $v_{r,s}$ such that
\begin{equation}
L_0v_{r,s}=(\Delta_{\alpha_{r,s}}+rs)v_{r,s}, \qquad L_m v_{r,s}=0,\quad m\in\mathbb{Z}_{>0}.
\end{equation}
This vector $v_{r,s}$ is called \textit{singular}; it generates the submodule \(U({\mathrm{Vir}_{<0}})v_{r,s}\subset\mathbb{M}_{\Delta_{\alpha_{r,s}},c}\), which is isomorphic to the Verma module \(\mathbb{M}_{\Delta_{\alpha_{r,s}}+rs,c}\). The quotient $\mathbb{M}_{\Delta_{\alpha_{r,s}},c}/\mathbb{M}_{\Delta_{\alpha_{r,s}}+rs,c}$ is irreducible and isomorphic to \(\mathbb{L}_{\Delta_{\alpha_{r,s}},c}\).

We will mainly need the case of \(b^2\in \mathbb{Z}\). 
In this case the Verma module can contain several singular vectors, but for any non-generic \(\alpha\) there exist \(r,s\) (with minimal product \(rs\)) and a singular vector \(v_{r,s}\) as above such that \(\Delta_\alpha=\Delta_{\alpha_{r,s}}\) and $\mathbb{M}_{\Delta_{\alpha_{r,s}},c}/\mathbb{M}_{\Delta_{\alpha_{r,s}}+rs,c}$ is irreducible and isomorphic to \(\mathbb{L}_{\Delta_{\alpha_{r,s}},c}\). 

\begin{example}\label{ex:vacuum}
    The simplest case $r=s=1$ with $\alpha_{1,1}=0$ gives the singular vector $v_{1,1}=L_{-1}|0\rangle$. 
    If \(b^2\notin \mathbb{Q}\) or \(b^2\in \mathbb{Z}\) then the corresponding irreducible module is $\mathbb{L}_{0,c}=U(\mathrm{Vir}_{<-1})|0\rangle$ with $L_m|0\rangle=0,\,m\geq-1$. This module is called the \emph{vacuum module}.
\end{example}

\subsection{Virasoro vertex operators}
\label{ssec:vertex}
\paragraph{General definition.}
\begin{definition}
    For given weights $\Delta_3, \Delta_2, \Delta_1$, a primary Virasoro vertex operator is an operator
    \begin{equation}\label{VO_dcd}
        V^{\Delta_2}_{\Delta_3, \Delta_1}(z)\colon \mathbb{L}_{\Delta_1, c} \rightarrow z^{\Delta_3-\Delta_2-\Delta_1} \mathbb{L}_{\Delta_3, c}[[z]], 
    \end{equation}
    defined by its commutation relation with the $\mathrm{Vir}$ generators
    \begin{equation}\label{Ln_V}
        \big[L_n, V^{\Delta_2}_{\Delta_3, \Delta_1}(z)\big] = \big(z^{n+1}\partial_z + z^n(n{+}1)\Delta_2\big)V^{\Delta_2}_{\Delta_3, \Delta_1}(z), \quad n\in\mathbb{Z}.
    \end{equation}

    For any \(v_2\in \mathbb{L}_{\Delta_2,c}\) we define a Virasoro vertex operator $V^{\Delta_2}_{\Delta_3, \Delta_1}[v_2](z)\colon \mathbb{L}_{\Delta_1, c} \rightarrow z^{\Delta_3-\Delta_2-\Delta_1} \mathbb{L}_{\Delta_3, c}((z))$ by the following properties: 
    \begin{enumerate}[label=(\alph*)]
        \item it is linear with respect to \(v_2\) 
        \item it coincides with the primary vertex operator $V^{\Delta_2}_{\Delta_3, \Delta_1}(z)$ for \(v_2=|\Delta_2\rangle\) 
        \item it satisfies recurrence relations 
            \begin{subequations}\label{desc_recursion}
                \begin{align}
                    &V^{\Delta_2}_{\Delta_3, \Delta_1}[L_{-1}v_2](z) = \partial_z V^{\Delta_2}_{\Delta_3, \Delta_1}[v_2](z),\\
                    &V^{\Delta_2}_{\Delta_3, \Delta_1}[L_{-2}v_2](z) ={}:\!T(z)V^{\Delta_2}_{\Delta_3, \Delta_1}[v_2](z)\!:.
                \end{align}
            \end{subequations}
    \end{enumerate}
    Here the normal-ordered action of the stress-energy tensor is defined by
    \begin{equation}
        :\!T(z)V^{\Delta_2}_{\Delta_3, \Delta_1}[v_2](z)\!:{} =\sum_{n=-\infty}^{-2}z^{-n-2}L_n V^{\Delta_2}_{\Delta_3, \Delta_1}[v_2](z) + \sum_{n=-1}^{+\infty}z^{-n-2}V^{\Delta_2}_{\Delta_3, \Delta_1}[v_2](z)L_n.
    \end{equation}
\end{definition}

The Virasoro commutation condition \eqref{Ln_V} can be encoded in the OPE
\begin{equation}
T(z)V^{\Delta_2}_{\Delta_3,\Delta_1}(w)=\frac{\Delta_2V^{\Delta_2}_{\Delta_3,\Delta_1}(w)}{(z-w)^2}+\frac{\partial V^{\Delta_2}_{\Delta_3,\Delta_1}(w)}{z-w}+\mathrm{reg}.
\end{equation}
The number $\Delta_2$ is referred to as the (conformal) weight of the vertex operator \(V^{\Delta_2}_{\Delta_3,\Delta_1}\). 
These relations determine a non-trivial vertex operator up to a normalization function $N_b(\Delta_3,\Delta_2,\Delta_1)$ defined by 
\begin{equation}\label{VO_normalization}
    V_{\Delta_3, \Delta_1}^{\Delta_2}(z)|\Delta_1\rangle = z^{\Delta_3 - \Delta_2 - \Delta_1}N_b(\Delta_3, \Delta_2, \Delta_1)\Big(|\Delta_3\rangle + \mathrm{O}(z)\Big).
\end{equation}

\paragraph{Degenerate vertex operators.}
Let \(\Delta_i=\Delta_{\alpha_i},\, {\scriptstyle i=1,\ldots, 3}\). If one of the \(\alpha_i\) is in the Kac spectrum $\mathrm{Kac}_b$ (given by \eqref{Kac_spectrum}) then the existence of the non-trivial vertex operator requires conditions on the weights, called \textit{fusion rules}.

In particular, if $\Delta_2=\Delta_{\alpha_{r,s}}$ for some $r,s\in\mathbb{Z}_{\geq1}$, then one has the condition $V^{\Delta_{\alpha_{r,s}}}[v_{r,s}](z)=0$ for the corresponding singular vector $v_{r,s}$. Through recursive relations \eqref{desc_recursion} this condition translates to an order $rs$ differential equation on the primary vertex operator $V^{\Delta_{\alpha_{r,s}}}(z)\eqqcolon \phi_{r,s}(z)$. Such differential equations yield that the \textit{degenerate} vertex operators $\phi_{r,s}(z)$ can act non-trivially from $\mathbb{L}_{\Delta_{\alpha_1}}$ to $\mathbb{L}_{\Delta_{\alpha_3}}$ only for a certain finite set of differences\footnote{Up to symmetry $\alpha\leftrightarrow Q-\alpha$.} $\alpha_3-\alpha_1$.

Analogous fusion rules arise in the cases of $\alpha_3$ or \(\alpha_1\) in the Kac spectrum $\mathrm{Kac}_b$.

\begin{example}\label{ex:fusion 11}
    Cobsider the case $r=s=1$ with $\alpha_{1,1}=0$. Due to Example~\ref{ex:vacuum} the equation on vertex operator takes the form $\partial_z\phi_{1,1}(z)=0$. 
    Hence \(\phi_{1,1}(z)\) should be independent of \(z\).
    It yields \(\Delta_3=\Delta_1\), and in terms of \(\alpha\) the fusion rule has the form $\alpha_3=\alpha_1$. 
    Thus vertex operator $\phi_{1,1}$ can be identified (up to a constant factor) with the identity operator $\mathbf1$.

    Similarly, if \(\alpha_1=\alpha_{1,1}\) the vertex operator should satisfy \(V^{\Delta_2}_{\Delta_3,\Delta_{1,1}} L_{-1} |0 \rangle=0\). It yields \(\Delta_3=\Delta_1\), and in terms of \(\alpha\) the fusion rule has the form $\alpha_3=\alpha_2$ (or $\alpha_3=Q-\alpha_2$) due to symmetry. 
      
    The case \(\alpha_3=\alpha_{1,1}\) is analogous.
\end{example}

\begin{example}\label{ex:fusion 12}
In the case $r=1,s=2$ with $\alpha_{1,2}=-b/2$ the singular vector $v_{1,2}$  and the resulting differential equation are
\begin{equation}\label{eq:deg12}
v_{1,2}=(L_{-1}^2+b^2L_{-2})|\Delta_{-b/2}\rangle \quad \Rightarrow \quad \partial_z^2\phi_{1,2}(z)+b^2:T(z)\phi_{1,2}(z):{}=0.
\end{equation}
This differential equation yields the fusion rule $\alpha_3=\alpha_1\pm b/2$.
\end{example}

Two cases of the operator $\phi_{1,2}(z)$ of the latter example are of particular importance for this paper. 
Below we use for them the notation $\psi_{\pm}(z)$, i.e. 
\begin{equation}
\psi_\varsigma(z)=\psi_{\alpha;\varsigma}(z)\coloneqq V^{\Delta_{-b/2}}_{\Delta_{\alpha-\varsigma b/2},\Delta_{\alpha}}(z),
\end{equation}
where we specify the value of $\alpha$ only if necessary. 
For brevity we refer to them also in CFT manner as degenerate fields.

\paragraph{Bosonization of $\psi_\pm$.}
For $\alpha_1,\alpha{+}\alpha_1\notin \mathrm{Kac}_b$, the primary vertex operator $V^{\Delta_\alpha}_{\Delta_{\alpha_1+\alpha},\Delta_{\alpha_1}}(z)$ can be represented via the bosonization as operator $\mathbb{F}_{\alpha_1}\to \mathbb{F}_{\alpha+\alpha_1}$ given by the normal ordered exponent of the boson field
\begin{equation}
    :e^{2\alpha \varphi(z)}:{}= e^{\ri\alpha \partial_{a_0}}z^{-2\ri\alpha a_0}\exp\left(-2\ri\alpha\sum_{n\in\mathbb{Z}_{>0}}\frac{a_{-n}}nz^n\right)\exp\left(-2\ri\alpha\sum_{n\in\mathbb{Z}_{>0}}\frac{a_{n}}{-n}z^{-n}\right).
\end{equation}
Furthermore, for $\alpha=\alpha_{1,2}=-b/2$, the corresponding boson exponent $e^{-b\varphi}$ satisfies \eqref{eq:deg12}, and thus it represents $\psi_+$.

Representing $\psi_-$ can be done just by the standard procedure of dressing $\psi_+$ with a screening operator that shifts momenta by \(-2b\). 
Hence we have
\begin{equation}\label{dressing}
    \psi_-(z)=\oint_{c_z} :\!e^{2b\varphi(w)}\!:{} :\!e^{-b\varphi(z)}\!: \rd w=\oint_{c_z} \frac{:\!e^{b\big(2\varphi(w)-\varphi(z)\big)}\!:}{(w-z)^{-b^2}}\rd w.
\end{equation}
Here we should have a closed integration contour, for which we can take a small cycle $c_z$ around $z$ if $b^2\in\mathbb{Z}$. 

Let us set \(b=\ri\sqrt{k}\) for \(k\in\mathbb{Z}_{>0}\). 
Then we have 
\begin{equation}\label{deg_bosonic}
    \psi_+(z)={}:e^{-\ri\sqrt{k}\varphi(z)}:, \qquad \psi_-(z)={}:D_{k-1}(\partial\varphi,\ldots,\partial^{k-1}\varphi) e^{\ri\sqrt{k}\varphi(z)}:\,,  
\end{equation}
where the differential polynomial $D_{k-1}$ is given by
\begin{equation}
(k{-}1)!D_{k-1}(\partial\varphi,\ldots,\partial^{k-1}\varphi)={}:e^{-2\ri\sqrt{k}\varphi(z)}\partial^{k-1}_z\left(e^{2\ri\sqrt{k}\varphi(z)}\right):.  
\end{equation}
The normalization of $\psi_{\pm}$ chosen by \eqref{deg_bosonic} (for $b=\ri\sqrt{k}$),  is used in Sec.~\ref{ssec:normalizations}. 
In terms of the function $N_b$ of \eqref{VO_normalization} it is given by:
\begin{subequations}
    \label{deg_bos_norm}
    \begin{align}
        N^+(\alpha_1)&=N_{b=\ri\sqrt{k}}\left(\alpha_1-\frac{b}2,-\frac{b}2,\alpha_1\right)=1, 
        \\ 
        N^-(\alpha_1)&=N_{b=\ri\sqrt{k}}\left(\alpha_1+\frac{b}2,-\frac{b}2,\alpha_1\right)=\frac{(-1)^{k-1}}{(k{-}1)!}\prod\nolimits_{j=0}^{k-2}(2\ri\sqrt{k}\,\alpha_1+j).
    \end{align}
\end{subequations}
Also, the mutual OPEs of these bosonized $\psi_{\pm}(z)$ are given by
\begin{subequations}
    \begin{align}\label{eq:OPE_deg_diff} 
        &\psi_{\pm}(z)\psi_{\mp}(w)=(\pm1)^{k-1}\binom{2k{-}2}{k{-}1}\,(z-w)^{1-\frac32k} \Big(\mathbf1+O(z-w)\Big),\\
        \label{eq:OPE_deg_same}
        &\psi_{\pm}(z)\psi_{\pm}(w)=(z-w)^{\frac{k}2}\cdot\mathrm{reg}.
    \end{align}
\end{subequations}

\begin{remark}
    In the cases of $b=\ri \sqrt{k}$, $k=1,2$, the degenerate fields $\psi_{\pm}$ can be considered as free fermions (dressed by an additional boson exponent) and symplectic fermions, respectively.
    We discuss these cases in more detail in Sec.~\ref{sec:fermions}.
\end{remark}

\subsection{Conformal blocks and their analytic properties}
\label{ssec:conformal_blocks}
\paragraph{Conformal blocks.}
Recall that \(|0\rangle \) denotes the \emph{vacuum vector}, i.e. the highest weight vector of the vacuum module \(\mathbb{L}_{0,c}\), see Ex.~\ref{ex:vacuum}. It satisfies 
\begin{equation}
    L_m |0\rangle=0, \quad m\in\mathbb{Z}_{\ge -1}
\end{equation}
We define the vacuum covector $\langle0|$ as a functional on  \(\mathbb{L}_{0,c}\) such that 
\begin{equation}\label{vacuum_covector}
    \langle0|L_m=0, \quad m\in\mathbb{Z}_{\leq1}, \qquad  \langle0| 0\rangle=1. 
\end{equation}

\begin{definition}
    Consider a formal composition of $n$ primary vertex operators
    \begin{equation}
        \mathsf{V}_n\coloneqq V^{\Delta_n}_{0,\Delta_n}(t_n)V^{\Delta_{n-1}}_{\Delta_n,\Xi_{n-2}}(t_{n-1})\dots V^{\Delta_3}_{\Xi_3,\Xi_2}(t_3) V^{\Delta_2}_{\Xi_2,\Delta_1}(t_2)V^{\Delta_1}_{\Delta_1,0}(t_1). 
    \end{equation}
    The conformal block is defined as a matrix element
    \begin{equation}\label{eq:conf block def}
        \mathcal{F}_c\big((\Xi_i)_{i=2}^{n-2};(\Delta_i)_{i=1}^n\big|(t_i)_{i=1}^n\big)=\langle0| \mathsf{V}_n|0\rangle    
    \end{equation}
\end{definition}

Several remarks are in order. 
First, the conformal block \(\mathcal{F}\) above is defined via the power series expansion in \(t_{i+1}/t_i\). 
The basic assumption in CFT is that this series converges in some open region of \(t_i\), \(\infty>|t_n|>\dots>|t_2|>|t_1|>0\) and can be analytically continued. 

Second, in the formula~\eqref{eq:conf block def} we assumed that all points \(t_i\) are nonzero. 
One can remove this constraint. Namely, if we set \(t_1=0\) and \(t_n=\infty\), then the conformal block can be written as 
\begin{equation}
    \langle \Delta_n | V^{\Delta_{n-1}}_{\Delta_n,\Xi_{n-2}}(t_{n-1})\dots V^{\Delta_3}_{\Xi_3,\Xi_2}(t_3) V^{\Delta_2}_{\Xi_2,\Delta_1}(t_2) |\Delta_1\rangle
\end{equation}
where \(|\Delta\rangle\) is the highest weight vector of \(\mathbb{M}_{\Delta,c}\) and \(\langle\Delta|\) is a dual covector on the same space.

Third, the conformal block \(\mathcal{F}\) is covariant with respect to the action of \(\SL\), which acts on $\mathbb{CP}^1$ via M\"obius transformations. 
This follows from the infinitesimal property, namely the commutation relations of vertex operators \(V\) with the $\mathfrak{sl}_2=\langle L_{-1},L_0,L_1\rangle$ subalgebra of the Virasoro algebra. 
In particular, for \(n=3\) all dependence on \(t\) follows from this symmetry and conformal block basically reduces to normalizations \(N\):
\begin{equation}
    \langle 0| V^{\Delta_3}_{0,\Delta_3}(t_3) V^{\Delta_2}_{\Delta_3,\Delta_1}(t_2) V^{\Delta_1}_{\Delta_1,0}(t_1) |0\rangle=\frac{N(0,\Delta_3,\Delta_3)N_b(\Delta_3,\Delta_2,\Delta_1)N(\Delta_1,\Delta_1,0)}{(t_2{-}t_1)^{\Delta_2+\Delta_1-\Delta_3}(t_3{-}t_2)^{\Delta_3+\Delta_2-\Delta_1}(t_3{-}t_1)^{\Delta_3+\Delta_1-\Delta_2}}.
\end{equation}
Hence one can view the analytic continuation of \(\mathcal{F}\) as a section of a bundle (bundle of conformal blocks) on \(\mathcal{M}_{0,n}\) (moduli space of curves of genus \(0\) with \(n\) punctures). 

The formula \eqref{eq:conf block def} gives the section of this bundle with prescribed asymptotic behavior at a point on the boundary of \(\mathcal{M}_{0,n}\) inside Deligne--Mumford compactification \(\overline{\mathcal{M}}_{0,n}\). 
This point corresponds to a singular curve obtained by gluing $n-2$ 3-punctured \(\mathbb{CP}^1\). Topologically this is the same data as the pants decomposition used in the previous section and conformal block~\eqref{eq:conf block def} corresponds to the decomposition depicted on Fig.~\ref{fig:pants}. 
The blocks given by \eqref{eq:conf block def} with all possible (allowed in the theory) \( (\Xi_i), {\scriptstyle i=2,\dots,n-2} \) give a basis in the fiber of the bundle of conformal blocks.

Changing the pants decomposition (or, algebraically, changing singular curves obtained by gluing 3-punctured \(\mathbb{CP}^1\)-s) one defines another basis in the fiber of the bundle of conformal blocks. 
The fundamental example corresponds to \(n=4\). 
There are three different pants decompositions in this case, \((t_1t_2|t_3t_4)\), \((t_1t_3|t_2t_4)\), and \((t_1t_4|t_2t_3)\),  see Fig.~\ref{fig:four_punctured_sphere_pants}. 
They correspond to conformal blocks given by the formulas~\eqref{eq:4pt blocks pants}.

\begin{figure}[h]
    \centering

    \begin{tikzpicture}

     \draw (0,0) arc (-90:-270:0.2cm and 0.5cm);
    
    \draw (-3,0) arc (-90:90:0.2cm and 0.5cm);
    \draw (-3,1) arc (90:270:0.2cm and 0.5cm);
    
    \draw(-1,2.2) arc (0:180:0.5cm and 0.2cm);
    \draw(-2,2.2) arc (180:360:0.5cm and 0.2cm);

    \draw (-3,0) .. controls (-2.5,0.2) and (-0.5,0.2) .. (0,0);
    \draw (-3,1) .. controls (-2,1) .. (-2,2.2);
    \draw (0,1) .. controls (-1,1) .. (-1,2.2);

    \filldraw (-3,0.5) circle (2pt) node[left=4pt] {$t_1$};
    \filldraw (-1.5,2.2) circle (2pt) node[above=4pt] {$t_2$};
    
    \begin{scope}[xshift=3cm]

    \draw[dashed] (-3,0) arc (-90:90:0.2cm and 0.5cm); 
    \draw (-3,0) arc (-90:-270:0.2cm and 0.5cm);
    
    \draw(0,0) arc (-90:90:0.2cm and 0.5cm);
    \draw(0,1) arc (90:270:0.2cm and 0.5cm);
    
    \draw(-1,2.2) arc (0:180:0.5cm and 0.2cm);
    \draw(-2,2.2) arc (180:360:0.5cm and 0.2cm);
    
    \draw (-3,0) .. controls (-2.5,0.2) and (-0.5,0.2) .. (0,0);
    \draw (-3,1) .. controls (-2,1) .. (-2,2.2);
    \draw (0,1) .. controls (-1,1) .. (-1,2.2);

    \filldraw (-1.5,2.2) circle (2pt) node[above=4pt] {$t_3$};
    \filldraw (0,0.5) circle (2pt) node[right=4pt] {$t_4$};
    \end{scope} 

     \begin{scope}[xshift=9cm]

     \draw (0,0) arc (-90:-270:0.2cm and 0.5cm);
    
    \draw (-3,0) arc (-90:90:0.2cm and 0.5cm);
    \draw (-3,1) arc (90:270:0.2cm and 0.5cm);
    
    \draw(-1,2.2) arc (0:180:0.5cm and 0.2cm);
    \draw(-2,2.2) arc (180:360:0.5cm and 0.2cm);

    \draw (-3,0) .. controls (-2.5,0.2) and (-0.5,0.2) .. (0,0);
    \draw (-3,1) .. controls (-2,1) .. (-2,2.2);
    \draw (0,1) .. controls (-1,1) .. (-1,2.2);

    \filldraw (-3,0.5) circle (2pt) node[left=4pt] {$t_1$};
    \filldraw (-1.5,2.2) circle (2pt) node[above=4pt] {$t_3$};

    \end{scope}
    
    \begin{scope}[xshift=12cm]

    \draw[dashed] (-3,0) arc (-90:90:0.2cm and 0.5cm); 
    \draw (-3,0) arc (-90:-270:0.2cm and 0.5cm);
    
    \draw(0,0) arc (-90:90:0.2cm and 0.5cm);
    \draw(0,1) arc (90:270:0.2cm and 0.5cm);
    
    \draw(-1,2.2) arc (0:180:0.5cm and 0.2cm);
    \draw(-2,2.2) arc (180:360:0.5cm and 0.2cm);
    
    \draw (-3,0) .. controls (-2.5,0.2) and (-0.5,0.2) .. (0,0);
    \draw (-3,1) .. controls (-2,1) .. (-2,2.2);
    \draw (0,1) .. controls (-1,1) .. (-1,2.2);

    \filldraw (-1.5,2.2) circle (2pt) node[above=4pt] {$t_2$};
    \filldraw (0,0.5) circle (2pt) node[right=4pt] {$t_4$};
    \end{scope}    

   \begin{scope}[xshift=6cm, yshift=-5.5cm]

    \draw (-3,0) arc (-90:90:0.2cm and 0.5cm); 
    \draw (-3,0) arc (-90:-270:0.2cm and 0.5cm);
    
    \draw(0,0) arc (-90:90:0.2cm and 0.5cm);
    \draw(0,1) arc (90:270:0.2cm and 0.5cm);
    
    \draw[dashed](-1,2.2) arc (0:180:0.5cm and 0.2cm);
    \draw(-2,2.2) arc (180:360:0.5cm and 0.2cm);
    
    \draw (-3,0) .. controls (-2.5,0.2) and (-0.5,0.2) .. (0,0);
    \draw (-3,1) .. controls (-2,1) .. (-2,2.2);
    \draw (0,1) .. controls (-1,1) .. (-1,2.2);

    \filldraw (-3,0.5) circle (2pt) node[left=4pt] {$t_1$};
    \filldraw (0,0.5) circle (2pt) node[right=4pt] {$t_4$};
    \end{scope}  

    \begin{scope}[xshift=6cm, yshift=-1.1cm, yscale=-1]

    \draw (-3,0) arc (-90:90:0.2cm and 0.5cm); 
    \draw (-3,0) arc (-90:-270:0.2cm and 0.5cm);
    
    \draw(0,0) arc (-90:90:0.2cm and 0.5cm);
    \draw(0,1) arc (90:270:0.2cm and 0.5cm);
        
    \draw (-3,0) .. controls (-2.5,0.2) and (-0.5,0.2) .. (0,0);
    \draw (-3,1) .. controls (-2,1) .. (-2,2.2);
    \draw (0,1) .. controls (-1,1) .. (-1,2.2);

    \filldraw (-3,0.5) circle (2pt) node[left=4pt] {$t_2$};
    \filldraw (0,0.5) circle (2pt) node[right=4pt] {$t_3$};
    \end{scope}
    \end{tikzpicture}
    
    \caption{Three pants decompositions of the four-punctured sphere.}
    \label{fig:four_punctured_sphere_pants}
\end{figure}
\begin{equation}\label{eq:4pt blocks pants}
    \begin{gathered}
        \langle 0| \Big(V^{\Delta_4}_{0,\Delta_4}(t_4)V^{\Delta_3}_{\Delta_4,\Xi}(t_{3})\Big)\;  \Big( V^{\Delta_2}_{\Xi,\Delta_1}(t_2)V^{\Delta_1}_{\Delta_1,0}(t_1)\Big)|0 \rangle, \quad 
        \langle 0|\Big( V^{\Delta_4}_{0,\Delta_4}(t_4)V^{\Delta_2}_{\Delta_4,\Xi}(t_{2})\Big) \; \Big(V^{\Delta_3}_{\Xi,\Delta_1}(t_3)V^{\Delta_1}_{\Delta_1,0}(t_1)\Big)|0 \rangle, 
        \\[0.3cm] 
        \langle 0| V^{\Delta_4}_{0,\Delta_4}(t_4) \Big(V^{\Xi}_{\Delta_4, \Delta_1} \big[ V^{\Delta_3}_{\Xi,\Delta_2}(t_3-t_2) |\Delta_2\rangle\big] (t_{2}) \Big) V^{\Delta_1}_{\Delta_1,0}(t_1)|0 \rangle.
    \end{gathered}    
\end{equation}

Note that such formulas depend not only on the pants decomposition but also on the cyclic order of holes on each pair of pants. 
For example, for the pants containing 0 and two punctures \(t_1\) and \(t_2\) one can write two formulas 
\begin{equation}
    \Big( V^{\Delta_2}_{\Xi,\Delta_1}(t_2)V^{\Delta_1}_{\Delta_1,0}(t_1)\Big)|0 \rangle, \qquad \text{ vs. } \qquad \Big( V^{\Delta_1}_{\Xi,\Delta_2}(t_1)V^{\Delta_2}_{\Delta_2,0}(t_2)\Big)|0 \rangle. 
\end{equation}
In the analytic setting, the difference is \(|t_2|>|t_1|\) vs. \(|t_1|>|t_2|\). 
The corresponding conformal blocks will have the same asymptotic behavior and differ by a constant factor. 
This is the so-called \emph{braiding transformation} \cite{MS88}. The square of the braiding transformation corresponds to \(t_2\) making a circle around \(t_1\). 
This is a (particular example of) \emph{monodromy transformation}.
The relation between conformal blocks given in \eqref{eq:4pt blocks pants} is given by a composition of braiding transformations and \emph{fusion transformations} \cite{MS88} that do not change the order of punctures but change the pants decomposition.

\paragraph{Braiding and fusion of the degenerate field $\psi_{\pm}$.} 
Following  \cite{ILT14} we will consider below conformal blocks with \(n\) generic fields \(V\) and two degenerate ones \(\psi_\pm\) at points \(z\) and \(z_0\). 
Similarly to the above, such a conformal block depends on a pants decomposition of the sphere with \(n+2\) punctures. 
We will consider only pants decompositions that are refinements of the one given in Fig.~\ref{fig:pants}. Namely we insert \(z,z_0\) into some of the pairs of pants there and then further decompose them. 
The corresponding conformal block (for certain pants decomposition) have the form 
\begin{equation}\label{correlator_two_deg}
    \big\langle 0 \big| 
    V^{\Delta_n}(t_n) 
    \cdots 
    V^{\Delta_{j{+}1}}(t_{j{+}1}) 
    \; \psi_{-\varsigma_0}(z_0)  \;
    V^{\Delta_{j}}(t_{j}) 
    \cdots 
    V^{\Delta_{i{+}1}}(t_{i{+}1})  
    \; \psi_{\varsigma}(z)\; V^{\Delta_i}(t_i) 
    \cdots 
    V^{\Delta_2}(t_1) \big|0 \big\rangle, 
\end{equation}
where \(\varsigma,\varsigma_0=\pm\) and we omitted indices of \(V\) for shortness.

Consider, for example, the insertion of \(\psi_\pm(z)\) into the \((i-1)\)-st pair of pants, see Fig.~\ref{fig:trinion}. Let us identify the punctures with \(0\) ,\(t\), and \(\infty\) such that the corresponding cycles are \(c_{i-1}\), \(c_{t_i}\), and \(c_i\) respectively. 
We represent the possible positions \(z\) of degenerate fields there by filled and empty bullets.  In the bullet point in the cycle \(c\) we take conformal blocks specified by asymptotics in which \(z\) goes to the corresponding puncture. Namely the bullets \(1,1^\circ\) correspond to pants decompositions \((0z|t\infty)\), the bullets \(2,2^\circ\) correspond to pants decompositions \((0\infty|tz)\) and the bullets \(3,3^\circ\) correspond to pants decompositions \((0t|z\infty)\). 
Similarly to~\eqref{eq:4pt blocks pants} the corresponding formulas for operators are
\begin{equation}\label{compositions}
    \begin{gathered}
        c_{i-1}:V_{p_3,p_1-\frac{\varsigma}2 \mathsf{b}^2}^{p_2}\big[|p_2\rangle\big](t)\psi_{\varsigma}(z), 
        \quad\;\;
        c_{t_i}: V_{p_3, p_1}^{p_2-\frac{\varsigma}2\mathsf{b}^2}\big[\psi_{\varsigma}(z - t)|p_2\rangle\big](t),
        \quad \;\;
        c_i: \psi_{-\varsigma}(z)\,V_{p_3-\frac{\varsigma}2\mathsf{b}^2,\, p_1}^{p_2}\big[|p_2\rangle\big](t).
    \end{gathered}
\end{equation}
The pairs of bullets \(j,j^\circ\) on the same boundary cycle differ by a braiding, while the neighboring bullets on different cycles differ by a fusion.

\begin{figure}[h]
    \centering
    \begin{tikzpicture}[scale=2,decoration={
    markings,
    mark=at position 0.3 with {\arrow{>}},
    mark=at position 0.8 with {\arrow{>}}}]
    
    \draw[postaction=decorate] (0,0) arc (270:-90:0.2cm and 0.5cm);
    \draw[postaction=decorate](-3,0) arc (-90:270:0.2cm and 0.5cm);
    \draw[postaction=decorate](-1,2.2) arc (0:360:0.5cm and 0.2cm);
    \draw[line width=1pt] (-3,0) .. controls (-2.5,0.2) and (-0.5,0.2) .. (0,0);
    \draw[line width=1pt] (-3,1) .. controls (-2,1) .. (-2,2.2);
    \draw[line width=1pt] (0,1) .. controls (-1,1) .. (-1,2.2);
    \draw[thin] (-1.3,2.02) .. controls (-1.3,0.3) .. (-1.5,0.15);
    \draw[thin, dashed] (-1.7,2.38) .. controls (-1.7,0.3) .. (-1.5,0.15);

    \node at (-1.5,2.2) {$c_{t_i}$};
    \node at (-3,0.5) {$c_{i{-}1}$};
    \node at (-0,0.5) {$c_i$};

    \node at (-2.2,0.6) {$K_{i-1}$};
    \node at (-0.8,0.6) {$K_i$};

    \filldraw (-3,1) circle (2pt) node[left=4pt] {$1$};
    \filldraw[fill=white] (-3,0) circle (2pt) node[left=4pt] {$1^\circ$};
    \filldraw (-1,2.2) circle (2pt) node[above=4pt] {$2$};
    \filldraw[fill=white] (-2,2.2) circle (2pt) node[above=4pt] {$2^\circ$};
    \filldraw (0,1) circle (2pt) node[right=4pt] {$3$};
    \filldraw[fill=white] (0,0) circle (2pt) node[right=4pt] {$3^\circ$};
    \end{tikzpicture}
    
    \caption{Trinion with braiding and fusion paths}
    \label{fig:trinion}
\end{figure}

In order to write formulas for braiding and fusion, it is convenient to change notation as follows
\begin{equation}\label{p_sb}
    \mathsf{b}\coloneqq-\ri b, \quad p\coloneqq (Q/2-\alpha)b \, :\, \eqref{Liouville_param}\, \Rightarrow \quad c=c(k)=1-6\frac{(1{-}\mathsf{b}^2)^2}{\mathsf{b}^2}, \qquad \Delta_p=\frac{4p^2-(1{-}\mathsf{b}^2)^2}{4\mathsf{b}^2}.
\end{equation}
Then, considering the cycle \(c\) to be centred at \(0\), one can write the braiding of \(\psi_{\varsigma}(z)\) and \(V^p(0)\) as 
\begin{equation}\label{braiding}
    \psi_{p;\varsigma}^{\circ}(z) =B_{\varsigma\varsigma}(p) \psi^\bullet_{p;\varsigma}(e^{-\pi\ri}z), \qquad B_{\varsigma\varsigma'}(p)=\ri^{1 -\mathsf{b}^2}e^{-\pi\ri\varsigma p}\delta_{\varsigma\varsigma'},
\end{equation}
where the transformation direction is indicated on Fig.~\ref{fig:trinion}. Note that in general braiding depends also depends on noramlization \(N_{\mathsf{b}}\), here we assumed unital normalization \(N_{\mathsf{b}}=1\) or any symmetric with respect to permutation \(p_1,p_2\) normalization.

\begin{prop}[Fusion relations \text{\cite[(3.31abc)]{ILT14}}]\label{prop:fusion}
\leavevmode
\begin{enumerate}
\item[i)]
The compositions of the degenerate field $\psi_{\varsigma}$, $\varsigma=\pm$, with the generic vertex operator \(V^{p_2}(t)\) are related by the following fusion formulas
\begin{subequations}
    \label{fusion_rules}
    \begin{align}
        \label{fusion_23}
            \psi^\bullet_{-\varsigma}(z)\, V_{p_3-\frac{\varsigma}2\mathsf{b}^2,\, p_1}^{p_2}[v_2](t) &= \sum_{\varsigma' = \pm}F_{\varsigma \varsigma'}^{[23]}\,V_{p_3, p_1}^{p_2-\frac{\varsigma'}2\mathsf{b}^2}\big[\psi_{\varsigma'}^\bullet(z - t)v_2\big](t),\\
        \label{fusion_21}
            V_{p_3, p_1-\frac{\varsigma}2\mathsf{b}^2}^{p_2}[v_2](t)\, \psi_{\varsigma}^\bullet(z) &= \sum_{\varsigma' = \pm}F_{\varsigma \varsigma'}^{[21]}\,V_{p_3, p_1}^{p_2-\frac{\varsigma'}2\mathsf{b}^2}\big[\psi_{\varsigma'}^{\circ}(z - t)v_2\big](t),\\
        \label{fusion_13}
            \psi_{-\varsigma}^{\circ}(z)\, V_{p_3-\frac{\varsigma}2\mathsf{b}^2, p_1}^{p_2}[v_2](t) &= \sum_{\varsigma' = \pm}F_{\varsigma \varsigma'}^{[13]}\,V_{p_3, p_1-\frac{\varsigma'}2\mathsf{b}^2}^{p_2}\left[v_2\right](t)\,\psi^{\circ}_{\varsigma'}(z).
        \end{align}
\end{subequations}

\item[ii)] With the unital normalization \eqref{VO_normalization} $N_{\mathsf{b}}\equiv 1$ for generic and degenerate vertex operators, the above fusion matrices $F^{[ji]}$ are given by
\begin{equation}\label{Fusion_1}
    F^{[ji]}_{\varsigma\varsigma'}=F_{\varsigma\varsigma'}(p_i,p_j|p_l) \coloneqq \frac{\Gamma\big(1 - 2\varsigma p_i\big)\Gamma\big(2\varsigma'p_j\big)}{\prod_{\pm}\Gamma\big(\frac12 - \varsigma p_i + \varsigma'p_j \pm p_l\big)}, \qquad \{l\}=\{1,2,3\}\setminus\{i,j\}.
\end{equation}
\end{enumerate}
\end{prop}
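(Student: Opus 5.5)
The plan is to reduce the statement to the classical connection problem for the Gauss hypergeometric equation. First I establish part i) in structural form. Take the three-point block in which $\psi_\varsigma(z)$ sits in the trinion of Fig.~\ref{fig:trinion}, with $V^{p_1}$ at $0$, $V^{p_2}$ at $t$ and $V^{p_3}$ at $\infty$. The null-vector equation \eqref{eq:deg12}, combined with the Ward identities \eqref{Ln_V} and the recursion \eqref{desc_recursion}, turns the block into a function of $x=z/t$ satisfying a second-order Fuchsian ODE with regular singular points $0,1,\infty$. The local exponents come from the fusion rule $\alpha_3=\alpha_1\pm b/2$ of Example~\ref{ex:fusion 12}; in the $p$ variables \eqref{p_sb} they differ by $\pm p_i$ at each point. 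For generic $p_i$ the two local solutions at each singular point form a basis. The three operator compositions in \eqref{compositions} are exactly these local bases at $x=0,1,\infty$, since each has the prescribed leading asymptotics. Any two bases of a 2-dimensional solution space are related by a constant matrix, and this gives \eqref{fusion_23}, \eqref{fusion_21}, \eqref{fusion_13}. For descendants $v_2$ the relation extends because both sides satisfy the same recursion \eqref{desc_recursion}, or equivalently because the relation holds as an identity of operators, which follows by inserting arbitrary states.

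Second, I compute the matrices in the unital normalization. Stripping off the prefactor $x^{a}(1-x)^{c}$ reduces the ODE to the hypergeometric equation with parameters $(A,B,C)$ expressed through $p_1,p_2,p_3$. For \eqref{fusion_21}, relating the basis at $x=0$ to the basis at $x=1$, I use Kummer's connection formula
\[
{}_2F_1(A,B;C;x)=\tfrac{\Gamma(C)\Gamma(C-A-B)}{\Gamma(C-A)\Gamma(C-B)}{}_2F_1(A,B;A{+}B{-}C{+}1;1{-}x)+\dots
\]
Substituting $C=1-2\varsigma p_1$ and $C-A-B=2\varsigma' p_2$ reproduces \eqref{Fusion_1} with $(i,j)=(1,2)$. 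The denominators become $\Gamma(\tfrac12-\varsigma p_1+\varsigma' p_2\pm p_3)$ because $A,B=\tfrac12-\varsigma p_1+\varsigma'p_2\pm p_3$ after shifting. The matrices $F^{[23]}$ and $F^{[13]}$ follow in the same way from the connection formulas between $\infty$ and $1$, and between $0$ and $\infty$. Alternatively, they follow by symmetry: the pair $(x\mapsto 1/x,\ p_1\leftrightarrow p_3)$ relates the cases. The branch conventions of $\circ$ versus $\bullet$ are fixed by the braiding phase \eqref{braiding}.

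The main obstacle is bookkeeping rather than ideas. The leading coefficient of each composition in \eqref{compositions} must be exactly $1$ under $N_{\mathsf b}\equiv1$. This includes the normalization of the descendant $\psi_{\varsigma'}(z-t)|p_2\rangle$, whose leading term is the highest weight vector of $p_2-\tfrac{\varsigma'}2\mathsf b^2$. The phases $(-1)^{\dots}$ and $e^{\pm\pi\ri(\cdot)}$ arising from continuation along the chosen paths must also be tracked so that they cancel into the stated form. I would check these normalizations and phases on the leading term of each expansion before quoting the Kummer formulas. Since the statement is \cite[(3.31abc)]{ILT14}, I would cross-check the final signs against that reference.
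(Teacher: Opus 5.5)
Your proposal is correct and follows essentially the argument behind the statement: the paper gives no proof of its own but quotes \cite[(3.31abc)]{ILT14}, and the matrices \eqref{Fusion_1} are exactly the Kummer connection coefficients of the hypergeometric equation coming from the null-vector equation, as you propose (with your $C=1-2\varsigma p_1$ and $C-A-B=2\varsigma' p_2$ one checks $C-A,\,C-B=\tfrac12-\varsigma p_1+\varsigma' p_2\pm p_3$). There are only minor slips: the two exponents at each singular point differ by $2p_i$ rather than $p_i$; the values you quote are those of $C-A,\,C-B$ rather than of $A,B$; and the symmetry $x\mapsto 1/x$, $p_1\leftrightarrow p_3$ relates $F^{[21]}$ to $F^{[23]}$ but not to $F^{[13]}$, which needs the direct $0\leftrightarrow\infty$ connection formula that you also mention.
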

\begin{remark} \label{rem:fusion_dependence}
If we consider fusion transformations inverse to \eqref{fusion_rules}, then the corresponding fusion matrices $F^{[ij]}=\big(F^{[ji]}\big)^{-1}$ are also given by formula \eqref{Fusion_1}, due to the relation
\begin{equation}
    F(p_3,p_2|p_1)F(p_2,p_3|p_1)=\mathbf1_{2\times2}.
\end{equation}

Also, the fusion transformations \eqref{fusion_rules} are dependent because the monodromy around contractible cycles on the trinion of Fig.~\ref{fig:trinion} must be trivial. 
In this unit normalization, this is provided by the relation
\begin{equation}
    B(p_3)F(p_3,p_2|p_1)B(p_2)F(p_2,p_1|p_3)B(p_1)F(p_1,p_3|p_2)=-\ri^{-3\mathsf{b}^2}\cdot\mathbf1_{2\times2}. 
\end{equation}
\end{remark}

The pairs of pants containing \(0\) and \(\infty\) require additional comments. 
They are similar to each other, so let us concentrate on the one that contains \(0\) and punctures \(t_1\) and \(t_2\), \(|t_2|>|t_1|>0\). 
If we move \(\phi_{1,2}(z)\) to the region \(|t_1|>|z|\), then this degenerate field starts acting on the vacuum \(|0\rangle\). 
Then, due to the vacuum fusion rule (see Ex.~\ref{ex:fusion 11}) the only allowed target has \(p=p_{1,2}=\frac12-\mathsf{b}^2\). Hence we cannot define separate \(\psi_{+},\psi_-\) in this case. This index is restored via the action of \(V^{p_1}\) on the state in \(\mathbb{M}_{\Delta_{1,2},c}\). Hence the braiding connects the following conformal blocks 
\begin{equation}
     V^{p_2}_{p_3,p_1-\frac{\varsigma}{2}\mathsf{b}^2}(t_2)  \psi_{\varsigma}(z)  V^{p_1}_{p_1,0}(t_1)  \big|0  \big\rangle \qquad \text{ vs. } \qquad 
     V^{p_2}_{p_3,p_1-\frac{\varsigma}{2}\mathsf{b}^2}(t_2) V^{p_1}_{p_1-\frac{\varsigma}{2}\mathsf{b}^2,\frac12-\mathsf{b}^2}(t_1) \phi_{1,2}(z)\big|0  \big\rangle 
\end{equation}
Moreover, this braiding can be trivialized in the following sense. Similarly to the RH problem above we make a cut \([t_1,t_2]\). Then one can choose the normalization of the fields such that braiding is trivial unless \(z\) crosses the cut, in which case the conformal block is modified by a local monodromy factor. 

\section{Conformal blocks as solutions of Riemann-Hilbert problem}
\label{sec:sol}

In this section, following ideas of \cite{ILT14} for $k=1$, we modify the Riemann-Hilbert problem \ref{def:RH_annulus} in a way that it is solved by a correlation function of (periodic) vertex operators with insertion of two degenerate fields $\psi_{\pm}(z)$ for $\mathsf{b}^2=k\in\mathbb{Z}_{>0}$.

\subsection{Normalizations}
\label{ssec:normalizations}

\paragraph{Wick rotation of the Ponsot-Teschner normalization \cite{PT99}.}
A convenient normalization for the Virasoro vertex operators in the case of $\mathrm{Re}(b)>0$ was introduced by Ponsot-Teschner in \cite[(10)]{PT99}. It is given in terms of double gamma functions \(\Gamma_b\):
\begin{equation}\label{PT_norm}
    N_b(\alpha_3,\alpha_2,\alpha_1)=\frac{\Gamma_b(2Q-2\alpha_3)\Gamma_b(2\alpha_2)\Gamma_b(2\alpha_1)}{\Gamma_b(2Q-\alpha_1{-}\alpha_2{-}\alpha_3)\Gamma_b(\alpha_1{+}\alpha_2{-}\alpha_3)\Gamma_b(\alpha_3{+}\alpha_1{-}\alpha_2)\Gamma_b(\alpha_2{+}\alpha_3{-}\alpha_1)}.
\end{equation}
This leads to the fusion matrix \cite[(16)]{PT99}, given in terms of trigonometric functions. In particular, it is quasi-periodic with respect to the shift $\alpha_i\mapsto \alpha_i+b^{-1}$.

On the other hand, in the $c=1$ case, which serves as our model example, Iorgov--Lisovyy--Teschner used the normalization \cite[(4.42)]{ILT14} in terms of Barnes $G$-functions:
\begin{equation}\label{Nc=1}
    N_{\mathsf{b}=1}(p_3, p_2, p_1)= \frac{\prod\limits_{\varsigma,\varsigma'=\pm}G\big(1+\varsigma\varsigma'p_3+\varsigma'p_2+\varsigma p_1\big)}{G(1+2p_3)G(1-2p_2)G(1-2p_1)}\,.
\end{equation}
In this normalization the fusion matrices $F^{[ji]}$ also become trigonometric, and quasi-periodic with respect to shifts of $p_3,p_2,p_1$ by $\mathsf{b}^2=1$:
\begin{equation}\label{Fusion_qp}
    F^{[ji]}_{\varsigma\varsigma'}=F^{\textrm{q-per}}_{\varsigma\varsigma'}(p_i, p_j|p_l) \coloneqq \frac{\cos\big(\pi(p_l+\varsigma p_j-\varsigma'p_i)\big)}{\sin(2\pi\varsigma'p_j)}, \qquad \{l\}=\{1,2,3\}\setminus\{i,j\}.
\end{equation}

To obtain a similar normalization for arbitrary \(b\) on the ray $\mathrm{Re}(b)=0$, we use the following Virasoro--Wick rotation of the double gamma function (see, e.g.,~\cite{RT23}):
\begin{equation}
    b\to \mathsf{b}=-\ri b: \qquad \Gamma_b(x) \to \Gamma_{\mathsf{b}}(\ri x+\mathsf{b}).
\end{equation}
Under this transformation, the normalization \eqref{PT_norm} becomes
\begin{equation}\label{our_norm}
    N_{\mathsf{b}}(p_3,p_2,p_1)=\frac{\Gamma_{\mathsf{b}}\left(\mathsf{b}^{-1}(1+2p_3)\right)\Gamma_{\mathsf{b}}\left(\mathsf{b}^{-1}(1-2p_2)\right)\Gamma_{\mathsf{b}}\left(\mathsf{b}^{-1}(1-2p_1)\right)}{\prod\limits_{\varsigma,\varsigma'=\pm1}\Gamma_{\mathsf{b}}\left(\mathsf{b}^{-1}\big(\frac{1+\mathsf{b}^2}2+\varsigma\varsigma' p_3+\varsigma' p_2+\varsigma p_1\big)\right)}. 
\end{equation}
Yet it gives precisely the fusion matrices \eqref{Fusion_qp} for any $\mathsf{b}\in\mathbb{R}_{>0}$. Indeed, using the shift relation
\begin{equation}
    \Gamma_{\mathsf{b}}\left(\frac{x+\mathsf{b}^2}{\mathsf{b}}\right)=\frac{\sqrt{2\pi}\, \mathsf{b}^{x-\frac12}}{\Gamma(x)}\Gamma_{\mathsf{b}}\left(\frac{x}{\mathsf{b}}\right),
\end{equation}
and the symmetry of \eqref{our_norm} under permutations of $(-p_3),p_2,p_1$, we transform the fusion matrices \eqref{Fusion_1}:
\begin{multline}
    F^{[ji]}_{\varsigma\varsigma'}=F_{\varsigma\varsigma'}(p_i,p_j|p_l)\mapsto\frac{N_{\mathsf{b}}(p_i,\frac12{-}\mathsf{b}^2,p_i{-}\varsigma\frac{\mathsf{b}^2}2)}{N_{\mathsf{b}}(p_j{-}\varsigma'\frac{\mathsf{b}^2}2,\frac12{-}\mathsf{b}^2,p_j)}\,\frac{N_{\mathsf{b}}(p_i{-}\varsigma\frac{\mathsf{b}^2}2,p_j,p_l)}{N_{\mathsf{b}}(p_i,p_j{-}\varsigma'\frac{\mathsf{b}^2}2,p_l)}\,F_{\varsigma\varsigma'}(p_i,p_j|p_l)
    \\[0.2cm]
    =F_{\varsigma\varsigma'}^{\textrm{q-per}}(p_i,p_j|p_l), \quad \{l\}=\{1,2,3\}\setminus\{i,j\}.
\end{multline}

\paragraph{Special cases.}
Let \(\mathsf{b}=\sqrt{k}\), \(k \in \mathbb{Z}_{> 0}\). 
Then the fusion matrices \eqref{Fusion_qp} are quasi-periodic with respect to shifts of the momenta $p_i$ by $\mathsf{b}^2$.
Following \cite[App. A]{RoT25}, in this case the double gamma function $\Gamma_{\mathsf{b}}$ is written in terms of Barnes $G$-functions:
\begin{equation}
    \Gamma_{\sqrt{k}}\left(\frac{x}{\sqrt{k}}\right)=(2\pi)^{\frac{x}{2k}{-}\frac{1+k}{4k}}k^{\frac1{4k}\big(x{-}\frac{1+k}2\big)^2}\frac{G_k\big(\frac{1+k}2\big)}{G_k(x)},
\end{equation}
where we introduce the function\footnote{Our $G_k(x)$ differs from $G_{1,k}(x/k)=\prod_{i=0}^{k-1}G\left(\frac{x+i}{k}\right)$ of \cite{RoT25} by elementary prefactors.}
\begin{equation}
    G_k(x)\coloneqq (2\pi)^{\frac{(1-k)(x-k)}{2k}}k^{\frac{(x-1)(x-k)}{2k}} \prod_{i=0}^{k-1}\frac{G\left(\frac{x+i}k\right)}{G\left(\frac{k+i}{k}\right)}.
\end{equation}
The function $G_k$ obeys the following shift relations and has a convenient normalization:
\begin{equation}
    G_k(x{+}k)=\Gamma(x)G_k(x), \qquad G_k(x{+}1)=(2\pi)^{\frac{1-k}{2k}}k^{\frac{x}k-\frac12}\Gamma(x/k)G_k(x),\qquad G_k(k)=1.
\end{equation}
So the general normalization \eqref{our_norm} in this case turns into
\begin{equation}\label{special_norm}
    N_{\mathsf{b}=\sqrt{k}}(p_3, p_2, p_1) =\frac{\prod\limits_{\varsigma,\varsigma'=\pm}G_k\Big(\frac{1+k}2+\varsigma\varsigma'p_3+\varsigma'p_2+\varsigma p_1\Big)}{G_k(1+2p_3)G_k(1-2p_2)G_k(1-2p_1)}.
\end{equation}
Here we have omitted an elementary prefactor, whose presence does not affect the fusion matrices\footnote{This prefactor is $k^{-\frac{3(k-1)^2}{16k}}\big(2\pi k^{\frac{1-k}{2}}\big)^{\frac1k(p_3-p_2-p_1+\frac34(1{-}k))}G_k^{-1}\big(\frac{1+k}2\big)$.}. 
For $k=1$ this formula exactly reproduces normalization \eqref{Nc=1} of \cite{ILT14}. 
Note that this choice of normalization satisfies the condition $N_{\sqrt{k}}(p,\frac{1{-}k}2,p)=\langle p|\mathbf{1}|p\rangle=1$.

\paragraph{Degenerate field renormalization.} 
Above, we use for the degenerate fields $\psi_{p;\pm}(z)$ the same normalization \eqref{special_norm} as for the generic vertex operators.
In the next subsection we use bosonic normalization \(N^\pm\) defined in formula~\eqref{deg_bos_norm}.
These normalizations are related by
\begin{subequations}\label{deg_norm_diff}
    \begin{align}
        &N_{\sqrt{k}}\left(p-\frac{k}2,\frac12{-}k,p\right)=1=N^+(p), \\ 
        &N_{\sqrt{k}}\left(p+\frac{k}2,\frac12{-}k,p\right)=(-1)^k\,\frac{\sin 2\pi p}{\pi}\prod_{j=1}^{k-1}(2p+j)=(-1)^k(k{-}1)!\frac{\sin 2\pi p}{\pi}N^-(p),
    \end{align}
\end{subequations}
where we used \eqref{p_sb}. After this renormalization, we also restore symmetry under permutations of \((-p_3),p_2,p_1\) by using the normalizations \(N^\pm\) when \(p_1=\frac12-k\) or \(p_3=k-\frac12\) instead of \eqref{special_norm}.

\subsection{Periodic vertex operator}\label{ssec:per_VO}
\paragraph{Periodic fusion relations.}
Using the braiding and fusion of the degenerate field $\psi_{\varsigma}(z)$ from Prop.~\ref{prop:fusion}, one can compute the analytic continuation of the conformal block \eqref{correlator_two_deg} around any loop. 
The result is given by sums of conformal blocks of the same form but with shifts of the intermediate momenta. 
Since the momentum shifts in \eqref{fusion_rules} are \(\pm\frac12\mathsf{b}^2\) and every closed loop produces an even number of such shifts, the resulting shifts belong to \(\mathsf{b}^2\mathbb Z\).
The idea of \cite{ILT14} is to construct an infinite sum (Fourier series) of conformal blocks with shifted intermediate momenta that has constant monodromy in \(\SL\).

To implement the above idea, we need to make the fusion relations \eqref{fusion_rules} invariant with respect to $\mathsf{b}^2$ shifts in $p_3$ and $p_1$.
In the normalization chosen in the previous subsection, the matrix \(F\) is already quasi-periodic. 
The fusion relations can be made periodic by including an additional phase factor in the normalization \eqref{special_norm}.
For instance, we can use the normalization
\begin{equation}
    \tilde{N}_{\mathsf{b}}(p_3,p_2,p_1)=e^{\pi\ri\frac{(p_3-p_1)^2-p_2^2}k}N_{\mathsf{b}}(p_3,p_2,p_1).
\end{equation} 

Applying this change together with the change to the bosonic normalization \eqref{deg_norm_diff} turns the fusion relations \eqref{fusion_rules} into
\begin{subequations}
\label{fusion_rules_per}
\begin{align}
\label{fusion_23_per}
        \psi^\bullet_{\varsigma}(z)\, V_{p_3+\frac{\varsigma}2k,\, p_1}^{p_2}[v_2](t) &= \sum_{\varsigma' = \pm}\tilde{F}_{\varsigma \varsigma'}^{[23]}\,V_{p_3, p_1}^{p_2-\frac{\varsigma'}2k}\left[\psi^\bullet_{\varsigma'}(z - t)v_2\right](t),\\
       \label{fusion_21_per}
        V_{p_3, p_1-\frac{\varsigma}2k}^{p_2}[v_2](t)\, \psi^\bullet_{\varsigma}(z) &= \sum_{\varsigma' = \pm}\tilde{F}_{\varsigma \varsigma'}^{[21]}\,V_{p_3, p_1}^{p_2-\frac{\varsigma'}2k}\left[\psi^\bullet_{\varsigma'}(t - z)v_2\right](t),\\
        \label{fusion_13_per}
        \psi^\bullet_{\varsigma}(z)\, V_{p_3+\frac{\varsigma}2k,p_1}^{p_2}[v_2](t) &= \sum_{\varsigma' = \pm}\tilde{F}_{\varsigma \varsigma'}^{[13]}\,V_{p_3, p_1-\frac{\varsigma'}2k}^{p_2}\left[v_2\right](t)\, \psi^\bullet_{\varsigma'}(z),
    \end{align}
\end{subequations}
with the fusion matrices given by
\begin{subequations}\label{fusion_matrices_periodic}
\begin{align}
    \tilde{F}^{[23]}_{\varsigma\varsigma'}(p_3,p_2,p_1)&=\varsigma(\varsigma')^{k-1}\ri^k\, e^{\pi\ri\big(\varsigma(p_3-p_1)-\varsigma' p_2\big)}\tilde{F}_{\varsigma\varsigma'}(-p_3,p_2|p_1),\\ \tilde{F}^{[21]}_{\varsigma\varsigma'}(p_3,p_2,p_1)&=\varsigma^k(\varsigma')^{k-1}\ri\, e^{\pi\ri\big(\varsigma(p_3-p_1)-2\varsigma' p_2\big)}\tilde{F}_{\varsigma\varsigma'}(p_1,p_2|-p_3), \\ \tilde{F}^{[13]}_{\varsigma\varsigma'}(p_3,p_2,p_1)&=\varsigma (\varsigma')^{k-1}\ri^{k}\,e^{\pi\ri\big(\varsigma (2p_3-p_1)-\varsigma' p_3\big)}\tilde{F}_{\varsigma\varsigma'}(p_3,p_1|p_2),
\end{align}
\end{subequations}
where
\begin{equation}
\tilde{F}_{\varsigma\varsigma'}(p_3,p_2|p_1)=\left(\frac{\pi}{(k{-}1)!}\right)^{\frac{\varsigma'-\varsigma}2}\frac{\cos\big(\pi(p_1-\varsigma p_2+\varsigma'p_3)\big)}{\sin^{\frac{1-\varsigma}2}(2\pi p_3)\sin^{\frac{1+\varsigma'}2}(2\pi p_2)}.
\end{equation}
Here, in the fusion relations~\eqref{fusion_rules_per}, we expressed all \(\psi^\circ_\varsigma\) in terms of \(\psi^\bullet_\varsigma\) using the braiding relation~\eqref{braiding}; the corresponding braiding phases are absorbed into \(\tilde F^{[ji]}\).
Notice that these fusion relations turn out to be not just $k$-periodic but even $1$-periodic in $p_1$ and $p_3$.

\begin{prop}\label{prop:fusion_det}
For the fusion matrices \eqref{fusion_matrices_periodic} in the chosen normalization, we have $\det(\tilde{F}^{[ji]})=1$.
\end{prop}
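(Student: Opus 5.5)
The plan is a direct computation of the $2\times2$ determinants, using the fact that every entry of $\tilde F^{[ji]}$ is a row factor times a column factor times a cosine.

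\textbf{Step 1: factor out the diagonal prefactors.} In each of the three formulas \eqref{fusion_matrices_periodic}, the entry $\tilde F^{[ji]}_{\varsigma\varsigma'}$ equals $a_\varsigma\, b_{\varsigma'}\,\tilde F_{\varsigma\varsigma'}(\cdot,\cdot|\cdot)$. The factor $a_\varsigma$ depends only on the row index and $b_{\varsigma'}$ only on the column index. Hence $\tilde F^{[ji]}=D_1\,\tilde F\,D_2$ with $D_1,D_2$ diagonal, and $\det \tilde F^{[ji]}=(a_+a_-)(b_+b_-)\det\tilde F$.

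For $[23]$ we have $a_\varsigma=\varsigma\,\ri^k e^{\pi\ri\varsigma(p_3-p_1)}$ and $b_{\varsigma'}=(\varsigma')^{k-1}e^{-\pi\ri\varsigma' p_2}$. The exponentials cancel pairwise, so $a_+a_-=-\ri^{2k}=-(-1)^k$ and $b_+b_-=(-1)^{k-1}$, whose product is $1$.

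For $[13]$ the same computation applies, with the row exponentials $e^{\pm\pi\ri(2p_3-p_1)}$ and column exponentials $e^{\mp\pi\ri p_3}$ still cancelling in pairs.

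For $[21]$ we have $a_\varsigma=\varsigma^k\,\ri\, e^{\pi\ri\varsigma(p_3-p_1)}$ and $b_{\varsigma'}=(\varsigma')^{k-1}e^{-2\pi\ri\varsigma'p_2}$. This gives $a_+a_-=-(-1)^k$ and $b_+b_-=(-1)^{k-1}$, again with product $1$.

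\textbf{Step 2: strip the remaining factors of $\tilde F$.} The prefactor $\big(\pi/(k-1)!\big)^{(\varsigma'-\varsigma)/2}$ also splits as a column factor times a row factor, and the product of these factors over $\varsigma,\varsigma'=\pm$ is $1$. The factor $\sin^{-(1-\varsigma)/2}(2\pi p_3)$ affects only the row $\varsigma=-$. The factor $\sin^{-(1+\varsigma')/2}(2\pi p_2)$ affects only the column $\varsigma'=+$. Therefore, with the generic arguments $\tilde F(p_3,p_2|p_1)$,
\[
\det\tilde F=\frac{\cos\pi(p_1-p_2+p_3)\cos\pi(p_1+p_2-p_3)-\cos\pi(p_1-p_2-p_3)\cos\pi(p_1+p_2+p_3)}{\sin(2\pi p_3)\sin(2\pi p_2)} .
\]

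\textbf{Step 3: the trigonometric identity.} Apply $\cos A\cos B=\tfrac12\big(\cos(A+B)+\cos(A-B)\big)$ to both products in the numerator. Both sums $A+B$ equal $2\pi p_1$, so those terms cancel. The numerator becomes $\tfrac12\big(\cos 2\pi(p_3-p_2)-\cos2\pi(p_3+p_2)\big)=\sin(2\pi p_2)\sin(2\pi p_3)$, and hence $\det\tilde F=1$.

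This holds for arbitrary arguments, so it covers the argument substitutions $(-p_3,p_2|p_1)$, $(p_1,p_2|-p_3)$ and $(p_3,p_1|p_2)$ used in the three matrices. This proves $\det\tilde F^{[ji]}=1$.

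The only delicate point is the sign bookkeeping in Step 1: the powers $\varsigma^k$, $(\varsigma')^{k-1}$ and $\ri^k$ must combine to $+1$ rather than $-1$ for every $k$. I would double-check these parity counts for $k$ even and for $k$ odd separately.
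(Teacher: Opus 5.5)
Your proposal is correct and is essentially the paper's own argument: the paper's proof is just ``Direct calculation,'' and your computation carries it out explicitly. You write $\tilde F^{[ji]}=D_1\tilde F D_2$ with $\det D_1\det D_2=-(-1)^k\cdot(-1)^{k-1}=1$ in all three cases, and then use $\cos A\cos B=\tfrac12\big(\cos(A+B)+\cos(A-B)\big)$ to get $\det\tilde F(x,y|w)=1$ for arbitrary arguments. The sign bookkeeping you flag already works uniformly in $k$: for $[21]$, $\varsigma^k$ and the single factor $\ri$ give the same row product $-(-1)^k$ as $\varsigma\,\ri^k$ does for $[23]$ and $[13]$, so no separate even/odd check is needed.
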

\begin{proof}
Direct calculation.    
\end{proof}

\paragraph{Periodic vertex operator.}
Now we can sum the resulting invariant fusion relations over the $k$-shifts of $p_3$ and $p_1$.
To describe the RH problem solution in the following Sec.~\ref{ssec:solution}, it is convenient from now on to parametrize vertex operators and modules by
\begin{equation}
    \theta=\frac{1{-}k}2-p=b \alpha ,\quad \Rightarrow \quad  \Delta_{\theta}=\frac{\theta(\theta+k{-}1)}{k}=\alpha(Q-\alpha)=\Delta_{\alpha}. 
\end{equation}
\begin{definition}\label{def:block_operator}
    For a given vertex operator $V^{\theta_2}_{\theta_3,\theta_1}(t)$ of the Virasoro algebra with $\mathsf{b}^2=k\in\mathbb{Z}_{\geq1}$, we refer to the block operator
    \begin{subequations}
        \begin{gather}
            \bar{V}^{\theta_2}_{[\theta_3]_k, [\theta_1]_k}(t): \mathbb{L}_{[\theta_1]_k}\to \mathbb{L}_{[\theta_3]_k}, \qquad \mathbb{L}_{[\theta]_k}\coloneqq\bigoplus_{n\in \mathbb{Z}}\mathbb{L}_{\theta{+}nk}, 
            \\ 
            \bar{V}^{\theta_2}_{[\theta_3]_k, [\theta_1]_k}(t)\Big|_{\mathbb{L}_{\theta_1{+}n_1k}\to \mathbb{L}_{\theta_3{+}n_3k}}\coloneqq V^{\theta_2}_{\theta_3{+}n_3k,\theta_1{+}n_1k}(t)
        \end{gather}
    \end{subequations}
as \emph{periodic vertex operator} that depends on $[\theta_1]_k=\theta_1\!\!\!\mod{k}, \,\,[\theta_3]_k=\theta_3\!\!\!\mod{k}$.
\end{definition}

Similarly, we define \(\bar{\psi}^{\pm}\colon \mathbb{L}_{[\theta_1]_k}\to \mathbb{L}_{[\theta_1+k/2]_k} \) to be the block vertex operator. 
For simplicity, below we omit the bar and write \(\psi^{\pm}\) instead of \(\bar{\psi}^{\pm}\).

Recall that for generic \(\theta\) the Verma module \(\mathrm{M}_\theta\) is isomorphic to irreducible one \(\mathrm{L}_\theta\). Hence for generic \(\theta\) we can also write \(\mathbb{M}_{[\theta]_k}\) for \(\mathbb{L}_{[\theta]_k}\).

In this notation, summing the fusion relations \eqref{fusion_rules_per} over the $k$-shifts of $\theta_1$ and $\theta_3$, we obtain fusion relations on the periodic vertex operators $\bar{V}$:
\begin{subequations}
\label{fusion_rules_block}
\begin{align}
\label{fusion_23_block}
        \psi_{\varsigma}(z)\, \bar{V}_{[\theta_3+k/2]_k,\,[\theta_1]_k}^{\theta_2}[v_2](t) &= \sum_{\varsigma' = \pm}\tilde{F}_{\varsigma \varsigma'}^{[23]}\,\bar{V}_{[\theta_3]_k,\, [\theta_1]_k}^{\theta_2+\frac{\varsigma'}2k}\left[\psi_{\varsigma'}(z - t)v_2\right](t),\\
       \label{fusion_21_block}
        \bar{V}_{[\theta_3]_k,\, [\theta_1+k/2]_k}^{\theta_2}[v_2](t)\, \psi_{\varsigma}(z) &= \sum_{\varsigma' = \pm}\tilde{F}_{\varsigma \varsigma'}^{[21]}\,\bar{V}_{[\theta_3]_k, [\theta_1]_k}^{\theta_2+\frac{\varsigma'}2k}\left[\psi_{\varsigma'}(t - z)v_2\right](t),\\
        \label{fusion_13_block}
        \psi_{\varsigma}(z)\, \bar{V}_{[\theta_3+k/2]_k,\, [\theta_1]_k}^{\theta_2}[v_2](t) &=\bar{V}_{[\theta_3]_k,\, [\theta_1+k/2]_k}^{\theta_2}\left[v_2\right](t)\, \sum_{\varsigma' = \pm}\tilde{F}_{\varsigma \varsigma'}^{[13]} \psi_{\varsigma'}(z).
    \end{align}
\end{subequations}

\subsection{Solution of the modified RH problem}
\label{ssec:solution}

\paragraph{Analytic continuation.}
Consider the analytic continuation of a matrix of correlators
\begin{equation}\label{correlator_two_deg_long}
    (\Phi_{n{-}1})_{\varsigma_0,\varsigma}(z)\coloneqq \Big\langle  \bar{V}^{\theta_n}_{0,[\sigma_{n-1}]_k}(t_n)\; \psi_{-\varsigma_0}(z_0)\psi_{\varsigma}(z)\;  \bar{V}^{\theta_{n-1}}_{[\sigma_{n-1}]_k,[\sigma_{n-2}]_k}(t_{n-1}) \ldots \bar{V}^{\theta_3}_{[\sigma_3]_k,[\sigma_2]_k}(t_3)\bar{V}^{\theta_2}_{[\sigma_2]_k,[\sigma_1]_k}(t_2)\bar{V}^{\theta_1}_{[\sigma_1]_k,0}(t_1)\Big\rangle
\end{equation}
as a matrix function of $z$. Here and below, unless stated otherwise, we set \(\sigma_1=\theta_1\) and \(\sigma_{n{-}1}=1-k-\theta_n\). 
We impose the ordering $\infty{>}|t_n|{>}|t_{n-1}|{>}\dots{>}|t_1|{>}0$ of the punctures, and start from the annulus $|t_n|>|z_0|>|z|>|t_{n-1}|$, i.e. consider radially ordered correlators $(\Phi_{n{-}1})_{\varsigma_0,\varsigma}$. 
Using the fusion relation \eqref{fusion_13_block}, we can analytically continue these correlators to the regions $|t_{i+1}|>|z|>|t_i|,\,i=n{-}2,\ldots,1$, successively expressing $(\Phi_{n{-}1})_{\varsigma_0,\varsigma}(z)$ in terms of the radially ordered correlators\footnote{The product $\prod$ of the vertex operators in this and analogous correlators is considered to be in \textit{decreasing} order.}: 
\begin{equation}
    (\Phi_i)_{\varsigma_0,\varsigma}(z)\coloneqq\bigg\langle \bar{V}^{\theta_n}_{0,[\sigma_{n-1}]_k}(t_n)\;\psi_{-\varsigma_0}(z_0)\left(\prod_{j=n-1}^{i+1}\bar{V}_{[\sigma_j+\frac{k}2]_k,[\sigma_{j-1}+\frac{k}2]_k}^{\theta_j}(t_j)\right)\psi_{\varsigma}(z)\left(\prod_{j=i}^2 \bar{V}_{[\sigma_j]_k,[\sigma_{j-1}]_k}^{\theta_j}(t_j)\right)\bar{V}^{\theta_1}_{[\sigma_1]_k,0}(t_1)\bigg\rangle.
\end{equation}
In matrix form, these expressions are related by $\Phi_i=\Phi_{i-1}\big(\tilde{F}^{[13]}(\sigma_i{+}\frac{k}{2},\theta_i,\sigma_{i-1})\big)^T$.
This repeats condition~\ref{item:circle} of the RH problem Def.~\ref{def:RH_annulus}. 

Each radially ordered correlator $\Phi_i$ in the annulus $|t_i|<|z|<|t_{i+1}|$ is multivalued: it has monodromy $(B^2(\sigma_i))_{\varsigma\varsigma'}=e^{2\pi\ri\varsigma \sigma_i}\delta_{\varsigma\varsigma'}$ due to \eqref{braiding}.
If we introduce cut $[t_i,t_{i+1}]$ to make $\Phi_i$ single-valued, then this monodromy will result in a jump along the cut, as in Condition~\ref{item:segment} of Def.~\ref{def:RH_annulus}.

Furthermore, as was explained at the end of Section~\ref{ssec:conformal_blocks}, we can continue the function $\Phi_1$ into the region \(|z|<|t_1|\). 
This continuation can be made single-valued in the region \(|t_2|>|z|\), except for a jump along the cut \([t_2,t_1]\) equal to the monodromy \((B^2(\sigma_1))_{\varsigma\varsigma}\).

Finally, consider the analytic continuation to the region of large \(z\). 
In view of the degenerate field OPEs \eqref{eq:OPE_deg_diff} and \eqref{eq:OPE_deg_same}, we can multiply the radially ordered correlators by $(z_0-z)^{\frac{3}{2}k-1}$ and remove the singularity at $z=z_0$.
We will do this in the final answer in Theorem~\ref{thm:cm2_linear_solve}. This is not important for local behavior properties studied in the next paragraph.
After that we can continue $\Phi_{n-1}$ to the region \(|z|>|t_n|\) with the jump \((B^2(\sigma_{n{-}1}))_{\varsigma\varsigma}\) on the cut \([t_{n{-}1},t_{n}]\). 

\paragraph{Local behavior.}
Let us now consider the local behavior of $\Phi_i(z)$ and $\Phi_{i-1}(z)$ in a neighborhood of $t_i$ (cf. Condition~\ref{item:local} of Def.~\ref{def:RH_annulus}). We claim that
\begin{multline}\label{eq:Phi_i local}
    \Phi_i(z)=\Phi_{i-1}(z)\big(\tilde{F}^{[13]}(\sigma_i{+}k/2,\theta_i,\sigma_{i-1})\big)^T\\=H_i(z)
    \begin{pmatrix} (z-t_i)^{\theta_i} & 0\\
    0 & (z-t_i)^{1-k-\theta_i} \\
    \end{pmatrix}\big(\tilde{F}^{[23]}(\sigma_i{+}k/2,\theta_i,\sigma_{i-1})\big)^T, 
\end{multline}
where \(H_i(z)\) is a matrix-valued function holomorphic in a neighborhood of $t_i$.
The first equality in~\eqref{eq:Phi_i local}, as was explained above, follows from the fusion relation~\eqref{fusion_13_block}. 
The second equality follows from the fusion relation \eqref{fusion_23_block}, together with the asymptotics of the action of $\psi_{\pm}(z)$ on the highest weight vector.
This action has the form
\begin{equation}
    \psi_+(z)|\theta\rangle=N^+(\theta)z^{\theta}|W_+(z)\rangle, \qquad  \psi_-(z)|\theta\rangle=N^-(\theta)z^{1-k-\theta}|W_-(z)\rangle,
\end{equation}
where $|W_{\pm}(z)\rangle\in \mathbb{L}_{\theta\pm k/2}[[z]]$.

Comparing to condition~\ref{item:local} of Def.~\ref{def:RH_annulus}, the obtained local behavior has an extra term $1-k$ in the exponents of $(z-t_i)$ that, however, does not affect the monodromy. Thus radially ordered correlators $(\Phi_i(z))_{i=1}^{n-1}$ provide a solution of the RH problem with the modified local behavior. Now we state this formally and, furthermore, introduce dependence on the Fenchel--Nielsen angles into the correlators.

\paragraph{Conformal block as a solution of $k$-modified RH problem.} We use the following modification of Def.~\ref{def:RH_annulus}.

\begin{definition}\label{def:RH_modif}
    For a given generic local system $\boldsymbol{M}^{\mathrm{SL}_2}\in\mathcal{L}^{\boldsymbol{\theta}}_n$ with parametrization \eqref{partial_product_param}, the \emph{$k$-modified Riemann--Hilbert problem} asks for a set of holomorphic single-valued functions $\Phi_i(z):K_i\to \mathrm{Mat}_{2\times2}(\mathbb{C})$, $i=1,\ldots,n-1$, that satisfy Conditions~\ref{item:circle} and \ref{item:segment} of Def.~\ref{def:RH_annulus}, with the diagonal factor in condition~\ref{item:local} replaced as follows (cf.~\eqref{RH problem_diss_local}):
    \begin{equation}\label{local_modified}
            \begin{aligned}
                U^\times(t_1)&:\Phi_1^\pm(z) =H_1(z)   \begin{pmatrix}(z-t_1)^{\theta_1} & 0\\ 0 & (z-t_1)^{1-k-\theta_1}\end{pmatrix}C_1^{\pm}, 
                \\
                U^\times(t_i)&:\Phi_{i-1}^\pm(z)=\Phi_i^\pm(z) C_{i,i-1}=H_i(z)   \begin{pmatrix}(z-t_i)^{\theta_i} & 0\\ 0 & (z-t_i)^{1-k-\theta_i}\end{pmatrix}C_i^{\pm}, \quad  i=2,\ldots, n-1, 
                \\
                U^\times(t_n)&:\Phi_{n-1}^\pm(z)=H_n(z)   \begin{pmatrix}(z-t_n)^{\theta_n} & 0\\ 0 & (z-t_n)^{1-k-\theta_n}\end{pmatrix}C_n^{\pm},  
                \\ 
                \textrm{with}\quad &  C_i^+=C_i^-\begin{pmatrix}e^{2\pi\ri\sigma_{i-1}} & 0\\ 0 & e^{-2\pi\ri\sigma_{i-1}}\end{pmatrix},\quad 
                i=1,\ldots, n, \qquad \sigma_0\coloneqq \theta_1, \quad \sigma_n\coloneqq 1-k-\theta_n.
           \end{aligned}
        \end{equation}
\end{definition}

Above we have constructed special solutions of this $k$-modified RH problem for the special local system with $C_{i,i-1}^{-1}=\big(\tilde{F}^{[13]}(\sigma_i{+}k/2,\theta_i,\sigma_{i-1})\big)^T$. This fixed local system can be considered as a reference local system $\check{\boldsymbol{M}}$ from the definition of Fenchel--Nielsen coordinates in Sec.~\ref{ssec:FN}. A solution for a general local system can be obtained by modifying the periodic vertex operators as follows:
\begin{equation}
    \bar{V}_{[\sigma_i]_k,[\sigma_{i-1}]_k}^{\theta_i}(t_i)\to s_i^{-{a_0}/{\sqrt{k}}}\bar{V}_{[\sigma_i]_k,[\sigma_{i-1}]_k}^{\theta_i}(t_i), \quad i=2,\ldots, n-2.
\end{equation}
Indeed, due to the commutation relation
\begin{equation}
    \psi_{\varsigma}(z)s_i^{-{a_0}/{\sqrt{k}}}=s_i^{-{\varsigma}/{2}}\, s_i^{-{a_0}/{\sqrt{k}}}\psi_{\varsigma}(z)
\end{equation}
we obtain that after this modification $C_{i,i-1}^{-1}=\big(\tilde{F}^{[13]}(\sigma_i{+}k/2,\theta_i,\sigma_{i-1})\big)^T\mathrm{diag}(s_i^{-1/2},s_i^{1/2})$, in accordance with \eqref{FN_from_gluing}. 

Putting everything together, for $|t_n|>|z_0|>|t_{n-1}|$ we consider the sequence of functions $\Phi_i:K_i\to \mathrm{Mat}_{2\times2}(\mathbb{C})$, $i=1,\ldots,n-1$, given by the radially ordered correlators of Virasoro vertex operators for $\mathsf{b}^2=k$:
\begin{multline}\label{eq:gen_RH_sol_cft}
    (\Phi_i\big((t_j,\theta_j)_{j=1}^n;(\sigma_j,s_j)_{j=2}^{n-2}\big|z,z_0\big))_{\varsigma_0\varsigma}
    =(-\varsigma)^{k-1}\binom{2k{-}2}{k{-}1}^{-1}\,(z_0-z)^{\frac32k-1}
    \\ 
    \bigg\langle \bar{V}^{\theta_n}_{0,[\sigma_{n-1}]_k}(t_n)\; \psi_{-\varsigma_0}(z_0)\left(\prod_{j=n-1}^{i+1} s_j^{-\frac{a_0}{\sqrt{k}}}\bar{V}_{[\sigma_j+\frac{k}2]_k,[\sigma_{j-1}+\frac{k}2]_k}^{\theta_j}(t_j)\right)\psi_{\varsigma}(z)\left(\prod_{j=i}^2 s_j^{-\frac{a_0}{\sqrt{k}}}\bar{V}_{[\sigma_j]_k,[\sigma_{j-1}]_k}^{\theta_j}(t_j)\right)\bar{V}^{\theta_1}_{[\sigma_1]_k,0}(t_1)\bigg\rangle,
\end{multline}
with $s_{n-1}=1$. 

\begin{theorem}\label{thm:cm2_linear_solve}
    The family of functions \(\big(\Phi_i\big)_{i=1}^{n-1}\) defined in~\eqref{eq:gen_RH_sol_cft} forms a solution of the $k$-modified Riemann-Hilbert problem.
    The corresponding local system belongs to the $\boldsymbol{\theta}$-leaf $\mathcal{L}_n^{\boldsymbol{\theta}}$ and has the Fenchel--Nielsen lengths $(\sigma_j)_{j=2}^{n-2}$ and angles $(s_j)_{j=2}^{n-2}$. 
    
\end{theorem}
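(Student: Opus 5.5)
We check the conditions of Definition~\ref{def:RH_modif} one at a time, using only the periodic fusion relations~\eqref{fusion_rules_block}, the braiding~\eqref{braiding} and the local action of $\psi_\pm$ on highest weight vectors. First we fix the reference case $s_j=1$. The prefactor $(-\varsigma)^{k-1}\binom{2k-2}{k-1}^{-1}(z_0-z)^{\frac32k-1}$ depends on $z$ only through a factor that is holomorphic and nonvanishing away from $z_0$. By~\eqref{eq:OPE_deg_diff} and~\eqref{eq:OPE_deg_same} it removes the singularity at $z=z_0$, so each $\Phi_i$ is holomorphic and single-valued on $K_i$ once the cuts are made.

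\textbf{Circle condition.} In the annulus containing $|z|=|t_i|$, we apply~\eqref{fusion_13_block} to $\psi_\varsigma(z)\bar V^{\theta_i}(t_i)$. The remaining operators are spectators, and the $[\cdot]_k$ labels match thanks to the half-period shifts $\sigma_j+\frac k2$ built into~\eqref{eq:gen_RH_sol_cft}. This gives $\Phi_i=\Phi_{i-1}\big(\tilde F^{[13]}(\sigma_i{+}\tfrac k2,\theta_i,\sigma_{i-1})\big)^T$, which is condition~\ref{item:circle} with $C_{i,i-1}^{-1}$ equal to this transpose. By Prop.~\ref{prop:fusion_det} these matrices lie in $\SL$.

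\textbf{Segment condition.} Inside $K_i$, the field $\psi_\varsigma(z)$ acts on $\mathbb{L}_{[\sigma_i]_k}$. By~\eqref{braiding}, continuing around the circle multiplies the $\varsigma$-column by $e^{2\pi\ri\varsigma\sigma_i}$. This factor is independent of the shift $n k$ because $k\in\mathbb{Z}$, so it is well defined on the whole direct sum. It produces exactly the diagonal jump of condition~\ref{item:segment}, and the prefactor contributes no jump. For $K_1$ and $K_{n-1}$ we use the continuation through the vacuum-adjacent pants described at the end of Section~\ref{ssec:conformal_blocks}: the jump on $[t_1,t_2]$ is $B^2(\sigma_1)$ with $\sigma_1=\theta_1$, and similarly at $t_n$ with $\sigma_{n-1}=1-k-\theta_n$.

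\textbf{Local behavior.} Near $t_i$ we use~\eqref{fusion_23_block} to rewrite $\psi_\varsigma(z)\bar V^{\theta_i}(t_i)$ as $\sum_{\varsigma'}\tilde F^{[23]}_{\varsigma\varsigma'}\bar V^{\theta_i+\varsigma'k/2}[\psi_{\varsigma'}(z-t_i)|\theta_i\rangle](t_i)$. We then insert $\psi_+(w)|\theta\rangle=N^+w^{\theta}|W_+(w)\rangle$ and $\psi_-(w)|\theta\rangle=N^-w^{1-k-\theta}|W_-(w)\rangle$. The series in $w=z-t_i$ converge and give a matrix $H_i(z)$ holomorphic at $t_i$; the constants $N^\pm$ and the holomorphic prefactor are absorbed into it. This yields~\eqref{eq:Phi_i local}, that is~\eqref{local_modified} with $C_i=(\tilde F^{[23]})^T$. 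The relation $C_i^+=C_i^-B^2(\sigma_{i-1})$ is inherited from the segment jump. The endpoints $t_1$ and $t_n$ are handled the same way, with the vacuum fusion rule (Example~\ref{ex:fusion 11}) fixing the intermediate module.

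\textbf{Identifying the local system.} By Prop.~\ref{prop:local_disks}, adapted to the modified exponents, the $\Phi_i$ glue to a multivalued $\Phi$. Its monodromy around $t_i$ is conjugate to $\mathrm{diag}(e^{2\pi\ri\theta_i},e^{2\pi\ri(1-k-\theta_i)})$, which equals $\mathrm{diag}(e^{2\pi\ri\theta_i},e^{-2\pi\ri\theta_i})$ since $k\in\mathbb{Z}$. Hence $\mathrm{Tr}\,M_i=2\cos2\pi\theta_i$, and the jump on $(t_i,t_{i+1})$ gives $\mathrm{Tr}\,M_{[i:1]}=2\cos2\pi\sigma_i$. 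So the local system lies in $\mathcal L_n^{\boldsymbol\theta;\boldsymbol\sigma}$, and with $s_j=1$ it is the reference system $\check{\boldsymbol M}$.

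For general $s_j$ we commute $\psi_\varsigma(z)$ through $s_j^{-a_0/\sqrt k}$ using $\psi_\varsigma s^{-a_0/\sqrt k}=s^{-\varsigma/2}s^{-a_0/\sqrt k}\psi_\varsigma$. This changes only the circle gluing, to $C_{j,j-1}^{-1}=(\tilde F^{[13]})^T\mathrm{diag}(s_j^{-1/2},s_j^{1/2})$, which matches~\eqref{FN_from_gluing}. The jumps and the local exponents are unaffected, since the diagonal factors commute with $B^2$, so the $s_j$ are exactly the Fenchel--Nielsen angles.

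\textbf{Main obstacle.} The delicate point is bookkeeping rather than any conceptual step. One must check that the normalization changes (the phase $\tilde N$, the bosonic $N^\pm$, the braiding phases absorbed into $\tilde F$) make~\eqref{fusion_rules_block} hold exactly on the direct sums, with coefficients independent of the lattice shift, and with labels $[\sigma_j+\frac k2]_k$ consistent on both sides of every fusion. One must also confirm that the factor $(-\varsigma)^{k-1}$ and the transposition conventions do not spoil $C_{i,i-1}\in\SL$.
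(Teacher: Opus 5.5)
Your proposal is correct and follows essentially the same route as the paper: circle gluing from \eqref{fusion_13_block} (with $\det\tilde F^{[13]}=1$), segment jumps from the braiding phase $B^2(\sigma_i)$ (insensitive to the shifts by $k\mathbb{Z}$), local exponents from \eqref{fusion_23_block} together with the highest-weight action of $\psi_\pm$, and the Fenchel--Nielsen angles from commuting $\psi_\varsigma$ through $s_j^{-a_0/\sqrt{k}}$. Two small remarks: the paper stresses that the argument is conditional on convergence of the lattice sums and analyticity of the blocks, so state these as assumptions rather than assert them; and your worry about the factor $(-\varsigma)^{k-1}$ is harmless, since it only conjugates all gluing matrices by the constant matrix $\mathrm{diag}((-1)^{k-1},1)$, which preserves determinants and commutes with the diagonal jumps.
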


    Note that this theorem is conditional, since it is based on the assumed existence of the conformal blocks as analytic functions; see Sec.~\ref{ssec:conformal_blocks}.
    Moreover, even if Virasoro conformal blocks are analytic, this does not immediately imply the analyticity of~\eqref{eq:gen_RH_sol_cft}, since the latter is an infinite series of such blocks.
    We conjecture that this series converges for \(k\geq 1\).

\begin{remark}
    In the construction above, we put $z_0\in K_{n{-}1}$ for convenience of exposition. 
    The radially ordered correlator as a function of $z_0$ can be analytically continued to other $K_i$'s in the same way as it was continued as a function of $z$.
\end{remark}

Notice that by moving punctures \(t_1,\dots,t_n\) while keeping \(\theta_i,\sigma_i,s_i\) fixed, we preserve the monodromy, thus construct an isomonodromic deformation of the solution to the modified RH problem.

The $1$-periodicity of the fusion matrices $\tilde{F}^{[ji]}(\theta_3,\theta_2,\theta_1)$ in $\theta_3$ and $\theta_1$ implies
\begin{corr}\label{cor:generic solution}
    Generic linear combinations of the solutions \eqref{eq:gen_RH_sol_cft}
    \begin{equation}
        \Phi_i(\Lambda)=\sum_{l_2,\ldots l_{n-2}=0}^{k-1}\Lambda_{l_2,\ldots, l_{n-2}}\Phi_i\big((t_j,\theta_j)_{j=1}^n;(\sigma_j+l_j,s_j)_{j=2}^{n-2}\big|z,z_0\big),
    \end{equation}
    depending on a tensor $\Lambda\in(\mathbb{C}^{k})^{\otimes n-3}$,
    also provide solutions of the $k$-modified RH problem of Theorem~\ref{thm:cm2_linear_solve}.
\end{corr}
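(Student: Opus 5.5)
The plan is to reduce the corollary to two observations: the $k$-modified RH problem of Def.~\ref{def:RH_modif} is linear in $\Phi$ once the gluing data are fixed, and the shifted solutions $\Phi_i\big((t_j,\theta_j);(\sigma_j+l_j,s_j)\big|z,z_0\big)$ with $0\le l_j\le k-1$ all carry the \emph{same} gluing data.

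First I will check which data enter Def.~\ref{def:RH_modif}. There are the circle gluings $C_{i,i-1}$, the jumps $\mathrm{diag}(e^{2\pi\ri\sigma_i},e^{-2\pi\ri\sigma_i})$ along the segments, the exponents $\theta_i$ and $1-k-\theta_i$ at the punctures, and the matrices $C_i^\pm$ in the local behaviour~\eqref{local_modified}. Conditions~\ref{item:circle} and~\ref{item:segment} are linear homogeneous relations in $\Phi_i$ with fixed right multipliers. The local condition says that $\Phi_i(z)\,(C_i^\pm)^{-1}\,\mathrm{diag}\big((z-t_i)^{-\theta_i},(z-t_i)^{k-1+\theta_i}\big)$ is holomorphic near $t_i$, which is also a linear condition. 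So any linear combination of solutions sharing this data is again a solution, with $H_i$ replaced by the same linear combination of the $H_i$'s. Here I only require the ``holomorphic single-valued'' property; Def.~\ref{def:RH_modif} does not demand invertibility, so nothing is lost by taking combinations.

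Next I verify that the data do not change under $\sigma_j\mapsto\sigma_j+l_j$ for $j=2,\dots,n-2$.
\begin{itemize}
\item The jumps $e^{\pm2\pi\ri\sigma_j}$ are invariant under integer shifts.
\item The exponents at the punctures involve only $\theta_i$, which is untouched.
\item By the proof of Theorem~\ref{thm:cm2_linear_solve}, the circle gluings are $C_{i,i-1}^{-1}=\big(\tilde F^{[13]}(\sigma_i+\tfrac k2,\theta_i,\sigma_{i-1})\big)^T\mathrm{diag}(s_i^{-1/2},s_i^{1/2})$.
\item The local matrices $C_i$ come from $\big(\tilde F^{[23]}(\sigma_i+\tfrac k2,\theta_i,\sigma_{i-1})\big)^T$, as in~\eqref{eq:Phi_i local}, multiplied by the diagonal $s$-factors.
\end{itemize}
It therefore suffices to know that $\tilde F^{[13]}$ and $\tilde F^{[23]}$ of~\eqref{fusion_matrices_periodic} are $1$-periodic in their first and third arguments. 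This is the $1$-periodicity already noted after~\eqref{fusion_matrices_periodic}. It can be checked directly: the cosines and sines in $\tilde F_{\varsigma\varsigma'}$ change sign in pairs under an integer shift, and the exponential prefactors $e^{\pi\ri\varsigma(\cdot)}$ supply compensating signs. The diagonal $s$-factors are also unchanged, so the Fenchel--Nielsen angles are the same. Applying Theorem~\ref{thm:cm2_linear_solve} to each shifted tuple thus gives solutions of one and the same $k$-modified RH problem, and linearity finishes the proof.

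The main obstacle is the periodicity check, because of the sign bookkeeping. Under $\sigma\mapsto\sigma+1$, a factor such as $e^{\pi\ri\varsigma\sigma}$ gains $-1$, and this must cancel exactly against the sign of $\cos\pi(\dots\pm\sigma)$ in the same entry. It must not produce a $\varsigma$- or $\varsigma'$-dependent sign, since that would change the gluing matrix by a diagonal conjugation $\mathrm{diag}(\pm1,\pm1)$. I would verify this entry by entry for $\tilde F^{[13]}$ and $\tilde F^{[23]}$. If only a global sign $-\mathbf 1$ appears, it is harmless: it lies in $\mathbf Z_{\mathrm{SL}_2}$ and can be absorbed into $\Lambda$ and the branch choice of $s_i^{1/2}$.
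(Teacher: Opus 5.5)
Your argument is the paper's argument: the conditions of Def.~\ref{def:RH_modif} are linear once the gluing data are fixed, and $\tilde F^{[13]}$ and $\tilde F^{[23]}$ are exactly $1$-periodic in their first and third arguments, so all the shifted solutions share the same data. The periodicity holds entry by entry, because the exponential prefactor and the cosine flip sign together; so the sign fallback you sketch is never needed, and as stated it would not work anyway, since a sign in a single gluing matrix cannot be absorbed into $\Lambda$.

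The one thing missing is the role of ``generic''. The local condition implicitly requires $H_i(t_i)$ to be invertible; this is what gives $\det\Phi$ a pole of exact order $k-1$ in Prop.~\ref{prop:conn_det_props}. Arbitrary combinations need not preserve this; for example, a combination can vanish identically if the shifted solutions are linearly dependent, which the paper expects. To close the gap, note that $\det H_i(t_i)$ is a polynomial in $\Lambda$. It is nonzero at the coordinate tensors by Theorem~\ref{thm:cm2_linear_solve}, hence nonzero for generic $\Lambda$.
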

In general these solutions are expected to be linearly dependent. In the following subsection we speculate on the actual dimension of the solution space for the $k$-modified RH problem.
\begin{remark}
    Notice that shifts of $\theta_3,\theta_1$ by \(k\) and by \(1\) correspond to the shifts produced by the screening currents $e^{2b\varphi(w)}$ and $e^{2b^{-1}\varphi(w)}$, respectively.
\end{remark}

\paragraph{Normalization condition and tau function.}
The degenerate field OPEs \eqref{eq:OPE_deg_diff} and \eqref{eq:OPE_deg_same} imply the following asymptotic behavior of the solution \eqref{eq:gen_RH_sol_cft}:
\begin{equation}\label{Phi_to_tau}
    \Phi_{n{-}1}(z,z_0)\sim\langle\bar{\mathsf{V}}_n(\Lambda)\rangle\,\mathbf1_{2\times2}, \qquad z\to z_0,
\end{equation}
where we have introduced the correlator of a product of periodic vertex operators
\begin{equation}
\bar{\mathsf{V}}_n(\Lambda)\coloneqq\sum_{l_2,\ldots l_{n-2}=0}^{k-1}\Lambda_{l_2,\ldots, l_{n-2}}\bar{V}^{\theta_n}_{0,[\sigma_{n-1}]_k}(t_n)\left(\prod_{i=n-1}^2 s_i^{-\frac{a_0}{\sqrt{k}}}\;\bar{V}_{[\sigma_i+l_i]_k,[\sigma_{i-1}+l_{i-1}]_k}^{\theta_i}(t_i)\right)\bar{V}^{\theta_1}_{[\sigma_1]_k,0}(t_1)
\end{equation}
Below we consider this correlator as a function on $(t_i)_{i=1}^n$. 
For $k=1$ ($c=1$), it gives a formula for the $n$-point isomonodromic tau function. 
Thus we define its analogue as follows.
\begin{definition}
For $\mathsf{b}^2=k$, we refer to correlator
\begin{equation}\label{eq:tau_def_gener}
    \uptau\big(\Lambda;(\theta_i)_{i=1}^n;(\sigma_i,s_i)_{i=2}^{n-2}\big|(t_i)_{i=1}^n\big)\coloneqq\langle\bar{\mathsf{V}}_n(\Lambda)\rangle
\end{equation}
as $n$-point $k$-tau function.
\end{definition}

\paragraph{Linear problem and determinant.}
    For a solution $\big(\Phi_i(z)\big)_{i=1}^{n-1}$ of the $k$-modified RH problem, consider determinants $\det\Phi_i(z):K_i\to\mathbb{C}$ and connections $\Phi_i'(z)\Phi_i^{-1}(z):K_i\to\mathrm{Mat}_{2\times2}(\mathbb{C})$.
\begin{prop}\label{prop:conn_det_props}
\leavevmode
\begin{itemize}
   \item[i)] Determinants $\det\Phi_i(z)$ are glued to a holomorphic function on $\RSmp$. At each puncture $t_i$, the resulting function $\det\Phi(z)$ has a pole of order $k{-}1$. It therefore has $n(k{-}1)$ zeros on $\RSmp$, counted with multiplicity.
   
   \item[ii)] Connections $\Phi_i'(z)\Phi_i^{-1}(z)$ are glued into a meromorphic matrix function $A(z)$ on $\RSmp$. Besides simple poles at the punctures $t_i$, poles of $A(z)$ occur at the zeros of $\det\Phi(z)$.
\end{itemize}
\end{prop}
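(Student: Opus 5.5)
We prove both parts by using the jump conditions of the $k$-modified RH problem (Def.~\ref{def:RH_modif}) together with the fact that all gluing data lie in $\SL$.

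\textbf{Gluing of the determinants.} On each circle cut we have $\Phi_{i-1}=\Phi_iC_{i,i-1}$ with $C_{i,i-1}\in\SL$, so $\det\Phi_{i-1}=\det\Phi_i$ there. On each segment cut the jump matrix is $\mathrm{diag}(e^{2\pi\ri\sigma_i},e^{-2\pi\ri\sigma_i})$, which has unit determinant, so $\det\Phi_i^+=\det\Phi_i^-$. Hence the functions $\det\Phi_i$ glue, via the identity theorem as in the proof of Prop.~\ref{prop:local_disks}, to a single-valued holomorphic function $\det\Phi$ on $\RSmp$.

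\textbf{Local behavior of the determinant.} Near $t_i$ the local form \eqref{local_modified} gives
\[
\det\Phi(z)=\det H_i(z)\,(z-t_i)^{\theta_i+1-k-\theta_i}\det C_i^\pm=\det H_i(z)\,(z-t_i)^{1-k}.
\]
For the order to be exactly $k-1$ we need $\det H_i(t_i)\neq0$, i.e.\ $H_i$ invertible at $t_i$, as in Prop.~\ref{prop:Phi_puncture}. This is the genuinely delicate point. If the definition only requires $H_i$ holomorphic, I would impose invertibility at $t_i$ as part of the local data. For the CFT solution \eqref{eq:gen_RH_sol_cft}, I would check it from \eqref{eq:Phi_i local}: the leading coefficients of $H_i$ come from $N^\pm$ times the fusion matrix $\tilde F^{[23]}$, which is invertible by Prop.~\ref{prop:fusion_det}, with $N^\pm\neq0$ for generic $\theta$.

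\textbf{Counting zeros.} The point $t_n$ may be $\infty$; in that case one uses the local coordinate at $\infty$, or a M\"obius transformation. In either case $\det\Phi$ is a rational function on $\mathbb{CP}^1$ whose only poles are at the $n$ punctures, each of order $k-1$. The total pole order is therefore $n(k-1)$, and since a rational function has as many zeros as poles counted with multiplicity, it has $n(k-1)$ zeros. None of these zeros lie at the $t_i$, so all of them are in $\RSmp$.

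\textbf{The connection.} All jump matrices of $\Phi$ are constant and multiply from the right. Hence $\Phi_i'\Phi_i^{-1}$ is unchanged across both types of cut, and the connections glue to a single-valued function $A(z)$. It is meromorphic on $\RSmp$, with poles only where $\det\Phi=0$, because $\Phi^{-1}=\mathrm{adj}\,\Phi/\det\Phi$.

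Near $t_i$, write $D(z)=\mathrm{diag}\big((z-t_i)^{\theta_i},(z-t_i)^{1-k-\theta_i}\big)$. Then
\[
A=H_i'H_i^{-1}+H_i\,D'D^{-1}H_i^{-1},
\qquad
D'D^{-1}=\frac{\mathrm{diag}(\theta_i,\,1-k-\theta_i)}{z-t_i}.
\]
Since $H_i$ is invertible near $t_i$, this shows that $A$ has a simple pole at $t_i$.
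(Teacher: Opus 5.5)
Your argument is correct and is essentially the paper's proof. The $\mathrm{SL}_2$-valued gluing matrices make $\det\Phi$ single-valued, the local form \eqref{local_modified} gives poles of order $k-1$, degree counting gives the $n(k-1)$ zeros, and constant right factors glue $\Phi_i'\Phi_i^{-1}$ into $A(z)$, with poles at the zeros of $\det\Phi$ and simple poles at the $t_i$. The invertibility of $H_i(t_i)$ that you flag is exactly what the paper takes for granted through the $\mathrm{GL}_2$-valued $H_i$ of Prop.~\ref{prop:Phi_puncture}.

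One correction to your optional check for the CFT solution. In \eqref{eq:Phi_i local} the matrix $\tilde F^{[23]}$ is the constant right factor (it plays the role of $C_i$), not part of $H_i$, so $\det H_i(t_i)\neq0$ does not follow from Prop.~\ref{prop:fusion_det} and $N^\pm\neq0$. It is the nondegeneracy of the $z_0$-dependent $2\times2$ matrix of correlators $\langle\cdots\psi_{-\varsigma_0}(z_0)\cdots\bar V^{\theta_i+\varsigma'k/2}(t_i)\cdots\rangle$, and this can fail for special $z_0$. For instance, for $k=2$ the proof of Prop.~\ref{prop:det_answ} makes $\operatorname{Res}_{z=t_j}\det\Phi$ proportional to $h_j(z_0)$, which generically also vanishes at points other than $t_j$. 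So the statement should be read for generic $z_0$.
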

\begin{proof}
    Since all fusion and local monodromy (i.e. \(\tilde{F}\) and \(B^2\)) matrices belong to \(\SL\), functions \(\det\Phi_i(z)\) are glued into a single-valued function $\det\Phi(z)$. Its poles follow from the local behavior~\eqref{local_modified}. The zeroes follow from the degree counting. 

    Connections $\Phi_i'(z)\Phi_i^{-1}(z)$ are glued into a matrix function $A(z)$ by conditions~\ref{item:circle} and~\ref{item:segment} of Def.~\ref{def:RH_annulus}. Its singularities follow from the zeros of \(\det\Phi\) and the singular behavior of \(\Phi\) at the punctures \(t_i\).
\end{proof}

\subsection{Dimension of the $k$-modified RH problem solution space}
\label{ssec:dimension}

\paragraph{Estimate for $k>1$.}

As stated before (Prop.~\ref{prop:RH_unique}), the standard RH problem solution is unique up to a normalization. For $k>1$ we cannot follow this proof as it requires invertibility of the RH problem solution $\Phi(z)$ on $\RSmp$, while $\det\Phi(z)$ has $n(k{-}1)$ zeroes (recall Prop.~\ref{prop:conn_det_props}). 
Yet we can discuss the number of parameters for the $k$-modified RH solution. 

Generically, assume that zeroes $\{w_i\}_{i=1}^{n(k-1)}$ of the determinant function $\det\Phi(z)$ are simple. Then the connection matrix $A(z)$ for the $k$-modified RH solution can be written as
\begin{equation}\label{k_connection_form}
    A(z) = \sum_{i=1}^n\frac{A_i}{z- t_i} + \sum_{i=1}^{n(k{-}1)}\frac{B_i}{z - w_i}, \qquad A_i \sim \mathrm{diag}(\theta_i,1{-}k-\theta_i), \quad B_i \sim \mathrm{diag}(1,0).
\end{equation}
The space of functions of form \eqref{k_connection_form} is a submanifold of the product
\begin{equation}\label{eqn:orb_prod}
\prod_{i=1}^n\mathcal{O}_{(\theta_i,1{-}k-\theta_i)} \times \mathrm{Sym}^{n(k-1)}\left(\mathcal{O}_{(1,0)} \times \RSmp\right),
\end{equation}
where $\mathcal{O}_{(\lambda_1,\lambda_2)}$  is the adjoint orbit of matrix $\mathrm{diag}(\lambda_1,\lambda_2)$. Such orbit is two-dimensional for $\lambda_1\neq\lambda_2$. So, for generic $\theta_i$'s, this product has dimension $n(3k{-}1)$. 

Connections $A(z)$ of the form \eqref{k_connection_form}, coming from the $k$-modified RH problem solution, are subjects to several other conditions.
First, the regularity of $A(z)$ at $\infty$ is provided by the vanishing of the corresponding residue, i.e. $\sum_{i=1}^n A_i + \sum_{i=1}^{n(k{-}1)} B_i=0$; this reduces the dimension by $2$. Prescribing the monodromy of the normalized solution around punctures $(t_i)_{i=1}^n$ reduces the dimension by $2(n{-}1)$. Imposing the absence of the (non-diagonalizable) monodromy around zeroes $\{w_i\}_{i=1}^{n(k-1)}$ reduces the dimension by $n(k{-}1)$. In total, these conditions reduce the dimension to $(3k{-1})n-2-2(n{-}1)-n(k{-}1)=2n(k{-}1)$. As expected, for $k=1$ this already gives zero.

For $k>1$ additional conditions follow from the refined asymptotic behavior of the $k$-modified RH problem solution near the base point $z_0$. Namely, using \eqref{deg_bosonic} for $z\to z_0$ we obtain (c.f. \eqref{Phi_to_tau}):
\begin{subequations}
\begin{align}
    (\Phi_{n-1}(z,z_0))_{\pm,\pm}&= \langle\bar{\mathsf{V}}_n(\Lambda)\rangle+\frac{3k{-}2}{2c(k)}\langle T(z_0) \bar{\mathsf{V}}_n(\Lambda) \rangle(z_0-z)^2+O\big((z_0-z)^3\big),\\
    (\Phi_{n-1}(z,z_0))_{\pm,\mp}&=(z_0-z)^{2k-1}\mathrm{reg},
\end{align}
\end{subequations}
where $T(z_0)$ to be radially ordered inside the correlator.
This yields additional $4k{-}1$ conditions on the connection matrix $A(z)$, namely   \begin{subequations}\label{eq:asympt_conditions}
    \begin{align}
        &A_{\pm,\mp}^{(m)}(z_0) = 0,\qquad 0\leq m \leq 2k{-}3,\\
    \label{eq:asympt_pt2}
        &A_{\pm,\pm}(z_0)=0, \qquad  A_{+,+}'(z_0) = A_{-,-}'(z_0).
    \end{align}
    \end{subequations}
If all these conditions above are independent, then the dimension of the solution space is $2(n{-}3)(k{-}1)+ 2k{-}5$. For $k=2$, $n=3$ this gives $-1$, whereas in the next subsection we present the solution in this case explicitly. Thus we expect also in general that the conditions are dependent and the actual dimension is greater. We continue with the proof of uniqueness of the solution in the latter case.

    \paragraph{Proof of uniqueness in case of $n=3$ punctures for $k=2$.}
    
    \begin{prop}\label{prop:gRH problem_unique}
        For a given generic local system on $\mathbb{CP}^1\backslash \{t_i\}_{i=1}^3$ there exists a unique solution $\Phi(z)$ of the $2$-modified RH problem with additional conditions
        \begin{equation}\label{eq:Phi_asympt_2}
        \Phi'(z_0) = 0,\qquad \Phi''(z_0)\in \mathbb{C}\cdot \mathbf1_{2\times 2}.
        \end{equation}
    \end{prop}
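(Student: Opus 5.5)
Existence comes from the explicit construction of Theorem~\ref{thm:cm2_linear_solve} for $n=3$, $k=2$, normalized appropriately. The heart of the statement is uniqueness, and I plan to prove it by a Liouville-type argument applied to a ratio of two solutions. The key structural fact is Prop.~\ref{prop:conn_det_props}: for $k=2$, $n=3$, $\det\Phi$ is a rational function with simple poles at $t_1,t_2,t_3$ and, counted with multiplicity, three zeros $w_1,w_2,w_3$.

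\textbf{Existence.} For $n=3$ there are no Fenchel--Nielsen parameters, so \eqref{eq:gen_RH_sol_cft} gives a solution. Its expansion at $z_0$ for $k=2$ reads
\[
\Phi_{\pm\pm}=\tau+\tfrac{3k-2}{2c}\langle T\rangle(z_0-z)^2+\dots,\qquad \Phi_{\pm\mp}=O\big((z_0-z)^{3}\big).
\]
This already gives $\Phi'(z_0)=0$ and $\Phi''(z_0)\in\mathbb{C}\cdot\mathbf1_{2\times2}$. Alternatively, I would check these conditions on the explicit hypergeometric solution of Prop.~\ref{prop:explicite_cb}.

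\textbf{Uniqueness, step 1: the ratio.} Let $\Phi,\tilde\Phi$ be two solutions with the same monodromy data and the conditions \eqref{eq:Phi_asympt_2}. Consider $R(z)=\tilde\Phi(z)\Phi(z)^{-1}$. The jumps on the cuts and circles cancel, so $R$ is single-valued on $\mathbb{CP}^1\setminus\{t_i\}$ and meromorphic there.

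\textbf{Step 2: behaviour at the punctures.} Near $t_i$ both solutions have the form $H_i(z)\,\mathrm{diag}\big((z-t_i)^{\theta_i},(z-t_i)^{-1-\theta_i}\big)C_i$ with the same $C_i$, up to a left diagonal ambiguity, which a generic $\theta_i$ forces to be harmless. Hence
\[
R=\tilde H_i H_i^{-1},\qquad H_i^{-1}=\mathrm{adj}(H_i)/\det H_i .
\]
Since $\det\Phi=\det H_i\,(z-t_i)^{-1}$ has a simple pole at $t_i$, $\det H_i$ is nonvanishing there. So $R$ is holomorphic at each $t_i$. The remaining possible poles of $R$ are at the zeros $w_j$ of $\det\Phi$, each at most simple, with residue of rank one.

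\textbf{Step 3 (main obstacle): excluding the poles at $w_j$.} I would write
\[
R(z)=R_\infty+\sum_j \frac{r_j}{z-w_j},
\]
or equivalently $\tilde\Phi\,\mathrm{adj}\,\Phi=R\det\Phi$. The latter is a single-valued matrix function whose entries are rational with at most simple poles at $t_i$. Its constraint is that it must vanish at the $w_j$ in the directions given by the kernel of $\Phi(w_j)$.

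I would first try the following parameter count. The matrix $P(z):=\tilde\Phi\,\mathrm{adj}\,\Phi\cdot\prod_i(z-t_i)$ is a polynomial matrix of degree at most $2$, by growth at infinity. That gives $12$ unknowns. They are subject to:
\begin{itemize}
\item the vanishing conditions $P(w_j)\xi_j=0$, where $\xi_j$ spans the kernel of $\mathrm{adj}\,\Phi(w_j)$, up to the rank structure: $6$ conditions;
\item the conditions at $z_0$, where $\mathrm{adj}\,\Phi(z_0)$ and $\tilde\Phi(z_0)$ are both scalar and the first derivatives vanish: these fix the expansion of $P$ at $z_0$ to second order up to scalars.
\end{itemize}
The expected outcome is that $P$ is forced to be $\lambda\det\Phi\prod_i(z-t_i)\cdot\mathbf1_{2\times2}$. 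This gives $R=\lambda\mathbf1_{2\times2}$, so $\tilde\Phi=\lambda\Phi$, which is uniqueness up to normalization.

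The hard point is showing that these linear conditions have full rank for a generic local system. The dimension estimate in Sec.~\ref{ssec:dimension} suggests the conditions are degenerate, so a direct count may fail. As a fallback I would use the explicit solution of Prop.~\ref{prop:explicite_cb}. With it I can compute $w_j$ and $\ker\Phi(w_j)$ in hypergeometric terms and verify nondegeneracy of the $12\times12$ system at one convenient point of parameter space. Genericity then extends the result by analyticity.
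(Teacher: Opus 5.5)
Your strategy is workable, but the parameter count in Step~3 is wrong, and as stated the proposed check would be meaningless. Since $\det\Phi=q(z)/\prod_i(z-t_i)$ with $q$ a cubic vanishing at the $w_j$, you have $R=\tilde\Phi\Phi^{-1}=P/q$. Boundedness of $R$ at $\infty$ then gives only $\deg P\le 3$, not $\deg P\le 2$. Your own expected answer $P=\lambda q\,\mathbf 1_{2\times2}$ is cubic. So a degree-$\le 2$ ansatz with $12$ unknowns excludes exactly the solution you want to single out.

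A cleaner parametrization is $R=R_\infty+\sum_{j=1}^3 r_j/(z-w_j)$ with $r_j\Phi(w_j)=0$. The correct condition is that $P(w_j)$ kills $\mathrm{Im}\,\Phi(w_j)=\ker\mathrm{adj}\,\Phi(w_j)$, as in your bullet, not the kernel of $\Phi(w_j)$. This parametrization is valid when the $w_j$ are simple and $\Phi(w_j)$ has rank one, which belongs in your genericity assumption. It gives $4+3\cdot 2=10$ unknowns. The conditions $R(z_0)=\mathbf 1$, $R'(z_0)=0$, $R''(z_0)\in\mathbb C\cdot\mathbf 1$ give $4+4+3=11$ affine equations. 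The system is overdetermined by one, in agreement with the count $2n(k-1)-(4k-1)=-1$ of Sec.~\ref{ssec:dimension}. It is consistent because $R=\mathbf 1$ solves it. Uniqueness is exactly the statement that the homogeneous system has rank $10$, which, as you say, can only be settled by explicit computation.

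With this fixed, your argument runs parallel to the paper's but uses a different reference solution. The paper compares the $2$-modified $\Phi$ with the standard ($k=1$) solution $\Phi^{(k=1)}$ having the same monodromy and exponents. Since $\Phi^{(k=1)}$ is invertible on all of $\mathbb{CP}^1\setminus\{t_i\}$, the ratio $R=\Phi(\Phi^{(k=1)})^{-1}$ has at most simple poles, and only at the \emph{known} points $t_i$. Hence $R=\mathbf 1+K(z)/\prod_i(z-t_i)$ with $K$ cubic and $K(z_0)=0$. The modified local behaviour becomes two linear conditions on $K(t_i)$ per puncture. The conditions at $z_0$ become $8$ equations involving $A_{c=1}(z_0)$, $A_{c=1}'(z_0)$ and one extra scalar $\nu$. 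The resulting $14$ equations in $13$ unknowns are checked to have a unique solution.

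That route has three advantages:
\begin{itemize}
\item all coefficients are explicit hypergeometric data of the $k=1$ problem, whereas you would have to extract the transcendental zeros of $\det\Phi$ and the images of $\Phi(w_j)$ from Prop.~\ref{prop:explicite_cb};
\item no simplicity assumption on the apparent singularities is needed;
\item unique solvability yields existence at the same time, independently of the CFT construction.
\end{itemize}
Your route, in exchange, exhibits the normalized $2$-modified solutions directly as a $6$-dimensional affine family, matching $2n(k-1)$ for $n=3$, $k=2$. These are left multiplications of one solution by rational matrices with rank-one residues at its apparent singularities. This makes the link with Sec.~\ref{ssec:dimension} transparent. Both arguments end in an explicit rank computation of the same nature.
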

    \begin{proof}
        Let $\Phi(z)$ be a solution of the $2$-modified RH problem satisfying \eqref{eq:Phi_asympt_2}. Let $\Phi^{(k=1)}(z)$ be a solution of the standard RH problem with the same monodromy and the same exponents $\{\theta_i\}_{i=1}^3$. For $i = 1,2,3$ we have expansions
        \begin{equation}
            \Phi(z) = H_i(z)\begin{pmatrix}(z{-}t_i)^{\theta_i} & \mspace{-20mu}  0\\ 0 &  \mspace{-20mu}(z{-}t_i)^{-\theta_i-1}\end{pmatrix}C_i, \qquad \Phi^{(k=1)}(z) = H_i^{(k=1)}(z)\begin{pmatrix}(z{-}t_i)^{\theta_i} & \mspace{-20mu} 0\\ 0 & \mspace{-20mu} (z{-}t_i)^{-\theta_i}\end{pmatrix}C_i
        \end{equation}
        where the functions $H_i(z),H_i^{(k=1)}(z)$ are analytic and invertible in a neighborhood of $t_i$. This implies that the function $R(z) = \Phi(z)\big(\Phi^{(k=1)}(z)\big)^{-1}$ is a rational function on $\mathbb{CP}^1$ with poles of order at most $1$ at points $t_1,t_2,t_3$. Without loss of generality we can assume $t_1,t_2,t_3\neq \infty$. Then we have $R(z) = \mathbf1_{2\times 2} + \frac{K(z)}{(z-t_1)(z-t_2)(z-t_3)}$ where $K(z)_{\varsigma,\varsigma^{\prime}}$ are polynomials of degree $3$ vanishing at $z_0$.

        We reformulate the conditions on $R(z)$ that are equivalent to conditions on $\Phi(z)$ given by  Definition \ref{def:RH_modif} and asymptotic conditions \eqref{eq:Phi_asympt_2}.

        Near punctures we have $R(z) = H_{i}(z)(z-t_i)^{\mathrm{diag}(0,-1)}\left(H_{i}^{(k{=}1)}(z)\right)^{-1}$. This is equivalent to
        \begin{equation}\label{eq:uniq_pf_eqti}
        (K(t_i)H_{i}(t_i))_{1,1}= (K(t_i)H_{i}(t_i))_{-1,1} = 0,\quad i=1,2,3.
        \end{equation}
        The conditions \eqref{eq:Phi_asympt_2} can be rewritten as
        \begin{equation}\label{eq:uniq_pf_eqas}
        R'(z_0) + A_{c=1}(z_0) = 0,\;\; R''(z_0) + A_{c=1}(z_0)^2 +A_{c=1}'(z_0) = \nu\cdot \mathbf1_{2\times 2},
        \end{equation}
        where $A_{c=1}(z) = \Phi_{c=1}'(z)\Phi_{c=1}(z)^{-1}$ and $\nu \in \mathbb{C}$ is some constant. We have a system of $6$ equations given by \eqref{eq:uniq_pf_eqti} and $8$ equations given by \eqref{eq:uniq_pf_eqas}. We have $12$ variables parametrizing $K(z)$ and additional variable $\nu$. Therefore we obtain an overdetermined system of $14$ affine-linear equations on $13$ variables. 
        
        Note that this system is explicit as $\Phi_{c=1}(z)$ and therefore $A_{c=1}(z), H_{i}^{(k{=}1)}(z)\quad i=1,2,3$ are so. We have checked explicitly that this system has a unique solution.
    \end{proof}

\subsection{Explicit answers in the case $n = 3$}\label{ssec:simple_cases}
\paragraph{Three-point RH problem and degenerate conformal blocks.}

In this section we discuss the solution of the modified RH problem given by Theorem \ref{thm:cm2_linear_solve} for the cases $n = 3, k=1,2$ and give explicit formulas in terms of hypergeometric functions for them. For the case $k=1$ such example can be found in \cite{GM16} as well.

Note that unlike the cases $n>3$ the matrix elements of the solution of modified RH problem are given by single Virasoro algebra conformal blocks (up to a simple normalization prefactors). 
Without loss of generality assume that $t_1 = 0, t_2 = 1, t_3 = \infty$. In this case matrix element of the solution of in the region $1 > |z| > |z_0|$ is given by 
\begin{equation}\label{eq:5pt_2deg_sol}
\Phi_{\varsigma_{0}, \varsigma} = -\varsigma_{0}(z-z_0)^2\frac{N^{\varsigma}(\theta_1{-}\varsigma_{0})N^{-\varsigma_{0}}(\theta_1)N(\theta_3, \theta_2, \theta_1 {+} \varsigma {-} \varsigma_{0})}{N(\theta_3, \theta_2, \theta_1)}\mathcal{F}_{\varsigma,-\varsigma_{0}}(z,z_0),
\end{equation}
where

\begin{equation}\label{eq:CB_5pt_deg_gen}
   \mathcal{F}_{\varsigma,\varsigma_{0}} = \mathcal{F}_{c(k)}\big((k/2,k/2,\theta_{2}), (\theta_1,\theta_{1,\varsigma_{0}},\theta_{1,\varsigma+\varsigma_{0}},\theta_3)| (0, z_0, z_1, 1,\infty)\big),\;\;\;\;\theta_{1,\varsigma} =\theta_{1}+\varsigma k/2.
\end{equation}
The function $\mathcal{F}_{c(k)}$ is the $5$-point conformal block of Virasoro algebra in unital normalizations where two vertex operators are taken to be the degenerate fields $\phi_{1,2}$.

Equation \eqref{eq:deg12} implies differential equations on functions \eqref{eq:CB_5pt_deg_gen}, namely 
        \begin{equation}\label{eq:BPZ5pt}
        \langle\theta_{3}|V^{\theta}(1) :\!\!\left(-\frac{1}{\mathsf{b}^2}\partial_{z}^2 + T(z)\right)\psi_{\varsigma}(z)\!\!: \psi_{\varsigma_{0}}(z_0)|\theta_1\rangle = 0
    \end{equation}
    and similar equation for $\psi_{\zeta_{0}}(z_0)$.

The limit of functions \eqref{eq:CB_5pt_deg_gen} as $z_0 \to 0$ gives $4$-point conformal blocks where one of the vertex operators is degenerate. In this case the equation following from \eqref{eq:deg12} boils down to hypergeometric equation and the degenerate $4$-point conformal blocks in the unital normalizations are given by the matrix

\begin{multline}\label{eq:CB_4pt_deg_gen}
  \tilde{\mathcal{F}}_{\varsigma, \varsigma_{0}} = \mathcal{F}_{c(k)}\big((k/2,\theta_{2}), (\theta_{1,\varsigma_{0}},\theta_{1,\varsigma+\varsigma_{0}},\theta_3)| (0, z_0, z_1, 1,\infty)\big)=
  \\=
  z^{\theta_1^{\varsigma, \varsigma_{0}}-k/2}(1-z)^{\theta_2}\hspace{0pt}_2F_1\left(\theta_1^{\varsigma, \varsigma_{0}} {+}\theta_2{-}\theta_3,\theta_1^{\varsigma, \varsigma_{0}}
   {+}\theta_2{+}\theta_3{+}k^2{-}1,2 \theta_1^{\varsigma, \varsigma_{0}}|z\right),
\end{multline}
where $\theta_1^{\varsigma, \varsigma_{0}} = \varsigma(\theta_1+\frac{k^2}{2}(\varsigma_{0}+1)-1/2)+1/2$. 
Degeneration as $z_0\to 0$ of the solution of the corresponding RH problem gives

\begin{equation}\label{eq:4pt_1deg_sol}
    \tilde{\Phi}(z) =  -\varsigma_{0}z^2\frac{N^{\varsigma}(\theta_1{-}\varsigma_{0})N^{-\varsigma_{0}}(\theta_1)N(\theta_3, \theta_2, \theta_1 {+} \varsigma {-} \varsigma_{0})}{N(\theta_3, \theta_2, \theta_1)}\tilde{F}_{\varsigma,-\varsigma_{0}}(z).
\end{equation}

\paragraph{Case $k=1$.}

For the case $k = 1$ the analysis of the RH problem allows to express function $\Phi(z,z_0)$ in terms of functions $\tilde{\Phi}(z),\tilde{\Phi}(z_0)$. Indeed, since the limit as $z_0\to 0$ does not change the monodromy in $z$ the function 
\begin{equation}
\Phi(z,z_0)\tilde{\Phi}(z)^{-1},
\end{equation}
is rational in $z$. From the asymptotic analysis near punctures one obtains that this function is actually constant. Finally, from the initial condition $\tilde{\Phi}(z_0,z_0) = 1_{2\times 2}$ one finds 
\begin{equation}\label{eq:5pt_through_4pt}
    \Phi(z,z_0) = \tilde{\Phi}(z_0)^{-1}\tilde{\Phi}(z).
\end{equation}

Substituting the formulas \eqref{eq:5pt_2deg_sol} and \eqref{eq:4pt_1deg_sol} to \eqref{eq:5pt_through_4pt} one obtains an expression of $5$-point double degenerate conformal blocks via $4$-point degenerate conformal blocks for $c(1)=1$.

\paragraph{Case $k=2$.}

In this case the approach through the RH problem does not work (at least in the same way) due to the prescense of apparent singularities.

We use the following separated vesion of Hirota derivative
\begin{equation}
    f(e^{t}z)g(e^{{-}t}z_0) \eqqcolon \sum_{n = 0}^{\infty}\mathrm{D}_{[\log(z), \log(z_0)]}^{n}(f,g)(z,z_0)t^{n}.
\end{equation}

Denote for shortness
\begin{equation}
H(z) =\, _2F_1\left(\theta _{3}{+}\theta_1{-}\theta_2,\theta_1{-}\theta_3{-}\theta_2{-}1;2{+}2 \theta_1;z\right),\; \bar{H}(z) = H(z)|_{\theta_1\to -1-\theta_1},\; \partial_{z,z_0,\theta} = z\partial_{z} {-} z_{0}\partial_{z_{0}} {-} ( 2\theta{+}1).
\end{equation}

\begin{prop}\label{prop:explicite_cb}
The double degenerate $5$-point $c(2)={-}2$ conformal blocks are given by the following formulas.
$
    \mathcal{F}_{-\varsigma,-\varsigma_{0}} = (\mathcal{F}_{\varsigma,\varsigma_{0}})|_{\theta_1 \to -1 - \theta_1}.
    $
    \begin{multline}\label{eq:exp_5ptcb1}
    \mathcal{F}_{1,1} = \frac{(2\theta_1{+}2)(2\theta_1 {+} 3)}{(\theta_1{-}\theta_2 {+} \theta_3)(\theta_1{-}\theta_2{-}\theta_3 {-}1)((\theta_1{-}\theta_2)(\theta_1 {-}\theta_2 {+}1) {-}\theta_3(\theta_3{+}1))}\frac{z_0^{\theta_1}z^{\theta_1} \left(1{-}z_0\right)^{{-}\theta_2{-}1}(1{-}z)^{{-}\theta_2{-}1} }{\left(z{-}z_0\right)^2}\times 
    \\
     \left(2(z{-}z_0)\mathrm{D}_{[\log(z), \log(z_0)]}^{2}\left(H(z), H(z_0)\right) {-} (z {+} z_0)\mathrm{D}_{[\log(z), \log(z_0)]}^{1}\left(H(z), H(z_0)\right)\right).
\end{multline}

\begin{equation}\label{eq:exp_5ptcb2}
    \mathcal{F}_{1,-1} = \frac{z_0^{\theta_1}z^{{-}1{+}\theta_1}  \left(1{-}z_0\right)^{-\theta_2{-}1}(1{-}z)^{-\theta_2{-}1}}{(2\theta_1+1)(2\theta_1+2)\left(z{-}z_0\right)^2}\left((z-z_0)\partial_{z,z_0,\theta_{1}}^2 {-} (z {+} z_{0})\partial_{z,z_0,\theta_{1}}\right) \left(\bar{H}(z) \, H(z_0)\right).
\end{equation}
\end{prop}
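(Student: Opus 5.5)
The plan is to characterize the double degenerate 5-point blocks $\mathcal{F}_{\varsigma,\varsigma_0}$ of \eqref{eq:CB_5pt_deg_gen} at $k=2$ as the unique solutions of the BPZ system \eqref{eq:BPZ5pt} in $z$ and $z_0$ with prescribed leading asymptotics. We then check that the right-hand sides of \eqref{eq:exp_5ptcb1}, \eqref{eq:exp_5ptcb2} satisfy both equations and have those asymptotics.

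First, for $\mathsf{b}^2=2$ and $t_1=0$, $t_2=1$, $t_3=\infty$, we write the two second-order equations explicitly. Each one comes from \eqref{eq:deg12} by replacing $T(z)$ with the Ward-identity differential operator in $z_0$, $z$, $1$, $\infty$. This gives a holonomic system of rank $4$ in $(z,z_0)$, whose solution space is spanned by blocks labelled by the pair of intermediate momenta $(\theta_{1,\varsigma_0},\theta_{1,\varsigma+\varsigma_0})$. The blocks in the channel $|z_0|<|z|<1$ are singled out by their leading exponents as $z_0\to0$ and then $z/z_0\to\infty$. Concretely, they are $z_0^{\Delta(\theta_{1,\varsigma_0})-\Delta(\theta_1)-\Delta_{1,2}}\, z^{\Delta(\theta_{1,\varsigma+\varsigma_0})-\Delta(\theta_{1,\varsigma_0})-\Delta_{1,2}}$ times a series with leading coefficient $1$, by the unital normalization. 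These four exponent pairs are distinct for generic $\theta_i$.

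Second, we reduce to the $4$-point data. The candidate formulas are bilinear in hypergeometric functions: in the $\mathcal{F}_{1,1}$ case they are built from $H(z)H(z_0)$ through the Hirota operators $\mathrm{D}^1,\mathrm{D}^2$, and in the $\mathcal{F}_{1,-1}$ case from $\bar H(z)H(z_0)$ through $\partial_{z,z_0,\theta_1}$. This is the free-field (symplectic fermion) structure at $c=-2$: $\psi_-$ from \eqref{deg_bosonic} involves one derivative, $D_1=\partial\varphi$ up to a constant. For $k=2$ the $4$-point blocks \eqref{eq:CB_4pt_deg_gen} reduce, after extracting $z^{\theta_1}(1-z)^{-\theta_2-1}$, to $H$ or $\bar H$, possibly up to contiguous relations. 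The blocks therefore factor as (prefactor)$\times$(first- or second-order operator acting on a product of two 4-point solutions). We verify the BPZ equation in $z$ by substitution. Since $H$ and $\bar H$ satisfy hypergeometric equations, every second derivative can be eliminated, so checking the equation reduces to verifying a polynomial identity in $H,H',\bar H,\bar H'$ and their $z_0$ counterparts with rational coefficients.

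Third, we match asymptotics. For $\mathcal{F}_{1,1}$ we expand at $z_0\to0$ and then $z_0/z\to0$, using $H(0)=1$ and the vanishing of $(z-z_0)^{-2}$-type terms at low order. The Hirota combinations start at fixed powers, and the rational prefactor in \eqref{eq:exp_5ptcb1} is exactly the inverse of the leading coefficient. That coefficient is computed from the first two Taylor coefficients of $H$, which produce the factors $(\theta_1-\theta_2+\theta_3)(\theta_1-\theta_2-\theta_3-1)$ and the quadratic. The companion relation $\mathcal{F}_{-\varsigma,-\varsigma_0}=\mathcal{F}_{\varsigma,\varsigma_0}|_{\theta_1\to-1-\theta_1}$ follows from the symmetry $\theta\mapsto 1-k-\theta$ of $\Delta_\theta$ at $k=2$. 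Both the BPZ system and the unital normalization are invariant under this reflection applied to $\theta_1$, together with the resulting relabelling of the intermediate momenta. That gives the remaining two blocks.

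The main obstacle is the first step's verification. The BPZ operator with the full $5$-point Ward identity is lengthy, and the cross terms from $(z-z_0)^{-2}$ must cancel precisely. I would first check this symbolically, conjugating away the prefactors to work with $\mathrm{D}$-expressions. If a direct check is too heavy, an alternative is to derive the formulas from the fermionic Wick structure: compute $\langle\psi\psi\rangle$ in the symplectic fermion realization, where the 5-point function is a bilinear in 4-point ones. This route would explain the Hirota form.
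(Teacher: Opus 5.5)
Your proposal follows essentially the same route as the paper. The paper likewise writes the two BPZ equations, bounds the solution space by four (all derivatives at a generic point reduce to $F,\partial_zF,\partial_{z_0}F,\partial_z\partial_{z_0}F$, which is the justification for the rank you assert), checks directly that the four explicit functions are independent solutions, and identifies them with the blocks via their $z,z_0\to0$ asymptotics; your use of the reflection $\theta_1\to-1-\theta_1$ to get $\mathcal{F}_{-\varsigma,-\varsigma_0}$ is only a small economy.

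One caution for your asymptotic step: with the prefactor $z^{-1+\theta_1}$ as printed in \eqref{eq:exp_5ptcb2}, the normalized leading monomial in the channel $|z_0|<|z|<1$ is $z_0^{\theta_1}z^{\theta_1-2}$, which is not one of the four admissible exponent pairs. Since $\partial_{z,z_0,\theta_1}$ is $z\partial_z-z_0\partial_{z_0}$ conjugated by $z^{-1-\theta_1}z_0^{\theta_1}$, the intended prefactor is presumably $z^{-1-\theta_1}$. That gives the leading term $z_0^{\theta_1}z^{-\theta_1-2}$ with coefficient $1$, i.e.\ $\psi_+$ acting at $z_0$ and $\psi_-$ at $z$.
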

\begin{proof}
     
    Equations \eqref{eq:BPZ5pt} can be transformed into equations on functions $\mathcal{F}_{\varsigma,\varsigma_{0}}$
    \begin{equation}\label{eq:BPZ5rpt1}
         \left({-}\frac{1}{2}\partial_{z}^2 + \frac{2z {-} 1}{z(1{-}z)}\partial_{z} + \frac{1}{(z{-}z_0)^2} + \frac{z_0(1{-}z_0)}{(z{-}z_0)z(1{-}z)}\partial_{z_0} + \frac{1}{z(1{-}z)}\left(\frac{\Delta_{1}}{z} {-} \frac{\Delta_{2}}{z{-}1} + 2 {-} \Delta_{3}\right)\right)\mathcal{F}_{\varsigma,\varsigma_{0}}= 0,
        \end{equation}
    and the equation obtained from this by switching $z\leftrightarrow z_0$.
    
    Fix a generic initial point $(z,z_0)=(x,x_0)$. A solution $F$ of equations \eqref{eq:BPZ5rpt1} is analytic in a neighborhood of $(x,x_0)$ and can be expressed by a convergent series in two variables in this neighborhood. Note that equations \eqref{eq:BPZ5rpt1} allow to express any derivative $\partial_z^i\partial_{z_{0}}^jF$ of a solution $F$ at $(x,x_0)$ as a linear combination of $F(x,x_0), \partial_zF(x,x_0), \partial_{z_0}F(x,x_0),\partial_z\partial_{z_0}F(x,x_0)$. Therefore the space of solutions of the system is at most four-dimensional. By a direct check one obtains that the set $\{F_{\varsigma,\varsigma_{0}}\}_{\varsigma_{1},\varsigma_{0}=\pm 1}$ consists of four linearly-independent solutions of the system given by \eqref{eq:BPZ5rpt1}. The identification with conformal blocks then follows from considering asymptotes as $z,z_0$ tend to $0$.
\end{proof}
\begin{remark}
    One can also use AGT formulas \cite{AGT10,AFLT10} formulas to study conformal blocks with degenerate fields. 
    It is well known (see e.g. \cite{Nekrasov:2017bpsCftV}) that in case of 4-point conformal block with degenerate field the sum over pairs of partitions reduces to a sum over nonnegative integers and reproduces~\eqref{eq:CB_4pt_deg_gen}.

    For \(k=1\) (i.e. \(\epsilon_1+\epsilon_2=0\) in gauge theory parameters) the AGT sum for the 5-point conformal block also simplifies drastically \cite{Bonelli:2011liouville}; in this way one can derive an analogue of the formula~\eqref{eq:5pt_through_4pt} in terms of conformal blocks.
    We observed a similar structure (but more combinatorially involved) for \(k=2\) which can also be used to explain Proposition~\ref{prop:explicite_cb}.
    We expect that this structure (the simplification of the AGT sums and the bilinear expressions in terms of hypergeometric functions) can be generalised to arbitrary \(k\).
\end{remark}

\section{Bilinear relations on $c=-2$ tau functions}
\label{sec:fermions}

In Section \ref{ssec:solution} we have explained the construction of the solution of the (modified) RH problem in terms of correlation functions of periodic vertex operators in a $(1,k)$ CFT generalizing the $c= 1$ construction of \cite{ILT14}.

A significant feature of this construction for the cases $k>1$ is that the determinant of the modified RH problem solution is a non-constant rational function (cf. Prop. \ref{prop:conn_det_props}). This rationality is a consequence of the locality between certain bilinear combination of degenerate fields acting on the tensor square of the sum of Verma modules and certain bilinear combinations of the periodic vertex operators. 

In this section we study the case $k=2$ in more detail. In particular, we obtain the singular parts of OPE's between the mentioned bilinear combinations of vertex operators and the mentioned bilinear combinations of degenerate fields. We further use these OPE's to give an explicit formula for the determinant. We note that CFT provides overdetermined set of relations on this determinant. Studying these relations we find bilinear identities on $c=-2$ tau functions. 
 
 We also obtain the expression for the tau function in terms of the connection corresponding to the linear problem and the generalized Wick rules that allow to express correlation functions with multiple insertions of the degenerate fields via the correlation functions with insertions of at most $2$ degenerate fields.

We start from reviewing $k=1$ case in more detail as many natural objects can be defined for general $k\in \mathbb{Z}_{>0}$ from analogy with this case.

\subsection{Case $k=1$}
\label{ssec:free_fermions}

\subsubsection{Free fermions from the degenerate fields.}
In the case of $\mathsf{b}^2=k=1$ the degenerate operators are related to the free fermions, which are discussed in \cite[Sec.5]{ILT14}. They also play the central role in \cite{GM16}. Introduce operators $\Psi_{\pm}$ and $\Psi_{\pm}^*$ by

\begin{equation}
\Psi_{\pm}(z)=e^{\ri\varphi_0(z)} \psi_{\pm}(z), \qquad  \Psi_{\pm}^*(z)=e^{-\ri\varphi_0(z)} \psi_{\mp}(z).
\end{equation}
Here $\varphi_0(z)$ is a free boson commuting with the degenerate fields $\psi_{\pm}(z)$ and satisfying OPE \eqref{eq:OPE_phi}. The OPEs \eqref{eq:OPE_deg_diff} and  \eqref{eq:OPE_deg_same} turn into the free fermion OPEs
\begin{equation}
\Psi_{\varsigma}(z)\Psi_{\varsigma'}^*(w)=\frac{\delta_{\varsigma\varsigma'}}{z-w}+\mathrm{reg}, \qquad  \Psi_{\varsigma}(z)\Psi_{\varsigma'}(w)=\mathrm{reg}, \quad   \Psi_{\varsigma}^*(z)\Psi_{\varsigma'}^*(w)=\mathrm{reg}. 
\end{equation}
With respect to the extended stress-energy tensor $T(z)-{}:(\partial\varphi_0)^2:$ (cf. \eqref{T_phi}) operators $\Psi_{\pm}, \Psi^*_{\pm}$ have weight $1/2$ as it should be for the free fermions.

\subsubsection{Field $\mathcal{I}(z)$.}
The paper \cite{GM16} introduces the operator
\begin{equation}
    \mathcal{I}(z) = \sum_{\varsigma = \pm}\Psi^{*}_{\varsigma}(z)\otimes \Psi_{\varsigma}(z).
\end{equation}

The operator $\mathcal{I}(z)$ is invariant with respect to the following $\mathrm{GL}_2$-action on the free fermions
\begin{equation}
g:  \Psi_{\varsigma}(z) \mapsto \sum_{\varsigma^{\prime}=\pm} g_{\varsigma,\varsigma^{\prime}}\Psi_{\varsigma^{\prime}}(z),\;\;\; \Psi^{*}_{\varsigma}(z) \mapsto \sum_{\varsigma^{\prime}=\pm} (g^{-1})_{\varsigma,\varsigma^{\prime}}\Psi^{*}_{\varsigma^{\prime}}(z).
\end{equation}

An important feature of this operator is the locality/commutativity with respect to the tensor square of the periodic vertex operators $\bar{V}^{\theta}(t)^{\otimes 2}$. 
\begin{prop}[\text{\cite[Theorem 3]{GM16}}]
    The operator $\mathcal{I}(z)$ is local with respect to the tensor square $\bar{V}^{\theta}(t)^{\otimes 2}$. Moreover, the operator $\mathcal{I}(z)$ commutes with $\bar{V}^{\theta}(t)^{\otimes 2}$.
    \begin{equation}\label{eq:c1I_comm_loc}
    \mathcal{I}(z) \bar{V}^{\theta}(t)^{\otimes 2} = \mathrm{reg},\;\; \bar{V}^{\theta}(t)^{\otimes 2}\mathcal{I}(z) = \mathrm{reg},\;\; [\mathcal{I}(z), \bar{V}^{\theta}(t)^{\otimes 2}] = 0.
    \end{equation}
\end{prop}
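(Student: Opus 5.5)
The plan is to derive both statements from the periodic fusion relations \eqref{fusion_rules_block} for $k=1$, using that $\mathcal{I}(z)$ is a $\mathrm{GL}_2$-invariant bilinear in the fermions. The fermions $\Psi_\varsigma,\Psi^*_\varsigma$ differ from $\psi_{\pm}$ only by the auxiliary exponents $e^{\pm\ri\varphi_0}$. Since $\varphi_0$ commutes with $\bar V^\theta$, the exponents $e^{\ri\varphi_0}$ and $e^{-\ri\varphi_0}$ sit in opposite tensor factors of each term of $\mathcal{I}$, and their phases cancel. So it suffices to control $\psi_{\varsigma}(z)\otimes\psi_{-\varsigma}(z)$ summed over $\varsigma$, acting on $\bar V^\theta(t)^{\otimes 2}$.

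First I prove the commutation $[\mathcal{I}(z),\bar V^\theta(t)^{\otimes2}]=0$, understood as equality of analytic continuations from $|z|>|t|$ to $|z|<|t|$. Relation \eqref{fusion_13_block} gives
\[
\Psi_\varsigma(z)\bar V=\bar V\sum_{\varsigma'}\tilde F^{[13]}_{\varsigma\varsigma'}\Psi_{\varsigma'}(z).
\]
On the dual side, $\Psi^*_\varsigma$ involves $\psi_{-\varsigma}$, and the matching relation is obtained from the same fusion after relabelling. For $k=1$ the $\psi_-$ normalization $N^-$ is trivial, and the $\psi_\pm$ pair transforms as a vector and covector under the corresponding $\mathrm{SL}_2$ matrix. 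Concretely, I will check that the transformation matrix for $\Psi^*$ is $\big((\tilde F^{[13]})^{-1}\big)^T$. This should follow from $\det\tilde F^{[13]}=1$ (Prop.~\ref{prop:fusion_det}), since for a $2\times2$ matrix of determinant one, $\big(F^{-1}\big)^T=JFJ^{-1}$ with $J=\begin{pmatrix}0&1\\-1&0\end{pmatrix}$, and $J$ exactly implements the swap $\psi_\varsigma\leftrightarrow\psi_{-\varsigma}$ with a sign. $\mathrm{GL}_2$-invariance of $\mathcal{I}$ then gives the commutation.

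For locality, i.e.\ regularity of $\mathcal{I}(z)\bar V^{\otimes 2}$ as $z\to t$, I use \eqref{fusion_23_block}. It writes $\psi_\varsigma(z)\bar V$ as $\sum_{\varsigma'}\tilde F^{[23]}_{\varsigma\varsigma'}\bar V^{\theta+\varsigma'/2}[\psi_{\varsigma'}(z-t)\,\cdot\,]$, where the local factor behaves as $(z-t)^{\theta}$ for $\varsigma'=+$ and $(z-t)^{-\theta}$ for $\varsigma'=-$, by \eqref{eq:Phi_i local} with $k=1$. In the tensor product, the same $\mathrm{GL}_2$-invariance collapses the sum over $\varsigma$ to a sum over $\varsigma'$ of products of a $+$ descendant in one factor and a $-$ descendant in the other. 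In each such product the exponents $\theta$ and $-\theta$ cancel, leaving a power series in $(z-t)$ with no negative powers, by the leading asymptotics of $\psi_\pm$ on highest-weight vectors. The same argument with \eqref{fusion_21_block} handles $\bar V^{\otimes2}\mathcal{I}(z)$.

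The main obstacle is the bookkeeping of signs, phases and normalizations, so that $\Psi^*$ transforms by exactly $\big((\tilde F)^{-1}\big)^T$ and not by a rescaled version. In particular, the factors $(\varsigma')^{k-1}$, $\ri^k$ and $e^{\pi\ri\varsigma\ldots}$ in \eqref{fusion_matrices_periodic}, and the $e^{\pm\ri\varphi_0}$ monodromies, must combine to give a genuine invariance. I would verify this first by direct computation at $k=1$, before running the structural argument.
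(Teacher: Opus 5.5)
Your plan follows the same route the paper takes for the $k=2$ analogue (Lemma~\ref{lemma:IV_OPE}; for $k=1$ the paper only cites \cite{GM16}). You push the bilinear through $\bar V^{\otimes 2}$ with \eqref{fusion_13_block} and $\det\tilde F^{[13]}=1$, then use \eqref{fusion_23_block}, $\det\tilde F^{[23]}=1$ and the leading exponents $\theta$, $1-k-\theta=-\theta$. However, the step you postpone as ``bookkeeping of signs'' is a real obstruction, and with your reduction it fails.

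Stripping the common factor $e^{-\ri\varphi_0}\otimes e^{\ri\varphi_0}$ from $\mathcal{I}$ as defined leaves the \emph{symmetric} tensor $\sum_\varsigma\psi_{-\varsigma}\otimes\psi_\varsigma=\psi_-\otimes\psi_++\psi_+\otimes\psi_-$. Under $\psi\mapsto F\psi$ its coefficient matrix $P=\begin{pmatrix}0&1\\1&0\end{pmatrix}$ becomes $F^TPF$. For $F\in\mathrm{SL}_2$ one has $F^TPF=P$ only if $F$ is diagonal. Equivalently, $\Psi^*$ transforms by $PFP$, while $\det F=1$ gives $(F^{-1})^T=JFJ^{-1}$. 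The plain swap $P$ and the twisted swap $J$ differ exactly by the relative sign you noticed.

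Nothing in your list can produce that sign. The factors $\varsigma$, $(\varsigma')^{k-1}$, $\ri^k$, $e^{\pi\ri(\cdots)}$ are already part of the entries of $\tilde F^{[13]}$, which are generically all nonzero. The $e^{\pm\ri\varphi_0}$ are identical in the two terms and do not see $\bar V$ at all. The locality step breaks for the same reason. With the symmetric pairing, \eqref{fusion_23_block} leaves diagonal terms with coefficients $2\tilde F^{[23]}_{++}\tilde F^{[23]}_{-+}$ and $2\tilde F^{[23]}_{+-}\tilde F^{[23]}_{--}$, which behave like $(z-t)^{2\theta}$ and $(z-t)^{-2\theta}$.

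The missing ingredient is that the sign must be part of the bilinear itself. $\Psi^*$ has to be the dual doublet $J\psi$ rather than $P\psi$, so the operator to control is $(e^{-\ri\varphi_0}\otimes e^{\ri\varphi_0})\sum_{\varsigma_1,\varsigma_2}\epsilon_{\varsigma_1\varsigma_2}\psi_{\varsigma_1}\otimes\psi_{\varsigma_2}$. This is the exact $k=1$ counterpart of $I(z)=J^+\wedge J^-$. In the fermionic language of \cite{GM16}, this is the kind of sign carried by the cocycle factors that the shorthand $\Psi^*_\pm=e^{-\ri\varphi_0}\psi_\mp$ suppresses. A symptom: with bare exponentials, $\Psi_+$ and $\Psi^*_-$ commute rather than anticommute.

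With the $\epsilon$-pairing, $F^T\epsilon F=\det(F)\,\epsilon=\epsilon$ gives the commutation directly from \eqref{fusion_13_block}. Then \eqref{fusion_23_block} leaves only $\bar V^{\theta+\frac12}[\psi_+(z-t)|\theta\rangle]\otimes\bar V^{\theta-\frac12}[\psi_-(z-t)|\theta\rangle]$ minus its swap. Its leading powers $(z-t)^{\theta}(z-t)^{-\theta}$ cancel, so the product is regular; for $k=2$ the same computation yields the simple pole of Lemma~\ref{lemma:IV_OPE}. With this correction the rest of your plan goes through as written.
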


This commutativity implies a hierarchy of bilinear relations satisfied by the tau function \eqref{eq:tau_def_gener}
\begin{equation}\label{eq:c1tau}
    \tau_{F} = \langle \bar{V}^{\theta_{n}}(t_{n})\dots \bar{V}^{\theta_{1}}(t_{1})\rangle,
\end{equation}
see \cite[Section 6]{GM16}. 

\subsubsection{Wick rules and three-point RH problem}\label{ssec:FF_Wick}

The commutativity \eqref{eq:c1I_comm_loc} as well implies the generalized Wick theorem, see \cite[Theorem 2.7]{AZ13} for details. Namely, there is an explicit formula for the correlation functions of the products of $\bar{V}^{\theta_{j}}(t_j)$ with multiple insertions of $\Psi$'s and $\Psi^{*}$'s in terms of such correlation functions with insertion of just one $\Psi$ and one $\Psi^{*}$. 

This allows to give a definition for the operator $\bar{V}^{\theta}(t)$ only in terms of the RH problem without using the normalization formulas \eqref{special_norm}. Let us assume for simplicity that $|z_0| < |t|$.
\begin{definition}[\text{\cite[Section 4.1]{GM16}}]\label{def:GM_MOP}
    For a given local system on $\mathbb{CP}^{1}\backslash\{0, t, \infty\}$ define the corresponding vertex operator as an operator $\bar{V}^{\theta_1}(t):\mathbb{M}_{[\theta_1]_1}\rightarrow \mathbb{M}_{[\theta_3]_1}$ such that 
    \begin{enumerate}
        \item\label{item:FFVO1} Functions 
        \begin{equation}
            (\Phi_{1})_{\varsigma_{0},\varsigma}(z) = (z-z_0)\frac{\langle \theta_{3}|\bar{V}^{\theta_{2}}(t)\Psi^{*}_{\varsigma}(z)\Psi_{\varsigma_{0}}(z_0)| \theta_{1}\rangle}{\langle \theta_{3}|\bar{V}^{\theta_{1}}(t)| \theta_{1}\rangle},\;\;\; (\Phi_{2})_{\varsigma_{0},\varsigma}(z) = (z-z_0)\frac{\langle \theta_{3}|\Psi^{*}_{\varsigma}(z)\bar{V}^{\theta_{2}}(t)\Psi_{\varsigma_{0}}(z_0)| \theta_{1}\rangle}{\langle \theta_{3}|\bar{V}^{\theta_{1}}(t)| \theta_{1}\rangle},
        \end{equation}
        satisfy conditions of Definition \ref{def:RH_annulus} (i.e. define the solution of three point RH problem). 
        \item\label{item:FFVO2} $\bar{V}^{\theta_{2}}(t)$ satisfy the Basic Bilinear Condition that is
        \begin{equation}\label{eq:BBC}
            \bigg[\sum_{\varsigma, l}\Psi_{\varsigma, l}\otimes \Psi^{*}_{\varsigma, -l}, \bar{V}^{\theta_{2}}(t)\otimes \bar{V}^{\theta_{2}}(t)\bigg] = 0.
        \end{equation}
    \end{enumerate}
\end{definition}

Note that the condition \eqref{eq:BBC} is nothing but a certain mode of the commutativity relation \eqref{eq:c1I_comm_loc}.

\begin{remark}
    Recall the factorization of the $5$-point double degenerate conformal blocks that solve the $3$-point RH which we discuss as the $k=1$ case in Section \ref{ssec:simple_cases}. This also finds an explanation via the commutativity relations \eqref{eq:c1I_comm_loc}, see \text{\cite[Section 4.2]{GM16}} for details.
\end{remark}

\subsection{Case $k=2$}
\label{ssec:I}
\subsubsection{Symplectic fermions from the degenerate fields.}

In the case of $\mathsf{b}^2=k=2$ the degenerate fields $\psi_{\pm}$ can be considered as symplectic fermions $J^{\pm}$. Indeed, the weight of the degenerate fields is $\Delta_{12}=1$, while the OPEs \eqref{eq:OPE_deg_diff} and \eqref{eq:OPE_deg_same} become those for the symplectic fermions
\begin{equation}
J^{\varsigma}(z)J^{\varsigma'}(w) = \frac{\epsilon_{\varsigma\varsigma'}}{(z-w)^2} -\epsilon_{\varsigma\varsigma'}T(w)+O(z-w), \quad \varsigma,\varsigma'=\pm,
\end{equation}
where $\epsilon_{\varsigma\varsigma'}$ is the antisymmetric symbol ($\epsilon_{+-}=1$). Note that this OPE is preserved by the transformation
\begin{equation}\label{eqn:SF_SL2symmetry}
        J^{\varsigma} \mapsto \sum_{\varsigma'=\pm}g_{\varsigma\varsigma'}J^{\varsigma'}, \qquad g\in \SL.
\end{equation}
More details on symplectic fermions one can find in \cite{CR13}.

\subsubsection{Field $I(z)$.}
Here we construct an analogue of the bilinear operator (field) $\mathcal{I}(z)$ from Sec.~\ref{ssec:free_fermions} in the $c = -2$ case.

\begin{definition}
    Define the bilinear operator $I(z): \mathbb{M}_{[\theta]_2}\otimes \mathbb{M}_{[\theta^{\prime}]_2} \rightarrow \mathbb{M}_{[\theta+1]_2}\otimes \mathbb{M}_{[\theta^{\prime}+1]_2}[[z, z^{-1}]]$ by formula
    \begin{equation}
        I(z) \coloneqq J^+(z)\wedge J^-(z) = J^+(z)\otimes J^-(z) - J^-(z)\otimes J^+(z).
    \end{equation}
\end{definition}
We will always assume that the difference $\theta-\theta^{\prime} \in \{0, \pm 1\}$. In other words, the operator $I(z)$ can be thought as acting on the tensor square of a direct sum $\mathbb{M}_{[\theta]_2} \oplus \mathbb{M}_{[\theta+1]_2}$.

Operator $I(z)$ is preserved by the $\mathrm{SL}_2\left(\mathbb{C}\right)$-symmetry \eqref{eqn:SF_SL2symmetry} of the symplectic fermion algebra :
\begin{equation}
\sum_{\varsigma_1,\varsigma_2=\pm}g_{+,\varsigma_1}g_{-,\varsigma_2}\,\,J^{\varsigma_1}(z)\wedge J^{\varsigma_2}(z) = \det(g)I(z) = I(z).
\end{equation}

The commutation relations of the modes of $I(z)$ with themselves and with the Virasoro algebra acting diagonally are described as follows:
\begin{prop}\label{prop:IIOPE}
\begin{itemize}\hfill 
    \item The field $I(z)$ is a primary field of weight $2$ with respect to $T^{\Delta}(z)= T(z)\otimes \mathbf1+\mathbf1\otimes T(z)$.
    \item Field $I(z)$ satisfies the following OPE
        \begin{equation}\label{eq:IIOPE}
            I(z)I(z_{0}) = \frac{2}{(z-z_{0})^4} - \frac{2T^{\Delta}(z_0)}{(z-z_0)^2} - \frac{\partial T^{\Delta}(z_0)}{z-z_0} + \mathrm{reg}.
        \end{equation}
\end{itemize}
        
    \end{prop}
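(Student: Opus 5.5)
The plan is to reduce both claims to the single-copy OPE of the symplectic fermions, which is the $k=2$ specialisation of \eqref{eq:OPE_deg_diff}--\eqref{eq:OPE_deg_same}:
\begin{equation*}
J^{\varsigma}(z)J^{\varsigma'}(w)=\frac{\epsilon_{\varsigma\varsigma'}}{(z-w)^2}-\epsilon_{\varsigma\varsigma'}T(w)+O(z-w),
\end{equation*}
together with the fact that $J^\pm$ are primary of weight $\Delta_{12}=1$ for $T$. Because $I(z)$ is built from two commuting tensor factors, every OPE of $I$ factorises into products of single-copy OPEs in each factor. The only subtlety is the grading sign: $J^\pm$ are fermionic, so moving an odd field in the second factor past an odd field in the first produces a sign. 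I will fix the $\mathbb{Z}_2$-graded tensor product convention consistently with the definition $I=J^+\otimes J^--J^-\otimes J^+$.

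\emph{Primarity.} The diagonal $T^{\Delta}$ acts on each factor separately, and $J^\pm$ is primary of weight $1$ in each factor. So $T^{\Delta}(w)\,J^{\varsigma_1}(z)\otimes J^{\varsigma_2}(z)$ is handled by the Leibniz rule, giving total weight $1+1=2$. The first-order poles combine to $\partial_z\bigl(J^{\varsigma_1}(z)\otimes J^{\varsigma_2}(z)\bigr)$, and no higher poles appear. Hence $I$ is primary of weight $2$.

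\emph{The $II$ OPE.} I expand
\begin{equation*}
I(z)I(z_0)=\sum \pm\, J^{a}(z)J^{c}(z_0)\otimes J^{b}(z)J^{d}(z_0)
\end{equation*}
over the four terms, with $(a,b),(c,d)\in\{(+,-),(-,+)\}$. A term survives only when $\epsilon_{ac}\epsilon_{bd}\neq0$, which forces $c=-a$, $d=-b$. This gives the two cross terms $(+-)(-+)$ and $(-+)(+-)$, each contributing
\begin{equation*}
\epsilon\epsilon'\Bigl(\frac{1}{(z-z_0)^2}-T(z_0)+\dots\Bigr)\otimes\Bigl(\frac{1}{(z-z_0)^2}-T(z_0)+\dots\Bigr).
\end{equation*}

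The leading term is the product of the $(z-z_0)^{-2}$ poles. Including the relative minus sign in $I$ and the fermionic reordering sign, the two terms add to $2/(z-z_0)^4$. The cross terms give $-\bigl(T(z_0)\otimes\mathbf1+\mathbf1\otimes T(z_0)\bigr)/(z-z_0)^2$ from each of the two terms. The naive count therefore gives $-2T^{\Delta}$, and I must check that the sign conventions reproduce exactly the coefficient in \eqref{eq:IIOPE}.

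For the simple pole, I need the $O(z-w)$ term of the single-copy OPE. Since $J^\pm$ is primary of weight $1$ and the OPE is antisymmetric, that term is $-\tfrac12\epsilon\,\partial T(w)(z-w)$. An alternative is to use the standard constraint that the OPE of a weight-$2$ primary with itself has a $(z-z_0)^{-1}$ coefficient equal to half the derivative of the $(z-z_0)^{-2}$ coefficient. Either route gives $-\partial T^{\Delta}(z_0)/(z-z_0)$. All remaining contributions are regular, and terms like $T\otimes T$ come with $(z-z_0)^0$.

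The main obstacle is bookkeeping of signs: the fermionic grading in the tensor product, the antisymmetry $\epsilon_{+-}=-\epsilon_{-+}$, and the normalisation of $J^\pm$ relative to $\psi_\pm$ from \eqref{deg_bos_norm}, where $\binom{2}{1}=2$ is absorbed. I would fix these by checking the vacuum matrix element $\langle0|I(z)I(z_0)|0\rangle=2(z-z_0)^{-4}$ directly from the two-point functions.
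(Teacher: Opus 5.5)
The paper states this proposition without proof. Your direct factorised computation is the natural route, and the primarity argument (Leibniz rule for $T\otimes\mathbf 1+\mathbf 1\otimes T$ acting on $J^a(z)\otimes J^b(z)$) is fine. Two steps in the $II$ OPE are wrong as written, though both are repairable.

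\textbf{The $O(z-w)$ term of the single-copy OPE.} Primarity and antisymmetry do not determine this term of $J^{\varsigma}(z)J^{\varsigma'}(w)$. They only fix the vacuum-channel part $\epsilon_{\varsigma\varsigma'}\bigl[(z-w)^{-2}-T(w)-\tfrac12(z-w)\partial T(w)\bigr]$.

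The fusion $\phi_{1,2}\times\phi_{1,2}$ also has a $\phi_{1,3}$ channel, of weight $3$ at $c=-2$, and it enters exactly at order $(z-w)^{1}$. Exchange antisymmetry only forces this extra term $(z-w)W^{\varsigma\varsigma'}(w)$ to be \emph{symmetric} in $\varsigma\varsigma'$; it does not force it to vanish. It is in fact nonzero: it is the leading term of $J^\pm(z)J^\pm(w)=O(z-w)$ in \eqref{eq:OPE_deg_same}, i.e. the triplet fields such as $:\!\partial J^+J^+\!:$.

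Paired with the $(z-z_0)^{-2}$ pole of the other tensor factor, $W$ produces $(z-z_0)^{-1}$ terms in the individual cross terms of $I(z)I(z_0)$. You must show these cancel:
\[
\sum_{a,b,c,d}\epsilon_{ab}\epsilon_{cd}\bigl(\epsilon_{ac}\,\mathbf 1\otimes W^{bd}+\epsilon_{bd}\,W^{ac}\otimes\mathbf 1\bigr)=\sum_{b,d}\epsilon_{bd}\,\mathbf 1\otimes W^{bd}+\sum_{a,c}\epsilon_{ac}\,W^{ac}\otimes\mathbf 1=0 .
\]
This uses $\sum_a\epsilon_{ab}\epsilon_{ac}=\delta_{bc}$, $\sum_b\epsilon_{ab}\epsilon_{bd}=-\delta_{ad}$, and $W^{bd}=W^{db}$.

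Your second route does bypass $W$, but it is not a property of primaries as such. It follows from $I(z)I(w)=I(w)I(z)$, which holds because each tensor factor contributes one fermionic exchange sign. For a self-local even field, the coefficients $C_n$ of $(z-w)^{-n}$ then satisfy $C_3=\tfrac12\partial C_4$ and $C_1=\tfrac12\partial C_2-\tfrac1{12}\partial^3C_4$. Since $C_4=2$ is constant, $C_3=0$ and $C_1=\tfrac12\partial C_2$. Finally, $C_2=-2T^\Delta$ uses only the $(z-w)^{-2}$ and $(z-w)^{0}$ single-copy terms, which $W$ does not touch.

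\textbf{The sign.} There is no Koszul sign in this tensor product. The $J^\pm$ are ordinary Virasoro vertex operators between ungraded modules, and the paper composes them as $(A\otimes B)(C\otimes D)=AC\otimes BD$. This is exactly what makes $\tfrac12\langle\bar{\mathsf V}_n^{\otimes2}I(z)I(z_0)\rangle=\det\bigl(\langle\bar{\mathsf V}_nJ^{\varsigma}(z)J^{\varsigma_0}(z_0)\rangle\bigr)$ in Prop.~\ref{prop:CFT_det}.

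With this rule the leading coefficient is $\sum_{a,b,c,d}\epsilon_{ab}\epsilon_{cd}\epsilon_{ac}\epsilon_{bd}=2$: the two surviving index choices $(+,-,-,+)$ and $(-,+,+,-)$ each contribute $+1$. The same factor $2$ multiplies $-(T\otimes\mathbf 1+\mathbf 1\otimes T)$ and $-\tfrac12(\partial T\otimes\mathbf 1+\mathbf 1\otimes\partial T)$, which gives \eqref{eq:IIOPE} exactly.

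Inserting the ``fermionic reordering sign'' you announce would flip all three singular terms. So your sentence that the sign-included terms add to $2/(z-z_0)^4$ is inconsistent. Fixing the convention afterwards by matching $\langle0|I(z)I(z_0)|0\rangle$ to the claimed answer is circular. The fermionic nature of $J^\pm$ enters only through $\epsilon_{\varsigma\varsigma'}$ in the single-copy OPE and through the self-locality of $I$ used above.
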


Another reason to study the operator $I(z)$ comes from the consideration of the modified RH problem discussed in Section \ref{sec:sol}. Prop. \ref{prop:conn_det_props} shows that the determinant of the solution of the modified RH problem is a rational function. This rational function has an interpretation as a correlation function as follows.
\begin{prop}\label{prop:CFT_det}
The determinant of the modified RH problem solution \eqref{eq:gen_RH_sol_cft} is given by
\begin{equation}
\frac{\det(\Phi_{\varsigma,\varsigma_{0}})}{(z-z_0)^4} = \det\left(\langle\bar{\mathsf{V}}_nJ^{\varsigma}(z)J^{\varsigma_{0}}(z_0)\rangle_{\varsigma, \varsigma_{0} = \pm}\right) = \frac{1}{2}\langle\bar{\mathsf{V}}_n^{\otimes 2}I(z)I(z_0)\rangle.
\end{equation}
\end{prop}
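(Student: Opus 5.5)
The plan is to unpack both sides directly from the definitions, specialised to $k=2$. For $k=2$ the prefactor in \eqref{eq:gen_RH_sol_cft} is $(-\varsigma)\binom{2}{1}^{-1}(z_0-z)^{2}$, so each matrix element of $\Phi_{n-1}$ equals, up to the sign $-\varsigma$ and the factor $\tfrac12$, the product $(z_0-z)^2$ times a correlator $\langle \bar{\mathsf{V}}_n\, \psi_{-\varsigma_0}(z_0)\psi_{\varsigma}(z)\rangle$. Here $\bar{\mathsf{V}}_n$ denotes the product of periodic vertex operators with the $s_j^{-a_0/\sqrt k}$ insertions, and $\psi_\pm=J^\pm$ are the symplectic fermions. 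The first step is to compute the $2\times 2$ determinant in this form. Row and column signs and the relabelling $\varsigma_0\to-\varsigma_0$ only contribute an overall sign to $\det$. The constant scalar factor gives $\tfrac14\cdot (z-z_0)^4$. I would fix the overall normalisation convention of the statement, namely $\Phi_{\varsigma,\varsigma_0}$ indexed as the matrix of $\langle\bar{\mathsf V}_n J^\varsigma(z)J^{\varsigma_0}(z_0)\rangle$ rescaled, so that the first equality becomes the identity $\det(\Phi)/(z-z_0)^4=\det\big(\langle\bar{\mathsf V}_nJ^{\varsigma}(z)J^{\varsigma_0}(z_0)\rangle\big)$. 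I would then check the constant against the normalisation \eqref{Phi_to_tau} as $z\to z_0$, where both sides behave like $\tau^2$ times the OPE constant. Since both sides are built from the same radially ordered correlators in $K_{n-1}$, and Prop.~\ref{prop:conn_det_props} shows $\det\Phi$ glues to a single-valued function, it suffices to prove the identity in this one region.

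For the second equality, I would expand $I(z)I(z_0)$ in the tensor square. We have
\[
I(z)I(z_0)=\sum_{\varsigma,\varsigma_0}\epsilon_{\varsigma}\epsilon_{\varsigma_0}\,J^{\varsigma}(z)J^{\varsigma_0}(z_0)\otimes J^{-\varsigma}(z)J^{-\varsigma_0}(z_0),
\]
with $\epsilon_+=1$ and $\epsilon_-=-1$, up to the Koszul signs coming from moving $J^-(z)$ past $J^{\varsigma_0}(z_0)$. Taking the expectation value against $\bar{\mathsf V}_n^{\otimes2}$ factorises each term into a product $G_{\varsigma\varsigma_0}G_{-\varsigma,-\varsigma_0}$, where $G_{\varsigma\varsigma_0}=\langle\bar{\mathsf V}_nJ^\varsigma(z)J^{\varsigma_0}(z_0)\rangle$. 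This holds because the tensor square correlator is the product of the two factor correlators, the operators being of the form $A\otimes B$.

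Collecting the four terms gives
\[
G_{++}G_{--}+G_{--}G_{++}-G_{+-}G_{-+}-G_{-+}G_{+-}=2\det G.
\]
The factor $\tfrac12$ then yields the claim.

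The main obstacle is the sign bookkeeping. The symplectic fermions are odd, so in the tensor product $J^-(z)\otimes J^+(z)$ composed with $J^{\varsigma_0}(z_0)\otimes J^{-\varsigma_0}(z_0)$ one must decide whether a super-sign $(-1)$ appears when commuting the factors. For the determinant, rather than a permanent, to emerge, this sign must be $+1$, meaning the tensor product is taken as an ordinary (non-super) tensor product. Alternatively, the grading of $\mathbb{M}_{[\theta]_2}$ versus $\mathbb{M}_{[\theta+1]_2}$ must compensate.

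I would settle this by checking the leading $z\to z_0$ term on both sides. On the left, $G_{\varsigma\varsigma_0}\sim\epsilon_{\varsigma\varsigma_0}\tau/(z-z_0)^2$, so $\det G\sim\tau^2/(z-z_0)^4$. On the right, the OPE \eqref{eq:IIOPE} gives $\tfrac12\cdot 2\tau^2/(z-z_0)^4$. The two match, which fixes the sign convention consistently with Prop.~\ref{prop:IIOPE}.
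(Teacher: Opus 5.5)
Your proposal is correct and is the direct unpacking the paper has in mind (the paper states Prop.~\ref{prop:CFT_det} without a written proof). The factor $\tfrac14$ from $\binom{2}{1}^{-2}$ is exactly compensated because $J^\pm$ are the $\psi_\pm$ rescaled to unit OPE coefficient (at $k=2$, $\psi_+\psi_-\sim 2(z-w)^{-2}$), and $\sum\epsilon_{ab}\epsilon_{cd}G_{ac}G_{bd}=2\det G$ gives the second equality.

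One small correction to your ``main obstacle'': $\otimes$ here is the ordinary tensor product of linear maps, and a Koszul convention would in any case multiply all four terms by the same sign $-1$. That yields $-2\det G$, never a permanent, which is why your leading-order comparison with \eqref{eq:IIOPE} settles the matter.
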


\subsubsection{The determinant of the modified RH problem solution}\label{ssec:det}

Before proceeding we will need the following notations:
\begin{enumerate}
    \item  
Denote 
    \begin{equation}
\bar{V}_{[\sigma]_{2}, l,l^{\prime},[\tilde{\sigma}]_{2}, \tilde{l},\tilde{l}^{\prime}}^{\theta, \wedge 2} 
    =\bar{V}_{[\sigma{+}l{+}1]_2,[\tilde{\sigma}{+}\tilde{l}{+}1]_2}^{\theta{+}1}(t){\otimes} \bar{V}_{[\sigma{+}l^{\prime}{+}1]_2,[\tilde{\sigma}{+}\tilde{l}^{\prime}{+}1]_2}^{\theta{-}1}(t) 
        {-}
        \bar{V}_{[\sigma{+}l{+}1]_2,[\tilde{\sigma}{+}\tilde{l}{+}1]_2}^{\theta{-}1}(t){\otimes} \bar{V}_{[\sigma{+}l^{\prime}{+}1]_2,[\tilde{\sigma}{+}\tilde{l}^{\prime}{+}1]_2}^{\theta{+}1}(t).
\end{equation}
\item Let $1\leq i_1 {<} i_2 {<}\dots {<}i_r {\leq} n$. Denote by $(\bar{\mathsf{V}}_n^{\otimes2})^{(i_1\dots i_r)}$ the operator $\bar{\mathsf{V}}_n^{\otimes 2}$, where  instead of the operators  $\bar{V}_{[\sigma_{i_s}+l_{i_s}]_2,[\sigma_{i_{s}-1}+l_{i_s-1}]_2}^{\theta_{i_s}}(t_{i_s})\otimes \bar{V}_{[\sigma_{i_s}+l_{i_s}^{\prime}]_2,[\sigma_{i_{s}-1}+l_{i_s-1}^{\prime}]_2}^{\theta_{i_s}}(t_{i_s})$,
we insert the operators $\bar{V}_{[\sigma_{i_{s}}]_{2}, l_{i_{s}}{+}1,l^{\prime}_{i_{s}}{+}1,[\tilde{\sigma}_{i_{s}}]_{2}, \tilde{l}_{i_{s}}{+}1,\tilde{l}^{\prime}_{i_{s}}{+}1}^{\theta, \wedge 2}$ for all $s = 1,\dots, r$.
\item Denote $\uptau^{\otimes2,(i_1,\dots, i_r)} = \langle (\bar{\mathsf{V}}_n^{\otimes 2})^{(i_1,\dots, i_r)}\rangle$.
\end{enumerate}

In particular, $(\bar{\mathsf{V}}_n^{{\otimes} 2})^{(i_1,\dots,i_r)}$ is a sum of $2^{r}$ tensor products of copies of $\bar{\mathsf{V}}_n$ with shifted $\theta$'s and summation indices and $\uptau^{\otimes2,(i_1,\dots, i_r)}$ is nothing but a certain bilinear combination of functions $\uptau$ with various values of parameters $\theta_1,\dots, \theta_{n},\Lambda_{l_{2},\dots, l_{n-2}}$.

\begin{prop}\label{prop:det_answ}
The determinant of the modified RH problem solution \eqref{eq:gen_RH_sol_cft} is given by
        \begin{equation}\label{eq:det_answ}
         \det\left(\Phi(\{t_j\}, z, z_0)\right) = \uptau^2 - \sum_{1\leq i < j \leq n}\frac{\theta_{i}\theta_{j}(z-z_{0})^2(t_i - t_j)^4}{2(z-t_i)(z-t_j)(z_0 - t_i)(z_0 - t_j)}\uptau^{\otimes2,(i,j)}.
    \end{equation}
    \end{prop}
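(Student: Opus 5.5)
The approach is to treat $\det\Phi(z,z_0)$, via Prop.~\ref{prop:CFT_det}, as the correlator $\frac12(z-z_0)^4\langle\bar{\mathsf{V}}_n^{\otimes 2}I(z)I(z_0)\rangle$. We determine it as a rational function in $z$ (and symmetrically in $z_0$) from its poles, the residues at those poles, and its behaviour at $z\to z_0$ and at infinity. By Prop.~\ref{prop:conn_det_props} it is single-valued in $z$, with poles of order at most $k-1=1$ at each $t_i$. The first step is therefore to compute the singular part of the OPE of $I(z)$ with the tensor square $\bar V^{\theta_i}(t_i)^{\otimes2}$.

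\textbf{Step 1: the OPE of $I(z)$ with $\bar V^{\theta}(t)^{\otimes 2}$.} Write $I=J^+\otimes J^- - J^-\otimes J^+$ and apply the fusion relation \eqref{fusion_23_block} in each tensor factor. The product $J^{\varsigma}(z)\bar V^{\theta}(t)$ becomes a sum over $\varsigma'$ of $\bar V^{\theta+\varsigma'}[J^{\varsigma'}(z-t)|\theta\rangle]$. The leading terms behave as $(z-t)^{\theta}$ and $(z-t)^{-1-\theta}$, respectively.

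The tensor product therefore contains exponents $(z-t)^{\theta_a+\theta_b}$, where each $\theta_a\in\{\theta,-1-\theta\}$. The mixed combinations give $(z-t)^{-1}$. The pure ones give non-integer exponents $2\theta$ and $-2-2\theta$; these must cancel in the antisymmetrized combination, because $\det\tilde F^{[23]}=1$ (Prop.~\ref{prop:fusion_det}) and $I$ is $\mathrm{SL}_2$-invariant.

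The surviving residue is a wedge of the two mixed descendants. Reading off its coefficient with the normalizations \eqref{deg_norm_diff}, I expect it to be proportional to $\theta$ times the operator $\bar V^{\theta,\wedge2}$, where the shifts $\theta\pm1$ and the relabelled lattice indices $l+1$ and $l'+1$ arise from the $\pm k/2=\pm1$ momentum shifts. This is why $\uptau^{\otimes2,(i,\dots)}$ appears in the answer. The regular part at $z=t$ contributes nothing new.

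\textbf{Step 2: assembling the rational function.} With the simple-pole structure in $z$ and in $z_0$ established, consider $G(z,z_0)=\frac12\langle\bar{\mathsf V}_n^{\otimes2}I(z)I(z_0)\rangle$. It is rational in each variable and has the following properties.
\begin{itemize}
\item It has simple poles at the $t_i$, with residues given by Step 1.
\item It has a fourth-order pole at $z=z_0$, whose coefficients are fixed by \eqref{eq:IIOPE}: the leading term is $\uptau^2$, there is no third-order term, and the second-order term is $-\langle T^\Delta\rangle$.
\item It decays like $z^{-4}$ at infinity, since $I$ is primary of weight $2$.
\end{itemize}
Taking the double residue at $z=t_i$ and $z_0=t_j$ with $i\neq j$ gives $\theta_i\theta_j\uptau^{\otimes2,(i,j)}$ up to a normalization constant. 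The terms with $i=j$ involve a second OPE at the same point. They should reduce, via the $I\times I$ OPE, to the $T^\Delta$ term, which is then absorbed into the $(z-z_0)^{-2}$ part.

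An ansatz of the form $\uptau^2/(z-z_0)^4$ plus a sum over pairs $i<j$ of $1/\big((z-t_i)(z-t_j)(z_0-t_i)(z_0-t_j)\big)$ terms must then match all of these constraints. The factor $(t_i-t_j)^4/(z-z_0)^2$ is the unique combination with the correct decay at infinity and a pole order at $z=z_0$ no worse than two. Multiplying by $(z-z_0)^4$ gives \eqref{eq:det_answ}. A consistency check is that the $(z-z_0)^{-2}$ coefficient reproduces $\langle T^\Delta\rangle$ through the conformal Ward identity.

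\textbf{Main obstacle.} The hardest step is Step 1: showing that the non-integer exponents cancel, and computing the exact constant of the residue, namely $\theta_i$ times the precise wedge operator with the correct shifts of $[\sigma]_2$. Both depend on the detailed sign and normalization conventions in \eqref{fusion_matrices_periodic}.

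I would first test this at the level of highest weight vectors, using the bosonic form \eqref{deg_bosonic} of $J^{\pm}$ acting on $|\theta\rangle$. Then I would verify the diagonal-pair cancellation against the $n=3$ explicit formulas of Prop.~\ref{prop:explicite_cb}.
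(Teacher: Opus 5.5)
\paragraph{Verdict.} There is a genuine gap, in Step~2: the absence of diagonal ($i=j$) contributions is never established.

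\paragraph{Step 1.} Your Step~1 is the paper's Lemma~\ref{lemma:IV_OPE}, proved the same way. The pure terms drop out because $\sum_{\varsigma_1,\varsigma_2}\epsilon_{\varsigma_1\varsigma_2}\tilde F^{[23]}_{\varsigma_1\varsigma_1'}\tilde F^{[23]}_{\varsigma_2\varsigma_2'}=\det\tilde F^{[23]}\,\epsilon_{\varsigma_1'\varsigma_2'}$. The residue is exactly $-\theta\,\bar V^{\theta,\wedge 2}$, read off from the leading coefficients in \eqref{eq:I_V2sh_pf_J_vacL} (cf.\ \eqref{eq:expand_JJ}), so no constant is left undetermined.

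\paragraph{The gap in Step 2.} You do not derive the form of the answer. You posit an ansatz containing only off-diagonal pairs $i<j$ and assert that it is the unique fit. The one point that genuinely needs proof is that there are no diagonal contributions. You dismiss it by saying these terms ``reduce, via the $I\times I$ OPE, to the $T^\Delta$ term'', which is not an argument.

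Put $f(z,z_0)=(z-z_0)^4\langle\bar{\mathsf V}_n^{\otimes2}I(z)I(z_0)\rangle=2\det\Phi$. Your ingredients are: simple poles in $z$ with residues $-\theta_j h_j(z_0)$, where $h_j(z_0)=(z_0-t_j)^4\langle\bar{\mathsf V}_n^{(j),\otimes2}I(z_0)\rangle$; boundedness at $\infty$; and $f(z_0,z_0)=2\uptau^2$ from \eqref{eq:IIOPE}. These only give
\[
f(z,z_0)=2\uptau^2-\sum_{j}\theta_j\Bigl(\frac{1}{z-t_j}-\frac{1}{z_0-t_j}\Bigr)h_j(z_0).
\]
To reach \eqref{eq:det_answ} you need $h_j(t_j)=0$. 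Otherwise a term $\theta_j h_j(t_j)\bigl(\frac{1}{z_0-t_j}-\frac{1}{z-t_j}\bigr)$ survives, which is not of your ansatz form. This is a statement about the OPE of $I(z_0)$ with the \emph{wedge} operator $\bar V^{\theta_j,\wedge2}(t_j)$: its pole must have order at most $3$, not $4$. That is a different regime from the $I\times I$ OPE.

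Relatedly, the $z$-residue of your $G$ at $t_j$, namely $-\tfrac{\theta_j}{2}\langle\bar{\mathsf V}_n^{(j),\otimes2}I(z_0)\rangle$, has a higher-order pole at $z_0=t_j$. So ``simple poles in each variable'' holds for $f$, not for $G$.

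\paragraph{How to close it.} The paper supplies exactly this missing fact with Lemma~\ref{lemma:I_V_OPE_shifted}. Run your Step~1 computation on $\bar V^{\theta+1}\otimes\bar V^{\theta-1}-\bar V^{\theta-1}\otimes\bar V^{\theta+1}$. After fusion, the only singular terms are $V^{\theta}[J^{\mp}(z-t)|\theta\pm1\rangle](t)\otimes V^{\theta}[J^{\pm}(z-t)|\theta\mp1\rangle](t)$, with leading order $(z-t)^{-3}$, so $h_j(t_j)=0$. Then $h_j$ is fixed by three facts:
\begin{itemize}
\item it vanishes at $t_j$;
\item it is bounded at $\infty$;
\item it has simple poles at $t_i$, $i\neq j$, with residues $-\theta_i(t_i-t_j)^4\uptau^{\otimes2,(i,j)}$, again from Lemma~\ref{lemma:IV_OPE}.
\end{itemize}
Substituting into the display above gives \eqref{eq:det_answ}.

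Alternatively, you can avoid the second lemma by a two-variable argument on $f$. The function $f$ has the following properties:
\begin{itemize}
\item it is symmetric under $z\leftrightarrow z_0$, since $I$ is bilinear in the fermionic $J^\pm$ and so $I(z)$, $I(z_0)$ commute;
\item it has at most simple poles at each $t_i$ in each variable;
\item it is regular on the diagonal and bounded at $\infty$.
\end{itemize}
Hence $f=c+\sum_i a_i\bigl(\frac{1}{z-t_i}+\frac{1}{z_0-t_i}\bigr)+\sum_{i,j}\frac{c_{ij}}{(z-t_i)(z_0-t_j)}$ with $c_{ij}=c_{ji}$. Constancy of $f(z_0,z_0)=2\uptau^2$ forces $c_{ii}=0$, $a_i=-\sum_{j\neq i}c_{ij}/(t_i-t_j)$ and $c=2\uptau^2$. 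Each pair then collapses to $-c_{ij}(z-z_0)^2/\bigl((z-t_i)(z-t_j)(z_0-t_i)(z_0-t_j)\bigr)$, with $c_{ij}=\theta_i\theta_j(t_i-t_j)^4\uptau^{\otimes2,(i,j)}$. Either way, this step must be written out; the ansatz cannot simply be asserted.
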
 

    To prove this statement we need the two following preparatory lemmas.
\begin{lemma}\label{lemma:IV_OPE}
Field $I(z)$ is local with respect to $\bar{V}_{[\sigma+l]_2,[\tilde{\sigma}+\tilde{l}]_2}^{\theta}(t)\otimes \bar{V}_{[\sigma+l^{\prime}]_2,[\tilde{\sigma}+\tilde{l}^{\prime}]_2}^{\theta}(t)$. Moreover, their OPE is given by 

\begin{subequations}\label{eq:IV_OPE}
\begin{align}
        I(z)\left(\bar{V}_{[\sigma+l]_2,[\tilde{\sigma}+\tilde{l}]_2}^{\theta}(t)\otimes \bar{V}_{[\sigma+l^{\prime}]_2,[\tilde{\sigma}+\tilde{l}^{\prime}]_2}^{\theta}(t)\right) 
        &=
        -\theta \frac{\bar{V}_{[\sigma]_{2}, l,l^{\prime},[\tilde{\sigma}]_{2}, \tilde{l},\tilde{l}^{\prime}}^{\theta, \wedge 2}}{z - t} + \mathrm{reg},\\
         \left(\bar{V}_{[\sigma+l]_2,[\tilde{\sigma}+\tilde{l}]_2}^{\theta}(t)\otimes \bar{V}_{[\sigma+l^{\prime}]_2,[\tilde{\sigma}+\tilde{l}^{\prime}]_2}^{\theta}(t)\right) I(z)
        &=
        -\theta \frac{\bar{V}_{[\sigma]_{2}, l,l^{\prime},[\tilde{\sigma}]_{2}, \tilde{l},\tilde{l}^{\prime}}^{\theta, \wedge 2}}{z - t} + \mathrm{reg}.
\end{align}
\end{subequations}

\end{lemma}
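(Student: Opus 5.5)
We reduce the statement to the fusion relations \eqref{fusion_23_block}, \eqref{fusion_21_block} for $k=2$, applied separately in each tensor factor. Write $I(z)=J^+(z)\otimes J^-(z)-J^-(z)\otimes J^+(z)$ and put $V\otimes V'$ for the tensor product of periodic vertex operators at $t$.

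In each factor we apply \eqref{fusion_23_block} to move $J^{\varsigma}(z)$ through $\bar V^{\theta}(t)$. This gives
\[
J^{\varsigma}(z)\bar V^{\theta}(t)=\sum_{\varsigma'}\tilde F^{[23]}_{\varsigma\varsigma'}\,\bar V^{\theta+\varsigma'}\big[J^{\varsigma'}(z-t)|\theta\rangle\big](t).
\]
Here the internal index shifts by $k/2=1$, which explains the $+1$ shifts of $\sigma,\tilde\sigma$ in $\bar V^{\theta,\wedge2}$.

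The local expansions are read off from $\psi_+(z)|\theta\rangle=z^{\theta}(\ldots)$ and $\psi_-(z)|\theta\rangle=N^-(\theta)z^{-1-\theta}(\ldots)$ with $k=2$. In the product over both factors, the leading exponents combine to $(z-t)^{\theta-1-\theta}=(z-t)^{-1}$ for the mixed terms $\varsigma'_1\neq\varsigma'_2$. They combine to $(z-t)^{2\theta}$ or $(z-t)^{-2-2\theta}$ for the diagonal terms, which are multivalued.

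The key observation is that $I(z)$ is the $\mathrm{SL}_2$-invariant antisymmetric combination. Since $\det\tilde F^{[23]}=1$ by Prop.~\ref{prop:fusion_det}, we have $\sum_{\varsigma_1,\varsigma_2}\epsilon_{\varsigma_1\varsigma_2}\tilde F_{\varsigma_1\varsigma'_1}\tilde F_{\varsigma_2\varsigma'_2}=\epsilon_{\varsigma'_1\varsigma'_2}$. Hence the diagonal multivalued pieces $\varsigma'_1=\varsigma'_2$ cancel identically, and only
\[
\bar V^{\theta+1}[J^+(z-t)|\theta\rangle]\otimes\bar V^{\theta-1}[J^-(z-t)|\theta\rangle]-(\pm\leftrightarrow\mp)
\]
survives. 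The $(z-t)^{-1}$ and $(z-t)^{\theta}(z-t)^{-1-\theta}$ factors then make $I(z)(V\otimes V')$ single valued near $t$, which is locality. The pole is simple, with residue the product of the leading coefficients $N^+(\theta)N^-(\theta)$ times the antisymmetrized primary vertex operators, i.e.\ $\bar V^{\theta,\wedge2}$.

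It remains to show that this coefficient equals $-\theta$. For $k=2$, \eqref{deg_bos_norm} gives $N^-(\theta)=-(2\ri\sqrt2\,\alpha)$ up to the $\theta=b\alpha$ conversion, and $N^+=1$. We then combine this with the $(k-1)!\sin2\pi p/\pi$ renormalization factors built into $\tilde F$. The sign conventions of $\bar V^{\theta,\wedge 2}$ should produce exactly $-\theta$.

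The second identity follows the same way, using \eqref{fusion_21_block} with $J^{\varsigma'}(t-z)$, or alternatively from the first via \eqref{fusion_13_block}. The latter shows that $I(z)$ passes through $V\otimes V'$ with matrix $\tilde F^{[13]}\otimes\tilde F^{[13]}$ acting on $\wedge^2$, i.e.\ trivially by the determinant again. This means the two products are analytic continuations of one another.

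The main obstacle is this last step: the bookkeeping of normalizations and signs in \eqref{fusion_matrices_periodic}. One must show that the residue is exactly $-\theta$, and that the leading descendant terms vanish so that no $(z-t)^{-2}$ pole appears. That cancellation also relies on the antisymmetry, applied to the leading order of $J^{\pm}(z-t)|\theta\rangle$. I would first verify the constant by computing a three-point function $\langle\theta_3|I(z)(\cdots)|\theta_1\rangle$ explicitly.
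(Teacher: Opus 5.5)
Your skeleton is the paper's proof. You fuse $J^{\varsigma}(z)$ into $\bar V^{\theta}(t)$ in each tensor factor via \eqref{fusion_23_block}. You use $\det\tilde F^{[23]}=1$ to turn $\sum\epsilon_{\varsigma_1\varsigma_2}\tilde F_{\varsigma_1\varsigma_1'}\tilde F_{\varsigma_2\varsigma_2'}$ into $\epsilon_{\varsigma_1'\varsigma_2'}$, which removes the diagonal, multivalued terms. You then read off the singular part from the leading behaviour of $J^{\pm}(z-t)|\theta\rangle$. Finally, you get the reversed ordering from \eqref{fusion_13_block} with $\det\tilde F^{[13]}=1$. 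The same $\tilde F$ appears in both factors because the fusion matrices are $1$-periodic in the outer momenta.

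The gap is the one step that fixes the answer, the residue $-\theta$, and your plan for it points the wrong way. After the determinant step the fusion matrices are gone. So no normalization absorbed into $\tilde F$ can enter the residue: not the $(k-1)!\sin 2\pi p/\pi$ factors, not the phases, not any sign conventions.

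Likewise there is no $(z-t)^{-2}$ term to cancel. The surviving mixed terms are $(z-t)^{\theta}(z-t)^{-1-\theta}$ times power series in $(z-t)$, so the pole is automatically simple. The residue is simply the product of the leading coefficients of $J^{+}(z-t)|\theta\rangle$ and $J^{-}(z-t)|\theta\rangle$.

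Your identification of the residue as $N^+(\theta)N^-(\theta)$ with \eqref{deg_bos_norm} at $k=2$ gives $1\cdot(-2\theta)=-2\theta$, which is off by a factor $2$. The missing input is the normalization of the symplectic fermions relative to $\psi_\pm$. At $k=2$, \eqref{eq:OPE_deg_diff} gives $\psi_+(z)\psi_-(w)\sim 2(z-w)^{-2}$, whereas $J^+(z)J^-(w)\sim (z-w)^{-2}$. Thus $J^{\pm}=c_{\pm}\psi_{\pm}$ with $c_+c_-=\tfrac12$, and $I(z)=\tfrac12\,\psi_+\wedge\psi_-$. A diagonal rescaling only conjugates $\tilde F$, so $\det\tilde F=1$ is unaffected.

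Hence the residue is $\tfrac12N^+(\theta)N^-(\theta)=-\theta$. Equivalently, the leading coefficients are $1$ and $-\theta$, as in \eqref{eq:I_V2sh_pf_J_vacL}, and this is exactly what the paper's proof uses in its last step. With this inserted, your argument is complete. No three-point check or bookkeeping in \eqref{fusion_matrices_periodic} is needed.
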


\begin{proof}
    We omit the lower indices of the operators $\bar{V}_{[\sigma{+}l]_2,[\tilde{\sigma}{+}\tilde{l}]}^{\theta}(t)$ and write just $\bar{V}^{\theta}(t)$ for brevity. 
    We prove the first equation of  
    \eqref{eq:IV_OPE}, the proof of the second is analogous.
    We use the fusion rules \eqref{fusion_rules_per} and Prop. \ref{prop:fusion_det}. 
      \begin{multline} 
         I(z)\bar{V}^{\theta}(t)\otimes \bar{V}^{\theta}(t) = 
J^{+}(z) \wedge J^{-}(z)\bar{V}^{\theta}(t)\otimes \bar{V}^{\theta}(t) = \sum_{\varsigma_1, 
\varsigma_2}\tilde{F}^{[13]}_{1,\varsigma_1}\tilde{F}^{[13]}_{-1,\varsigma_2}\bar{V}^{\theta}(t)\otimes \bar{V}^{\theta}(t)J^{\varsigma_1}(z) \wedge J^{\varsigma_2}(z) 
         =\\= \sum_{\varsigma_1, 
\varsigma_2}\tilde{F}^{[13]}_{1,\varsigma_1}\tilde{F}^{[13]}_{-1,\varsigma_2}\epsilon_{\varsigma_1,\varsigma_2}\bar{V}^{\theta}(t)\otimes \bar{V}^{\theta}(t)J^{+}(z) \wedge J^{-}(z) =
\det\left(\tilde{F}^{[13]}\right)\bar{V}^{\theta}(t)\otimes \bar{V}^{\theta}(t)
 I(z) = \bar{V}^{\theta}(t)\otimes \bar{V}^{\theta}(t) I(z). 
     \end{multline} 
     
     \begin{multline} 
     I(z)\left(\bar{V}^{\theta}(t)\otimes \bar{V}^{\theta}(t)\right) = \sum_{\varsigma_1,\varsigma_2}\epsilon_{\varsigma_1,\varsigma_2}J^{\varsigma_1}(z)\bar{V}^{\theta}(t) \otimes J^{\varsigma_2}(z)\bar{V}^{\theta}(t) =
\\= \sum_{\varsigma_1, 
\varsigma_2,\varsigma_1', \varsigma_2'}\epsilon_{\varsigma_1, \varsigma_2}\tilde{F}^{[23]}_{\varsigma_1, 
\varsigma_1'}\tilde{F}^{[23]}_{\varsigma_2, \varsigma_2'}\bar{V}^{\theta+\varsigma_1'}[J^{\varsigma_1'}(z-t)|\theta\rangle](t) \otimes 
\bar{V}^{\theta+\varsigma_2'}[J^{\varsigma_2'}(z-t)|\theta\rangle](t) =\\= 
\underbrace{\det(\tilde{F}^{[23]})}_{=1}\sum_{\varsigma_1', \varsigma_2'}\epsilon_{\varsigma_1', 
\varsigma_2'}\bar{V}^{\theta+\varsigma_1'}[J^{\varsigma_1'}(z-t)|\theta\rangle](t) \otimes 
\bar{V}^{\theta+\varsigma_2'}[J^{\varsigma_2'}(z-t)|\theta\rangle](t) =\\
= -\theta 
\frac{\bar{V}^{\theta+1}(t)\otimes \bar{V}^{\theta-1}(t)-\bar{V}^{\theta-1}(t)\otimes \bar{V}^{\theta+1}(t)}{z - t} + \mathrm{reg}. 
     \end{multline}
     Here in the last equality we use the expansions \eqref{eq:I_V2sh_pf_J_vacL}.
\end{proof}

\begin{lemma}\label{lemma:I_V_OPE_shifted}
OPE between $I(z)$ and $\bar{V}_{[\sigma]_{2}, l,l^{\prime},[\tilde{\sigma}]_{2}, \tilde{l},\tilde{l}^{\prime}}^{\theta, \wedge 2}$ is given by
\begin{subequations}\label{eqs:I_V_OPE_shifted}
    \begin{align}
        I(z)\bar{V}_{[\sigma]_{2}, l,l^{\prime},[\tilde{\sigma}]_{2}, \tilde{l},\tilde{l}^{\prime}}^{\theta, \wedge 2}= 
        \frac{1}{\theta}\left(\frac{4\Delta_{\theta}}{(z{-}t)^3} {+} \frac{1}{(z{-}t)^2}\partial_{t} {+} \frac{1}{z{-}t}\left(\mathbf{1}\otimes \partial_{t} {-} \partial_{t}\otimes \mathbf{1}\right)^2 \right)\left(\bar{V}_{[\sigma{+}l]_2,[\tilde{\sigma}{+}\tilde{l}]_2}^{\theta}(t)\otimes \bar{V}_{[\sigma{+}l^{\prime}]_2,[\tilde{\sigma}{+}\tilde{l}^{\prime}]_2}^{\theta}(t)\right) {+} \mathrm{reg}.\\
        \bar{V}_{[\sigma]_{2}, l,l^{\prime},[\tilde{\sigma}]_{2}, \tilde{l},\tilde{l}^{\prime}}^{\theta, \wedge 2}I(z) =
        \frac{1}{\theta}\left(\frac{4\Delta_{\theta}}{(z{-}t)^3} {+} \frac{1}{(z{-}t)^2}\partial_{t} {+} \frac{1}{z{-}t}\left(\mathbf{1}\otimes \partial_{t} {-} \partial_{t}\otimes \mathbf{1}\right)^2 \right)\left(\bar{V}_{[\sigma{+}l]_2,[\tilde{\sigma}{+}\tilde{l}]_2}^{\theta}(t)\otimes \bar{V}_{[\sigma{+}l^{\prime}]_2,[\tilde{\sigma}{+}\tilde{l}^{\prime}]_2}^{\theta}(t)\right) {+} \mathrm{reg}.
    \end{align}
\end{subequations}
\end{lemma}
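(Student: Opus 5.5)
The plan follows the same two-stage strategy as in Lemma~\ref{lemma:IV_OPE}. First use the fusion relations to move $I(z)$ onto the puncture $t$. Then read off the singular part from the action of the symplectic fermions on highest weight vectors. Write $\bar V^{\theta,\wedge2}=\bar V^{\theta+1}\otimes\bar V^{\theta-1}-\bar V^{\theta-1}\otimes\bar V^{\theta+1}$, with the lower indices suppressed.

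\textbf{Step 1 (fusion onto the puncture).} Apply \eqref{fusion_23_block} to each tensor factor. Since $\det\tilde F^{[23]}=1$ (Prop.~\ref{prop:fusion_det}), the antisymmetric combination survives, exactly as in the proof of Lemma~\ref{lemma:IV_OPE}. This gives
\begin{equation*}
I(z)\,\bar V^{\theta\pm1}\otimes\bar V^{\theta\mp1}=\sum_{\varsigma_1,\varsigma_2}\epsilon_{\varsigma_1\varsigma_2}\,\bar V^{\theta\pm1+\varsigma_1}\big[J^{\varsigma_1}(z-t)|\theta\pm1\rangle\big](t)\otimes\bar V^{\theta\mp1+\varsigma_2}\big[J^{\varsigma_2}(z-t)|\theta\mp1\rangle\big](t).
\end{equation*}
The $\tilde F^{[23]}$ here are evaluated at the shifted momenta $\theta\pm1$. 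I will verify, using the 1-periodicity of $\tilde F$ and the explicit formulas \eqref{fusion_matrices_periodic}, that the relevant gauge factors cancel between the two terms of $\bar V^{\theta,\wedge2}$.

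\textbf{Step 2 (local expansions).} Only the terms whose total shift brings both factors back to momentum $\theta$ can be singular. These are $\varsigma_1=\mp$, $\varsigma_2=\pm$. For them I need the expansion of $J^{\mp}(x)|\theta\pm1\rangle$ projected to $\mathbb{L}_\theta$. The leading exponents are $x^{-\theta-1}$ for $J^-$ acting on $|\theta+1\rangle$, and $x^{\theta-1}$ for $J^+$ acting on $|\theta-1\rangle$, each with the bosonic normalizations $N^\pm$ from \eqref{deg_bos_norm}.

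I will expand each product to the order needed, i.e.\ through descendants of level $\le 2$ in each factor. Explicitly these are $|\theta\rangle$, $L_{-1}|\theta\rangle$, and $L_{-1}^2|\theta\rangle$, $L_{-2}|\theta\rangle$, with coefficients fixed by the BPZ equation \eqref{eq:deg12} with $\mathsf{b}^2=2$. By \eqref{desc_recursion}, $V[L_{-1}v]=\partial_t V[v]$. The total power of $(z-t)$ in the product is $x^{-\theta-1}x^{\theta-1}=x^{-2}$ times a series, possibly also with an $x^{-3}$ contribution coming from cross terms. This reduction to $\partial_t$ acting on one or the other tensor factor is what produces the structure $\frac{4\Delta_\theta}{(z-t)^3}+\frac{\partial_t}{(z-t)^2}+\frac{(\mathbf 1\otimes\partial_t-\partial_t\otimes\mathbf 1)^2}{z-t}$.

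\textbf{Step 3 (checks and the second equation).} As a consistency check I will compare with Lemma~\ref{lemma:IV_OPE} and with the $I(z)I(z_0)$ OPE of Prop.~\ref{prop:IIOPE}. Composing the two lemmas must reproduce the pole structure $2(z-z_0)^{-4}-2T^\Delta(z_0)(z-z_0)^{-2}$, and this constrains the $1/\theta$ prefactor and the coefficient $4\Delta_\theta$. The $L_{-2}$ terms are then matched using $T^{\Delta}$, together with the fact that $I$ is primary of weight $2$.

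The right-ordered equation $\bar V^{\theta,\wedge2}I(z)$ follows by the same argument. Alternatively, it follows from the first equation via \eqref{fusion_13_block}, because $\det\tilde F^{[13]}=1$ makes $I(z)$ commute with the tensor squares, as in the first display of the proof of Lemma~\ref{lemma:IV_OPE}.

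\textbf{Main obstacle.} I expect the hardest part to be the bookkeeping of normalizations in Step 2. The products $N^-(\theta+1)N^+(\theta-1)$ and the level-2 descendant coefficients must combine into exactly $\frac{1}{\theta}\cdot 4\Delta_\theta$, and in the $1/(z-t)$ term into the pure square of $\mathbf 1\otimes\partial_t-\partial_t\otimes\mathbf 1$. For this I would first compute everything in the free-field realization \eqref{deg_bosonic}, where $J^+=e^{-\ri\sqrt2\varphi}$ and $J^-=\partial\varphi\,e^{\ri\sqrt2\varphi}$ (up to constants) make the descendant coefficients explicit.
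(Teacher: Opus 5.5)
Your strategy is the one the paper uses. You fuse $I(z)$ onto $t$ with $\tilde F^{[23]}$, use $\det\tilde F^{[23]}=1$ to reduce to $\sum_{\varsigma_3,\varsigma_4}\epsilon_{\varsigma_3\varsigma_4}V[J^{\varsigma_3}(z-t)|\cdot\rangle]\otimes V[J^{\varsigma_4}(z-t)|\cdot\rangle]$, and read off the singular part from the expansions of $J^\pm$ on highest weight vectors together with $V[L_{-1}v]=\partial_tV[v]$.

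However, Step~2, which carries the whole content of the lemma, uses a wrong exponent and would not produce the stated formula. Since $\psi_-(z)|\theta'\rangle\propto z^{1-k-\theta'}$, for $k=2$ and $\theta'=\theta+1$ one gets $J^-(x)|\theta{+}1\rangle\propto x^{-\theta-2}$, not $x^{-\theta-1}$. So the singular product starts at $(z-t)^{-3}$. Your suggestion that an $x^{-3}$ term could come from ``cross terms'' on top of a leading $x^{-2}$ cannot be right: a product of two series is never more singular than the product of their leading terms. With your exponents you would find at most a double pole and no $4\Delta_\theta$ term.

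Use the correct exponents together with the $J$-normalized coefficients of \eqref{eq:I_V2sh_pf_J_vacL}. Note that for $k=2$ the value $N^-=-2\theta'$ from \eqref{deg_bos_norm} belongs to $\psi_-$, whose pairing with $\psi_+$ is twice that of $J^-$ with $J^+$; using it directly would double the answer. With $x=z-t$ one has
\begin{gather*}
J^-(x)|\theta{+}1\rangle=x^{-\theta-2}\Big({-}(\theta{+}1)+L_{-1}x+\tfrac12\big(L_{-2}-\tfrac1\theta L_{-1}^2\big)x^2+\dots\Big)|\theta\rangle,\\
J^+(x)|\theta{-}1\rangle=x^{\theta-1}\Big(1+\tfrac1\theta L_{-1}x+\tfrac{L_{-1}^2/\theta+L_{-2}}{2(\theta+1)}x^2+\dots\Big)|\theta\rangle .
\end{gather*}
The singular part of $I(z)\bar V^{\theta,\wedge2}$ is $-V^\theta[J^-(x)|\theta{+}1\rangle]\otimes V^\theta[J^+(x)|\theta{-}1\rangle]-V^\theta[J^+(x)|\theta{-}1\rangle]\otimes V^\theta[J^-(x)|\theta{+}1\rangle]$, while the $V^{\theta\pm2}\otimes V^{\theta\mp2}$ terms are $O(x)$. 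The contributions are:
\begin{itemize}
\item total level $0$ gives $2(\theta+1)x^{-3}=\frac{4\Delta_\theta}{\theta}x^{-3}$;
\item level $1$ gives $\frac1\theta(L_{-1}\otimes1+1\otimes L_{-1})x^{-2}$, i.e.\ $\frac1\theta\partial_t$;
\item total level $2$ gives $\frac1\theta(L_{-1}\otimes1-1\otimes L_{-1})^2x^{-1}$.
\end{itemize}
In particular the $L_{-2}$ terms cancel identically, since the two products contribute $\pm\frac12(1\otimes L_{-2}-L_{-2}\otimes1)$. So nothing has to be ``matched using $T^\Delta$'', and Step~3 is unnecessary.

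One smaller point about Step~1. What you need is not a cancellation between the two terms of $\bar V^{\theta,\wedge2}$. Rather, within each term both tensor factors must carry the same $\tilde F^{[23]}$, so that $\sum_{\varsigma_1,\varsigma_2}\epsilon_{\varsigma_1\varsigma_2}\tilde F^{[23]}_{\varsigma_1\varsigma_3}\tilde F^{[23]}_{\varsigma_2\varsigma_4}=\det(\tilde F^{[23]})\,\epsilon_{\varsigma_3\varsigma_4}$. This holds because $\tilde F^{[23]}$ is $1$-periodic in the outer momenta and unchanged when the middle momentum is shifted by $2$. The same observation for $\tilde F^{[13]}$ justifies your derivation of the right-ordered equation.
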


We present the proof in Appendix \ref{app:proofs}. It is similar to the proof of Lemma \ref{lemma:IV_OPE}.

    \begin{proof}[Proof of Prop. \ref{prop:det_answ}]
        From Prop. \ref{prop:CFT_det} and Prop. \ref{prop:conn_det_props} we have 
        \begin{equation}
        f(z, z_0) \coloneqq 2\det\left(\Phi(\{t_j\}, z, z_0)\right) = (z - z_0)^4\langle\bar{\mathsf{V}}_n^{\otimes 2}I(z)I(z_0)\rangle \in \mathbb{C}(z,z_0).
        \end{equation}

        First, we fix $z_0\neq 0,t_1,\dots , t_n$ and study the function $f(\cdot, z_0)$. Possible singularities of this function are at $t_j$'s, $z_0$, $\infty$.

        Since $I(z)$ is a primary field of dimension $2$ the function $f(\cdot,z_0)$ is regular at $\infty$. From OPE \eqref{eq:IIOPE}
        we obtain that $f(\cdot,z_0)$ is regular at $z_0$ and 
        \begin{equation}\label{eq:det_pf_normzz0}
            f(z_0,z_0) = 2\uptau^2.
        \end{equation}
        From OPE \eqref{eq:IV_OPE} we obtain
        \begin{equation}\label{eq:det_pf_exp_tj}
            \langle\bar{\mathsf{V}}_n^{\otimes 2}I(z)I(z_0)\rangle = -\frac{\theta_{j}}{z - t_{j}}\langle\bar{\mathsf{V}}_n^{(j),\otimes 2}I(z_0)\rangle + \mathrm{reg}_{z\rightarrow t_j},\quad j=1,\dots,n.
        \end{equation}

        The expansions \eqref{eq:det_pf_exp_tj} and normalization \eqref{eq:det_pf_normzz0} allow to recover the dependence of $f(z,z_0)$ on $z$,
        \begin{equation}\label{eq:det_pf_f_from_g}
            f(z, z_0) = -\sum_{j = 1}^{n}\theta_{j}\left(\frac{1}{z-t_j} - \frac{1}{z_0 - t_{j}}\right)h_{j}(z_0) + 2\uptau^2, \textit{ where }h_{j}(z_0) \coloneqq (z_0 - t_j)^4\langle\bar{\mathsf{V}}_n^{(j),\otimes 2}I(z_0)\rangle.
        \end{equation}
        
        Similarly to the consideration above the only possible singularities of $h_{j}(z_0)$ are at $t_1,\dots, t_{n}$. By Lemma \ref{lemma:I_V_OPE_shifted}, $\langle\bar{\mathsf{V}}_n^{(j),\otimes 2}I(z_0)\rangle$ has a pole of order $\leq 3$ at $t_j$. This implies
        \begin{equation}\label{eq:det_pf_gj_norm}
            h_{j}(t_j) = 0.
        \end{equation}

        Using OPE \eqref{eq:IV_OPE} we obtain
        \begin{equation}\label{eq:det_pf_exp_tj_g}
            h_{j}(z_0) = -(t_{i}-t_{j})^4\frac{\theta_{i}}{z_0 - t_i}\langle\bar{\mathsf{V}}_n^{(i,j),\otimes 2}\rangle + \mathrm{reg}_{z_0 \rightarrow t_i} = -(t_{i}-t_{j})^4\frac{\theta_{i}}{z_0 - t_i}\uptau^{\otimes2,(i,j)} + \mathrm{reg}_{z_0 \rightarrow t_i},\quad i=1,\dots, n,\; i\neq j.
        \end{equation}
        The expansions \eqref{eq:det_pf_exp_tj_g} and normalization \eqref{eq:det_pf_gj_norm} allow to  recover the rational functions $h_j$ for $j = 1,\dots, n$.
        \begin{equation}\label{eq:gj_formula}
            h_{j}(z_0) =  -\sum_{i:i\neq j}\theta_{i}(t_{i}-t_{j})^4\left(\frac{1}{z_0 - t_i}-\frac{1}{t_j - t_i}\right)\uptau^{\otimes2,(i,j)}.
        \end{equation}
        Substituting equations \eqref{eq:gj_formula} into \eqref{eq:det_pf_f_from_g} we get
        \begin{multline}
            f(z, z_0) = 2\uptau^2 + \sum_{j = 1}^{n}\sum_{i : i \neq j}\theta_{j}\theta_{i}(t_i - t_j)^4\left(\frac{1}{z-t_j} - \frac{1}{z_0 - t_{j}}\right)\left(\frac{1}{z_0 - t_{i}} - \frac{1}{t_{j} - t_{i}}\right)\uptau^{\otimes2,(i,j)} = \\ =  2\uptau^2 - (z-z_0)^2\sum_{i < j}\frac{\theta_{j}\theta_{i}(t_i - t_j)^4}{(z - t_i)(z_0 - t_i)(z - t_j)(z_0 - t_j)}\uptau^{\otimes2,(i,j)}.
        \end{multline}
    \end{proof}

\subsubsection{Bilinear relations on $c=-2$ tau functions}\label{ssec:answers}

\begin{theorem}\label{thm:tau_identities}
    The $c=-2$ tau functions satisfy the following differential-difference relations.
   \begin{equation}\label{eq:tau_identities}
   \sum_{i = 1,\;i\neq j}^{n}\theta_j\theta_i\,(t_i{-}t_j)^d\,\uptau^{\otimes2,(i,j)}=
   \begin{cases}
   4\Delta_{\theta_j}\uptau^2, & d=2\\
   2\uptau\,\partial_{t_j}\uptau, & d=1\\
   2D^2_{[t_j]}(\uptau,\uptau), & d=0
   \end{cases}, \quad \textrm{for} \quad j=1,\ldots, n,
   \end{equation}
   where $D^2_{[t_j]}(\uptau,\uptau)=\uptau\partial_{t_j}^2\uptau-(\partial_{t_j}\uptau)^2$ is the second Hirota derivative.
\end{theorem}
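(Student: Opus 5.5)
The plan is to compare two descriptions of the single rational function $h_j(z_0)=(z_0-t_j)^4\langle(\bar{\mathsf{V}}_n^{\otimes2})^{(j)}I(z_0)\rangle$ from the proof of Prop.~\ref{prop:det_answ}. The first description is the global formula \eqref{eq:gj_formula}, obtained from the poles at $t_i$, $i\neq j$, via Lemma~\ref{lemma:IV_OPE}. The second is the local behaviour of $h_j$ near $z_0=t_j$, read off from Lemma~\ref{lemma:I_V_OPE_shifted}. In the proof of Prop.~\ref{prop:det_answ} only the weaker consequence $h_j(t_j)=0$ was used; here the idea is to use the full singular part at $t_j$. That part consists of poles of order $3,2,1$, and these should produce exactly the three cases $d=2,1,0$.

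\textbf{Step 1: local expansion at $t_j$.} By Lemma~\ref{lemma:I_V_OPE_shifted}, inserting into the correlator gives
\begin{equation*}
\langle(\bar{\mathsf{V}}_n^{\otimes2})^{(j)}I(z_0)\rangle=\frac{1}{\theta_j}\left(\frac{4\Delta_{\theta_j}\uptau^2}{(z_0-t_j)^3}+\frac{\partial_{t_j}(\uptau\otimes\uptau)}{(z_0-t_j)^2}+\frac{(\mathbf1\otimes\partial_{t_j}-\partial_{t_j}\otimes\mathbf1)^2(\uptau\otimes\uptau)}{z_0-t_j}\right)+\mathrm{reg}.
\end{equation*}
Here the derivatives act on the $t_j$-dependence of each tensor factor, and afterwards the factors are multiplied. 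This gives $\partial_{t_j}(\uptau\otimes\uptau)\mapsto 2\uptau\partial_{t_j}\uptau$ and $(\mathbf1\otimes\partial-\partial\otimes\mathbf1)^2\mapsto 2(\uptau\partial^2\uptau-(\partial\uptau)^2)=2D^2_{[t_j]}(\uptau,\uptau)$. Multiplying by $(z_0-t_j)^4$, the expansion of $h_j$ at $t_j$ is
\begin{equation*}
h_j(z_0)=\theta_j^{-1}\left(4\Delta_{\theta_j}\uptau^2\,(z_0-t_j)+2\uptau\partial_{t_j}\uptau\,(z_0-t_j)^2+2D^2_{[t_j]}(\uptau,\uptau)\,(z_0-t_j)^3\right)+O\big((z_0-t_j)^4\big).
\end{equation*}

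\textbf{Step 2: Taylor-expand the global formula.} Take \eqref{eq:gj_formula} and expand around $z_0=t_j$, using
\begin{equation*}
\frac{1}{z_0-t_i}-\frac{1}{t_j-t_i}=\sum_{m\ge1}(-1)^m\frac{(z_0-t_j)^m}{(t_j-t_i)^{m+1}}.
\end{equation*}
The coefficient of $(z_0-t_j)^m$ is then a sum over $i\neq j$ of $\theta_i(t_i-t_j)^{3-m}\uptau^{\otimes2,(i,j)}$, up to a sign $\pm1$ that I will track carefully. Matching the coefficients of $m=1,2,3$ with Step 1 and multiplying through by $\theta_j$ gives the three identities \eqref{eq:tau_identities} with $d=2,1,0$. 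A consistency check is that the $m=0$ coefficient vanishes on both sides, which is just \eqref{eq:det_pf_gj_norm}.

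\textbf{Main obstacle.} The hard part is getting the bookkeeping exactly right. This includes the signs, and in particular the symmetry of $\uptau^{\otimes2,(i,j)}$ under $i\leftrightarrow j$. It also includes the precise meaning of the shifted-index summations inside the $\wedge2$ operators, and the meaning of the correlators $(\bar{\mathsf{V}}_n^{\otimes2})^{(j)}$ once the $\Lambda$-sums are included, so that "$\partial_{t_j}$ acting on $\bar V\otimes\bar V$" really becomes a derivative of $\uptau\otimes\uptau$. I would also have to verify that the regular part in Lemma~\ref{lemma:I_V_OPE_shifted} contributes only at order $(z_0-t_j)^4$ and beyond, which holds because of the prefactor $(z_0-t_j)^4$. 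Finally, Step 1 uses that the OPE holds inside the full correlator; this follows from the locality statements in Lemmas~\ref{lemma:IV_OPE} and~\ref{lemma:I_V_OPE_shifted}, which is where their both-sided versions are used. If any sign mismatch appears, I would recheck it against the $d=2$ case in the simplest configuration $n=3$.
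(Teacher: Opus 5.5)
Your proposal is correct and follows the paper's proof essentially verbatim. The paper also reads off $\theta_j h_j^{(m)}(t_j)$, $m=1,2,3$, from Lemma~\ref{lemma:I_V_OPE_shifted} and compares them with $\theta_j h_j^{(3-d)}(t_j)=(3-d)!\sum_{i\neq j}\theta_j\theta_i(t_i-t_j)^d\,\uptau^{\otimes2,(i,j)}$, obtained from \eqref{eq:gj_formula}; like you, it notes that Prop.~\ref{prop:det_answ} used only the pole-order bound at $t_j$, so the argument is not circular. The sign you left open is $+1$: the identity $(-1)^m(t_j-t_i)^{-m-1}=-(t_i-t_j)^{-m-1}$ cancels the overall minus sign in \eqref{eq:gj_formula}.
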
 
\begin{proof}
    The identities follow from Lemma \ref{lemma:I_V_OPE_shifted} that gives the expansion of the correlating function $\langle \bar{\mathsf{V}}_n^{\otimes 2}I(z)I(z_0)\rangle$ in the limit $z, z_0 \to t_j$ compared to the expansion of the  explicit expression for this function given by Prop. \ref{prop:det_answ}. Note that even though the proof of Prop. \ref{prop:det_answ} uses Lemma \ref{lemma:I_V_OPE_shifted} the only part of Lemma \ref{lemma:I_V_OPE_shifted} used in the proof of Prop. \ref{prop:det_answ} is that $I(z_0)\bar{\mathsf{V}}_n^{(j),\otimes 2}$ has a pole of order $\leq 3$ as $z_0 \to t_j$.
    
    Recall the functions $h_{j}(z_0) = (z_0 - t_j)^4\langle\bar{\mathsf{V}}_n^{(j),\otimes 2}I(z_0)\rangle$ from the proof of Prop.~\ref{prop:det_answ}. Then from Prop.~\ref{lemma:I_V_OPE_shifted} we get expansions
    \begin{equation}
    \theta_{j}\bar{\mathsf{V}}_n^{(j),\otimes 2}I(z_0) = \frac{4\Delta_{\theta_j}\bar{\mathsf{V}}_n^{\otimes 2}}{(z_0 {-} t_j)^3} {+} \frac{1}{(z_0 {-} t_j)^2}\partial_{t_j}\left(\bar{\mathsf{V}}_n^{\otimes 2}\right) {+} \frac{1}{(z_0 {-} t_j)}\left(\mathbf{1}\otimes \partial_{{j}} {-} \partial_{{j}}\otimes \mathbf{1}\right)^2\bar{\mathsf{V}}_n^{\otimes 2} {+} \mathrm{reg}_{z_0 \rightarrow t_j},\quad j = 1,\dots,n.
    \end{equation}
    This gives
    \begin{subequations}
    \begin{align}
            \label{eq:pf_lhs_tau_alg}\theta_{j}h_{j}'(t_{j}) &= 4\Delta_{\theta_j}\langle \bar{\mathsf{V}}_n \rangle^2,\\
            \label{eq:pf_lhs_tau_evol}\theta_{j}h_{j}''(t_{j}) &= 2\partial_{t_j}\left(\langle \bar{\mathsf{V}}_n \rangle^2\right),\\
            \label{eq:pf_lhs_tau_Hirota}\theta_{j}h_{j}'''(t_{j}) &= 12 \left(\langle\bar{\mathsf{V}}_n\rangle\partial_{t_j}^{2}\langle\bar{\mathsf{V}}_n\rangle - \left(\partial_{t_j}\langle\bar{\mathsf{V}}_n\rangle\right)^2\right).
        \end{align}
    \end{subequations}

    As well from the formula for $h_{j}(z_0)$ given by \eqref{eq:gj_formula} we obtain
    \begin{equation}\label{eq:bilin_pf_1}
           \theta_j h_j^{(3-d)}(t_j) = (3-d)!\sum_{i\neq j}\theta_j\theta_i(t_i{-}t_j)^{d}\langle\bar{\mathsf{V}}_n^{(i,j),\otimes 2}\rangle.
        \end{equation}

    Comparing Equation \eqref{eq:bilin_pf_1} for $d = 2, 1,0$ with Equations \eqref{eq:pf_lhs_tau_alg}, \eqref{eq:pf_lhs_tau_evol} and \eqref{eq:pf_lhs_tau_Hirota} respectively we obtain the cases of \eqref{eq:tau_identities}.
\end{proof}

In the simplest case \(n=4\), we have numerically checked the bilinear relations in Theorem~\ref{thm:tau_identities} in the first orders of \(t\) expansion. 



We believe that there are similar identities for tau functions when $k>2$. 
However, the analogue of the determinant formula~\eqref{eq:det_answ} for \(k>2\) is expected to be more complicated, because the singular parts of the corresponding OPEs are more complicated than those in Lemmas~\ref{lemma:IV_OPE} and~\ref{lemma:I_V_OPE_shifted}.

\subsubsection{Definition of $c=-2$ tau function in terms of the linear problem}
\label{ssec:taudef}

It is shown in work \cite{ILT14} that the tau function defined as a correlator of monodromy vertex operators in $k=1$ case is precisely isomonodromic tau function defined by

\begin{equation}\label{eq:tau_c1}
\partial_{t_j}\log(\tau) = \frac{1}{2}\oint_{C_{t_j}}\frac{\mathrm{d}z}{2\pi\ri}\mathrm{Tr}(A^{2}(z)),
\end{equation}
where $A(z)$ is a flat connection corresponding to the local system defining and $C_{t_j}$ is a small contour around the point $t_j$.

In this section we show that similar relation exists in $k=2$ case for the function $\uptau$. More precisely in this section we relate the connection $A(z,z_0) = \partial_z\Phi(z,z_0)\Phi^{-1}(z,z_0)$ corresponding to the solution of the modified RH problem \eqref{eq:gen_RH_sol_cft} to the $c=-2$ tau function $\uptau$ defined by \eqref{eq:tau_def_gener}. 

Note that in the $c=1$ case the connection corresponding to the solution of RH problem given by \ref{thm:cm2_linear_solve} does not depend on $z_0$ so it worths nothing to obtain
\begin{equation}\label{eq:tau_c1_add}
\partial_{t_j}\log(\tau) = \frac{1}{2}\oint_{C_{t_j}}\frac{\mathrm{d}z_0}{2\pi\ri}\oint_{C_{t_j}}\frac{\mathrm{d}z}{2\pi\ri}\frac{\mathrm{Tr}(A^{2}(z))}{z-z_0}.
\end{equation}

In the $k=2$ case the connection $A(z,z_0)$ essentially depends on $z_0$ therefore the relation \eqref{eq:tau_c1} cannot be generalized to the case of $\uptau$. However the relation \eqref{eq:tau_c1_add} can be generalized as follows. 

\begin{prop}\label{prop:tau_isom_def} We have
\begin{equation}
\frac{1+4\Delta_{\theta_j} - 8\Delta_{\theta_j}^2}{4\Delta_{\theta_j}^2}\partial_{t_j}\log\uptau = \frac{1}{2}\oint_{C_{t_j}}\frac{\mathrm{d} z_{0}}{2\pi \ri}\oint_{C_{t_j}}\frac{\mathrm{d} z}{2\pi \ri}\frac{\mathrm{Tr}(A^{2}(z, z_0))}{z - z_0}.
\end{equation}
\end{prop}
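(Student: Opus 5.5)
The plan is to express $\mathrm{Tr}\,A^2(z,z_0)$ through the determinant $\det\Phi$ and the second derivative of $\Phi$, and then to evaluate the double contour integral using the local expansions near $t_j$ that CFT supplies.

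\textbf{Step 1: reduce $\mathrm{Tr}\,A^2$ to correlators.} Write $\Phi=\Phi(z,z_0)$ for the solution \eqref{eq:gen_RH_sol_cft} and set $A=\Phi'\Phi^{-1}$, where the prime denotes $\partial_z$. For a $2\times2$ matrix, Cayley--Hamilton gives
\begin{equation*}
\mathrm{Tr}(A^2)=(\mathrm{Tr} A)^2-2\det A,\qquad \mathrm{Tr}A=\partial_z\log\det\Phi,\qquad \det A=\frac{\det\Phi'}{\det\Phi}.
\end{equation*}
Hence
\begin{equation*}
\mathrm{Tr}A^2=(\partial_z\log\det\Phi)^2-2\,\frac{\det\Phi'}{\det\Phi}.
\end{equation*}
For $\det\Phi$ we use Prop.~\ref{prop:det_answ}. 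For $\det\Phi'$ we use the analogue of Prop.~\ref{prop:CFT_det}: after removing the factor $(z-z_0)^2$, one has
\begin{equation*}
\det\partial_z\langle\bar{\mathsf V}_nJ^{\varsigma}(z)J^{\varsigma_0}(z_0)\rangle=\tfrac12\langle\bar{\mathsf V}_n^{\otimes2}\,(\partial J^+\wedge\partial J^-)(z)\,I(z_0)\rangle .
\end{equation*}
The same locality argument as in Lemma~\ref{lemma:IV_OPE} applies, because $\det\tilde F=1$. It shows that this is again a rational function in $z$ and $z_0$, with poles only at the $t_i$ and along $z=z_0$.

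\textbf{Step 2: localize the integral.} The integrand is meromorphic. We choose the contours with $|z-t_j|>|z_0-t_j|$, so that $1/(z-z_0)$ is expanded in powers of $(z_0-t_j)/(z-t_j)$. The double residue then reduces to the coefficients of the Laurent expansion of $\mathrm{Tr}A^2$ in both variables near $z=z_0=t_j$. In particular, only the singular parts of the OPEs of $I$ and of $\partial J^+\wedge\partial J^-$ with $\bar V^{\theta_j}(t_j)^{\otimes2}$ contribute, together with their shifted versions $\bar V^{\theta_j,\wedge2}$.

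For $\det\Phi$ these singular parts are explicit:
\begin{itemize}
\item Lemma~\ref{lemma:IV_OPE} gives the leading simple pole.
\item Lemma~\ref{lemma:I_V_OPE_shifted} gives the pole of order at most $3$, with coefficients $4\Delta_{\theta_j}$, $\partial_{t_j}$ and the Hirota-type operator $(\mathbf1\otimes\partial-\partial\otimes\mathbf1)^2$.
\end{itemize}
So near $t_j$, $\det\Phi$ is governed by $\uptau^2$, $\uptau\partial_{t_j}\uptau$, $D^2_{[t_j]}(\uptau,\uptau)$ and by the sums $\sum_i\theta_i\theta_j(t_i-t_j)^d\uptau^{\otimes2,(i,j)}$.

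\textbf{Step 3: eliminate the bilinear combinations.} Apply Theorem~\ref{thm:tau_identities} for $d=2,1,0$. This converts every $\uptau^{\otimes2,(i,j)}$ sum into $\Delta_{\theta_j}\uptau^2$, $\uptau\partial_{t_j}\uptau$ and $D^2(\uptau,\uptau)$. After dividing by $\uptau^2$ (coming from $\det\Phi$ in the denominators), the $D^2$ contributions should cancel between the $(\mathrm{Tr}A)^2$ term and the $\det A$ term. What should remain is a multiple of $\partial_{t_j}\log\uptau$ with a coefficient rational in $\Delta_{\theta_j}$. Checking that this coefficient equals $(1+4\Delta_{\theta_j}-8\Delta_{\theta_j}^2)/(4\Delta_{\theta_j}^2)$ is then bookkeeping.

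\textbf{Main obstacle.} The hard step is the OPE of $\partial J^+\wedge\partial J^-$ (equivalently, of $\Phi'$) with $\bar V^{\otimes 2}$. It has higher-order poles, and it involves descendants $L_{-1}$ and $L_{-2}$ of the $J$-action on $|\theta\rangle$, beyond the expansions \eqref{eq:I_V2sh_pf_J_vacL} used in Lemma~\ref{lemma:IV_OPE}.

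I would first try to avoid computing it directly. The idea is to write $\det\Phi'$ as a mixed derivative of the determinant of the matrix of $J$-correlators, i.e.\ to obtain it from $\partial_z\partial_w$ of $\det\langle\bar{\mathsf V}_nJ^{\varsigma}(z)J^{\varsigma_0}(w)\rangle$ evaluated suitably. One can equally obtain $\det\Phi'$ from $\det\Phi$ and the diagonal asymptotics \eqref{eq:Phi_asympt_2}. This would make everything follow from Prop.~\ref{prop:det_answ}, with only local data at $t_j$ needed.

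If that fails, I would compute the needed OPE coefficients level by level using the fusion \eqref{fusion_23_per} and the explicit $n=3$ blocks of Prop.~\ref{prop:explicite_cb}. Those blocks contain exactly the same local singular coefficients, since the OPE at $t_j$ is local.
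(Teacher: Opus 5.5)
\textbf{The contour nesting in Step 2 is the wrong one, and with it the right-hand side vanishes identically.} You put $z_0$ inside the $z$-circle and expand $1/(z-z_0)$ in $(z_0-t_j)/(z-t_j)$. Fix $z$ and move $\psi_{-\varsigma_0}(z_0)$ next to $\bar V^{\theta_j}(t_j)$ using \eqref{fusion_13_block} and \eqref{fusion_23_block}. This gives $\Phi(z,z_0)=E\,(z_0-t_j)^{\mathrm{diag}(\theta_j,-1-\theta_j)}K(z,z_0)$, with $E$ a constant matrix and $K$ holomorphic in $z_0$ near $t_j$. The left factor cancels in $\mathrm{Tr}A^2=\mathrm{Tr}\big((\partial_zK)K^{-1}\big)^2$. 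So $\mathrm{Tr}A^2$ is holomorphic in $z_0$ at $t_j$ wherever $\det K(z,t_j)\neq0$. For $k=1$ this is just the $z_0$-independence of $A$, and with your nesting \eqref{eq:tau_c1_add} would vanish too. Hence, in the region where your expansion converges, no negative powers of $z_0-t_j$ occur, and the double residue is $0$ rather than a multiple of $\partial_{t_j}\log\uptau$.

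What Step 2 misses are the apparent singularities. By Prop.~\ref{prop:det_answ} and the $d=2$ case of Theorem~\ref{thm:tau_identities}, near $t_j$ one has
\[
\det\Phi\simeq\uptau^2\Big(1-2\Delta_{\theta_j}\frac{(z-z_0)^2}{(z-t_j)(z_0-t_j)}\Big).
\]
So two zeros $w_\pm\simeq t_j+r_\pm(z_0-t_j)$ of $\det\Phi$, with $r=2\Delta_{\theta_j}(r-1)^2$ and $r_+r_-=1$, collide with $t_j$ as $z_0\to t_j$, and $\mathrm{Tr}A^2$ has double poles at them. There is therefore no canonical ``Laurent expansion in both variables near $z=z_0=t_j$''; the result depends on which of $z_0,w_\pm$ the $z$-circle encloses. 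The paper (Appendix~\ref{app:proofs}) first takes the residue at $z=t_j$ at fixed $z_0$, i.e.\ with a $z$-circle excluding $z_0$ and $w_\pm$. This produces coefficients $\langle\bar{\mathsf V}_n^{(j),\otimes2}I(z_0)\rangle$ and $\langle\tilde{\bar{\mathsf V}}_n^{(j),\otimes2}I(z_0)\rangle$, which are then expanded in $z_0-t_j$. Exchanging the order of integration shows that this nonzero answer is carried entirely by $w_\pm$.

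\textbf{Your shortcuts for $\det\Phi'$ also fail.} Take $M_{\varsigma_0\varsigma}=\langle\bar{\mathsf V}_nJ^{\varsigma}(z)J^{\varsigma_0}(z_0)\rangle$. Then $\det\partial_zM$ is not a derivative of $\det M(z,z_0)$. It equals $\partial_{z_1}\partial_{z_2}$, at $z_1=z_2=z$, of the split correlator $\tfrac12\langle\bar{\mathsf V}_n^{\otimes2}\,(J^+(z_1)\otimes J^-(z_2)-J^-(z_1)\otimes J^+(z_2))\,I(z_0)\rangle$, which Prop.~\ref{prop:det_answ} does not provide.

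Nor is $\det\Phi'$ fixed by $\det\Phi$ together with the asymptotics at $z_0$. Write $\Phi=H_j(z-t_j)^{D}C_j$ with $D=\mathrm{diag}(\theta_j,-1-\theta_j)$ and $X=H_j^{-1}H_j'$. Then
\[
\det A=\det D\,(z-t_j)^{-2}+\big((-1-\theta_j)X_{11}+\theta_jX_{22}\big)\big|_{t_j}(z-t_j)^{-1}+O(1).
\]
This residue, which is exactly what the $z$-integral needs, involves $X_{11}$ and $X_{22}$ separately, whereas $\det\Phi$ only knows $\mathrm{tr}X$. So your fallback is unavoidable, and it is what the paper does. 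It uses the local expansion \eqref{eq:expand_JJder} of $\partial J^+\wedge\partial J^-$ on $|\theta\rangle$, which produces the derivative-shifted operator $\mathcal D\bar V^{\theta,\wedge2}$, and then the OPE \eqref{eq:I_V2sh_pf_J_vac_am} of $I(z_0)$ with it. No $n=3$ blocks are needed.

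\textbf{A smaller point on Step 3.} By homogeneity in $(z-t_j,z_0-t_j)$, only the leading and first subleading orders of $\mathrm{Tr}A^2$ reach the double residue. So only the $d=2,1$ cases of Theorem~\ref{thm:tau_identities} enter, and $D^2_{[t_j]}(\uptau,\uptau)$ never appears; there is no cancellation to arrange. For the $\det\Phi$ part, using Prop.~\ref{prop:det_answ} with these two identities is a legitimate substitute for the paper's direct use of Lemma~\ref{lemma:I_V_OPE_shifted}.
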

Recall that $\Delta_{\theta_j} = \frac{\theta_j(\theta_j+1)}{2}$.
\begin{proof}
    The proof is based on OPE and is highly technical. We sketch it in Appendix \ref{app:proofs}.
\end{proof}

\subsubsection{Generalized Wick theorem for symplectic fermions}\label{ssec:SF_Wick}
Equations \eqref{eqs:I_V_OPE_shifted} can be thought of as an analogue of the commutativity \eqref{eq:c1I_comm_loc} which plays the key role in free-fermionic setting (as we briefly recall in Section \ref{ssec:FF_Wick}). 
Here we present the generalized Wick theorem for symplectic fermions.

We use notation 
\begin{equation}
f_{\varsigma_1,\dots, \varsigma_m}(z_1,\dots,z_m) \coloneqq \langle \bar{\mathsf{V}}_nJ^{\varsigma_1}(z_1)\dots J^{\varsigma_m}(z_m)\rangle,
\end{equation}
in particular, $f_{\varnothing}= \langle \bar{\mathsf{V}}_n\rangle$. We also denote 
\begin{equation}
    f^{\otimes 2,(i_1,\dots,i_r)}_{\varsigma_1,\dots, \varsigma_{m_1}| \varrho_1,\dots, \varrho_{m_2}}(z_1,\dots,z_{m_1}|w_1,\dots, w_{m_2}) \coloneqq \langle \bar{\mathsf{V}}_n^{(i_1,\dots, i_r),\otimes 2}(J^{\varsigma_1}(z_1)\dots J^{\varsigma_{m_1}}(z_{m_1}))\otimes (J^{\varrho_1}(w_1)\dots J^{\varrho_{m_2}}(w_{m_2}))\rangle,
\end{equation}

\begin{prop}{(Wick theorem for symplectic fermions)}\label{prop:Wick SF}
The correlation functions with multiple insertions of symplectic fermions satisfy the following equations for any $\varsigma_1,\dots, \varsigma_m,\varrho \in \{\pm 1\}$.
\label{prop:Wick_SF}
    \begin{multline}\label{eq:Wick_SF}
        f_{\varsigma_1,\dots, \varsigma_m,\varrho}(z_1,\dots, z_m,w) = \frac{(-1)^{m+1}}{f_{\varnothing}}\biggl(\sum_{i=1}^{m}(-1)^{i}f_{\varsigma_1,\dots,\widehat{\varsigma}_{i},\dots, \varsigma_m}(z_1,\dots, \hat{z}_i,\dots, z_m)\partial _{z_i}((z_i - w)f_{\varsigma_i,\varrho}(z_i,w)) +\\
        + \sum_{j = 1}^{n}\theta_{j}(t_j-w)f_{\varsigma_1,\dots, \varsigma_m|\varrho}^{(j),\otimes 2}(z_1,\dots, z_m|w)\biggl)
    \end{multline}
\end{prop}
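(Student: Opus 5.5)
We will prove \eqref{eq:Wick_SF} in the same spirit as Prop.~\ref{prop:det_answ}: reduce a correlator with $m+1$ fermions to a rational function of one variable and rebuild that function from its poles. Fix $z_1,\dots,z_m$ and consider the tensor-square correlator
\begin{equation*}
G(u)\coloneqq\big\langle \bar{\mathsf{V}}_n^{\otimes 2}\,\big(J^{\varsigma_1}(z_1)\cdots J^{\varsigma_m}(z_m)\otimes \mathbf1\big)\, I(u)\,\big(\mathbf1\otimes J^{\varrho}(w)\big)\big\rangle ,
\end{equation*}
with $I(u)$ placed as in Prop.~\ref{prop:CFT_det}. Expanding $I(u)=J^+(u)\otimes J^-(u)-J^-(u)\otimes J^+(u)$ shows that $G(u)=\sum_{\varsigma}\epsilon_{\varsigma\bar\varsigma}\, f_{\varsigma_1,\dots,\varsigma_m,\varsigma}(z_1,\dots,z_m,u)\,f_{\bar\varsigma,\varrho}(u,w)$. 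This is a bilinear combination whose behaviour as $u\to w$ extracts the wanted $f_{\varsigma_1,\dots,\varsigma_m,\varrho}(\dots,w)\,f_\varnothing$.

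By Lemma~\ref{lemma:IV_OPE} and the $\SL$-invariance of $I$ used there (through $\det\tilde F=1$), $I(u)$ is local with respect to $\bar{\mathsf V}_n^{\otimes2}$. Hence $G(u)$ is single-valued and rational in $u$. Its only possible poles are at $u=t_j$, $u=z_i$ and $u=w$, and it decays like $u^{-4}$ at infinity because $I$ is primary of weight $2$ (Prop.~\ref{prop:IIOPE}). This lets us study $(u-w)^2G(u)$, or a suitably weighted version of it, via a contour integral around infinity.

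We then compute the singular parts one by one.

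At $u=t_j$, Lemma~\ref{lemma:IV_OPE} gives a simple pole with residue $-\theta_j f^{(j),\otimes2}_{\varsigma_1..\varsigma_m|\varrho}$.

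At $u=z_i$, the OPE $J^{\varsigma_i}(z_i)J^{\varsigma}(u)=\epsilon_{\varsigma_i\varsigma}(z_i-u)^{-2}+\dots$ in the first factor removes $J^{\varsigma_i}$. This gives the sign $(-1)^{i}$ from moving past the other fermions, and leaves $f_{\dots\widehat{\varsigma_i}\dots}$ times the second factor $J^{\bar\varsigma}(u)J^{\varrho}(w)$ evaluated near $z_i$. The double pole together with its derivative term produces $\partial_{z_i}$ acting on the two-point function $f_{\varsigma_i,\varrho}(z_i,w)$.

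At $u=w$, the OPE in the second factor gives $f_{\varsigma_1..\varsigma_m,\varrho}(\dots,w)f_\varnothing$ at order $(u-w)^{-2}$, plus a derivative term.

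Choosing the weight $(u-w)$, that is, integrating $(u-w)G(u)\,\mathrm du$ over a large circle, which vanishes, makes the derivative contributions at $u=w$ cancel. It also converts the $t_j$ residues into the factors $\theta_j(t_j-w)$ and the $z_i$ contributions into $\partial_{z_i}\big((z_i-w)f_{\varsigma_i,\varrho}(z_i,w)\big)$. The residue theorem then yields \eqref{eq:Wick_SF} after dividing by $f_\varnothing$ and tracking the overall sign $(-1)^{m+1}$.

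The main obstacle is the bookkeeping of fermionic signs and of the exact form of the $u=z_i$ contribution. The symplectic fermions are odd, so the wedge in $I$ and the ordering inside the radially ordered correlator must be treated with care. Moreover, the regular term $-\epsilon T$ in the $JJ$ OPE must be checked not to contribute at order $(u-z_i)^{-1}$ once it is multiplied by $(u-w)$. I would first verify the case $m=1$ against Prop.~\ref{prop:det_answ} to fix the signs, then proceed by the general residue computation.
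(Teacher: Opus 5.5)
Your proposal is correct and is essentially the paper's proof. The paper uses the same correlator $h(z)=\langle \bar{\mathsf{V}}_n^{\otimes2} I(z)J_1^{\varsigma_1}(z_1)\cdots J_1^{\varsigma_m}(z_m)J_2^{\varrho}(w)\rangle$, rebuilds it from its poles at $t_j$, $z_i$, $w$ (via Lemma~\ref{lemma:IV_OPE} and the $IJ$ OPEs, which are equivalent to your factorized $JJ$ computation), and uses its $z^{-4}$ decay. Where you integrate $(u-w)G(u)$ over a large circle, the paper sets the $z^{-1}$ and $z^{-2}$ coefficients at infinity to zero separately and then takes exactly the linear combination equivalent to your weight $(u-w)$.
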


\begin{proof}
    Denote $J_1^{\varsigma}(z) \coloneqq J^{\varsigma}(z)\otimes 1,\; J_2^{\varsigma}(z) \coloneqq 1\otimes J^{\varsigma}(z)$. We have OPE
    \begin{equation}\label{eq:IJ_OPE}
    I(z)J_{1}^{\varsigma}(w) = \frac{J^\varsigma_2(w)}{(z-w)^2}+\frac{\partial J^{\varsigma}_2(w)}{(z-w)}+ \mathrm{reg},\qquad I(z)J_{2}^{\varsigma}(w) = -\frac{J^\varsigma_1(w)}{(z-w)^2}-\frac{\partial J^{\varsigma}_1(w)}{(z-w)}+ \mathrm{reg}.
    \end{equation}
    Consider the correlator
    \begin{equation}
    h(z) = \langle \bar{\mathsf{V}}_n^{\otimes2} I(z)J_{1}^{\varsigma_1}(z_1)\dots J_{1}^{\varsigma_m}(z_m)J_2^{\varrho}(w)\rangle.
    \end{equation}
    The function $h$ is rational in $z$ and is decaying at $\infty$ as $z^{-4}$ (cf. the proof of Prop. \ref{prop:det_answ}). Using OPE's \eqref{eq:IV_OPE}, \eqref{eq:IJ_OPE} we find
    \begin{multline}
    h(z) = -\sum_{j=1}^n \frac{\theta_j}{z-t_j}f_{\varsigma_1,\dots, \varsigma_m|\varrho}^{(j),\otimes 2}(z_1,\dots, z_m|w) +\\ + \sum_{i=1}^{m}(-1)^{i-1}f_{\varsigma_1,\dots,\widehat{\varsigma}_{i},\dots, \varsigma_m}(z_1,\dots, \hat{z}_i,\dots, z_m)\partial_{z_i}\left(\frac{f_{\varsigma_i,\varrho}(z_i,w)}{(z-z_i)}
    \right) + (-1)^{m+1}f_{\varnothing}\partial_{w}\left(\frac{f_{\varsigma_1,\dots, \varsigma_m,\varrho}(z_1,\dots, z_m,w)}{(z-w)}
    \right).
    \end{multline}

Computing the residue with respect to $z$ at $\infty$ we obtain
\begin{multline}\label{eq:wick_rule_hexp1}
    0 = -\sum_{j=1}^n \theta_{{j}}f_{\varsigma_1,\dots, \varsigma_m|\varrho}^{(j),\otimes 2}(z_1,\dots, z_m|w) + \sum_{i=1}^{m}(-1)^{i-1}f_{\varsigma_1,\dots,\widehat{\varsigma}_{i},\dots, \varsigma_m}(z_1,\dots, \hat{z}_i,\dots, z_m)\partial _{z_i}f_{\varsigma_i,\varrho}(z_i,w) +\\ + (-1)^{m+1}f_{\varnothing}\partial_wf_{\varsigma_1,\dots, \varsigma_m,\varrho}(z_1,\dots, z_m,w).
\end{multline}
Computing the next term of the expansion of $h(z)$ at $\infty$ we find
\begin{multline}\label{eq:wick_rule_hexp2}
    0 = -\sum_{j=1}^n t_j\theta_{{j}}f_{\varsigma_1,\dots, \varsigma_m|\varrho}^{(j),\otimes 2}(z_1,\dots, z_m|w) + \sum_{i=1}^{m}(-1)^{i-1}f_{\varsigma_1,\dots,\widehat{\varsigma}_{i},\dots, \varsigma_m}(z_1,\dots, \hat{z}_i,\dots, z_m)\partial _{z_i}(z_if_{\varsigma_i,\varrho}(z_i,w)) +\\ + (-1)^{m+1}f_{\varnothing}\partial_w(w f_{\varsigma_1,\dots, \varsigma_m,\varrho}(z_1,\dots, z_m,w)).
\end{multline}

By linearly combining equations \eqref{eq:wick_rule_hexp1} and \eqref{eq:wick_rule_hexp2} we obtain
\begin{multline}
    \sum_{j=1}^n (w-t_j)\theta_{{j}}f_{\varsigma_1,\dots, \varsigma_m|\varrho}^{(j),\otimes 2}(z_1,\dots, z_m|w) + \sum_{i=1}^{m}(-1)^{i-1}f_{\varsigma_1,\dots,\widehat{\varsigma}_{i},\dots, \varsigma_m}(z_1,\dots, \hat{z}_i,\dots, z_m)\partial _{z_i}((z_i-w)f_{\varsigma_i,\varrho}(z_i,w)) +\\ + (-1)^{m+1}f_{\varnothing} f_{\varsigma_1,\dots, \varsigma_m,\varrho}(z_1,\dots, z_m,w) =0
\end{multline}
\end{proof}

\begin{remark}
    Wick rule (Prop. \ref{prop:Wick_SF}) together with fusion (Equations \eqref{fusion_rules_per}) allows to reconstruct the correlation functions of $\bar{\mathsf{V}}_n$ with multiple insertions of symplectic fermions by the correlation functions with at most two insertions. This implies that the correlators with three vertex operators and at most two insertions of symplectic fermions determine the vertex operators uniquely. These correlators can be obtained from three-point modified RH problem. In Prop. \ref{prop:gRH problem_unique} we have shown that solution of such a problem is unique and gave it explicitly.
\end{remark}

\appendix
\section{Expansions for $c = -2$}
\label{app:OPEs}
Here we collect expansions of the actions of various combinations of currents $J^{\pm}(z)$ on a highest weight vector in the sum of Verma modules $\bigoplus_{n\in\mathbb{Z}} \mathbb{M}_{\theta+n} = \mathbb{M}_{[\theta]_2}\oplus \mathbb{M}_{[\theta+1]_{2}}$.

By explicit calculation we get    \begin{equation}\label{eq:I_V2sh_pf_J_vacL}
        \begin{aligned}
            J^{+}(z)|\theta\rangle &= z^{\theta}\left(1 + \frac{1}{\theta + 1}L_{-1}z + \left(\frac{L_{-1}^2}{\theta + 1} + L_{-2}\right)\frac{z^2}{2(\theta + 2)} + O(z^3)\right)|\theta {+} 1\rangle,\\
            J^{-}(z)|\theta\rangle &= z^{-\theta - 1}\left(-\theta + L_{-1}z + \left(\frac{L_{-1}^2}{-\theta + 1} + L_{-2}\right)\frac{z^2}{2} + O(z^3)\right)|\theta {-} 1\rangle.
        \end{aligned}
    \end{equation}

\begin{equation}\label{eq:expand_JJ}
    J^{+}(z)|\theta\rangle {\wedge} J^{-}(z)|\theta\rangle = -\frac{\theta}{z}|\theta {+} 1\rangle {\wedge} |\theta - 1\rangle {-} \left(L_{-1}|\theta {+} 1\rangle  {\wedge} \frac{\theta}{\theta+1}|\theta {-} 1\rangle-  |\theta {+}1\rangle{\wedge}L_{-1}|\theta {-} 1\rangle\right) {+} O(z).
\end{equation}

\begin{multline}\label{eq:expand_JJder}
    (J^{+})^{\prime}(z)|\theta\rangle {\wedge} (J^{-})^{\prime}(z)|\theta\rangle = \frac{\theta^2(\theta {+} 1)}{z^3}|\theta {+} 1\rangle {\wedge} |\theta {-} 1\rangle {-} \\
    {-}\frac{\theta}{z^2}\left({}(\theta{+}1)|\theta {-} 1\rangle {\wedge} L_{{-}1}|\theta {+} 1\rangle {-} \theta L_{{-}1}|\theta {-} 1\rangle {\wedge}|\theta {+} 1\rangle\right) {+} O(z^{-1}).
\end{multline}

We also have
\begin{equation}\label{eq:I_V2sh_pf_J_vac_am}
\begin{aligned}
    (\theta+1)^{-1}J^{+}(z)\wedge J^{-}(z)(|\theta{-}1\rangle\wedge L_{-1}|\theta{+}1\rangle) &=
    -(\theta+2)z^{-4}\left(2 +z\theta^{-1}\big(L_{-1}\otimes 1 + 1\otimes L_{-1}\big) + O(z^2)\right)|\theta\rangle^{\otimes 2},\\
    (\theta-1)^{-1}J^{+}(z)\wedge J^{-}(z)(L_{-1}|\theta{-}1\rangle\wedge |\theta{+}1\rangle) &=z^{-4}\left(2(-\theta-1) + z(L_{-1}\otimes 1 + 1\otimes L_{-1}) + O(z^2)\right)|\theta\rangle^{\otimes 2}.
\end{aligned}
\end{equation}

\section{Technical proofs}\label{app:proofs}

\subsection{Proof of Lemma \ref{lemma:I_V_OPE_shifted}}

    Using Prop. \ref{prop:fusion} we obtain
    \begin{multline}\label{eq:I_V2sh_pf_1}        I(z)\left(V^{\theta + 1}(t)\otimes V^{\theta-1}(t)\right) =\\=\sum_{\varsigma_1, \varsigma_2, \varsigma_{3}, \varsigma_{4}}\epsilon_{\varsigma_1, \varsigma_2}\tilde{F}^{[23]}_{\varsigma_1, \varsigma_{3}}\tilde{F}^{[23]}_{\varsigma_2, \varsigma_{4}}V^{\theta+1+\varsigma_3}[J^{\varsigma_{3}}(z-t)|\theta + 1\rangle](t) \otimes V^{\theta-1+\varsigma_4}[J^{\varsigma_{4}}(z-t)|\theta - 1\rangle](t) =\\ = -V^{\theta}[J^{-}(z-t)|\theta + 1\rangle](t) \otimes V^{\theta}[J^{-}(z-t)|\theta + 1\rangle](t) +\\+ V^{\theta+2}[J^{+}(z-t)|\theta + 1\rangle](t) \otimes V^{\theta-2}[J^{-}(z-t)|\theta - 1\rangle](t).    \end{multline}

    Applying expansions \eqref{eq:I_V2sh_pf_J_vacL} 
    we get
    \begin{subequations}
        \begin{equation}
            V^{\theta+2}[J^{+}(z-t)|\theta + 1\rangle](t) \otimes V^{\theta-2}[J^{-}(z-t)|\theta - 1\rangle](t) = O(z-t) = \mathrm{reg}.
        \end{equation}
        \begin{multline}\label{eq:I_V2sh_pf_2}
            V^{\theta}[J^{-}(z-t)|\theta + 1\rangle](t) \otimes V^{\theta}[J^{+}(z-t)|\theta - 1\rangle](t) =\\= \frac{1}{(z-t)^3}V^{\theta}[\left(-\theta-1 + L_{-1}(z{-}t) + \left(\frac{L_{-1}^2}{-\theta} + L_{-2}\right)\frac{(z{-}t)^2}{2} + O((z{-}t)^3)\right)|\theta \rangle](t) \otimes \\ \otimes V^{\theta}[\left(1 + \frac{1}{\theta}L_{-1}(z{-}t) + \left(\frac{L_{-1}^2}{\theta} + L_{-2}\right)\frac{(z{-}t)^2}{2(\theta + 1)} + O((z{-}t)^3)\right)|\theta\rangle](t)
        \end{multline}
    \end{subequations}

    Then we obtain
\begin{multline}\label{eq:Prop_OPE_IskewV_proof1}
        I(z)(V^{\theta + 1}(t)\otimes V^{\theta-1}(t)-V^{\theta - 1}(t)\otimes V^{\theta+1}(t)) =\\= -V^{\theta}[J^{-}(z-t)|\theta + 1\rangle](t) \otimes V^{\theta}[J^{+}(z-t)|\theta - 1\rangle](t)  - V^{\theta}[J^{+}(z-t)|\theta - 1\rangle](t)\otimes V^{\theta}[J^{-}(z-t)|\theta + 1\rangle](t) + \mathrm{reg} = 
        \\
        = -\frac{1}{(z-t)^3}V^{\theta}[\left(-\theta-1 + L_{-1}(z{-}t) + \left(\frac{L_{-1}^2}{-\theta} + L_{-2}\right)\frac{(z{-}t)^2}{2} + O((z{-}t)^3)\right)|\theta\rangle](t) \otimes 
        \\ \otimes V^{\theta}[\left(1 + \frac{1}{\theta}L_{-1}(z{-}t) + \left(\frac{L_{-1}^2}{\theta} + L_{-2}\right)\frac{(z{-}t)^2}{2(\theta + 1)} + O((z{-}t)^3)\right)|\theta\rangle](t) - 
        \\
        - \frac{1}{(z-t)^3}V^{\theta}[\left(1 + \frac{1}{\theta}L_{-1}(z{-}t) + \left(\frac{L_{-1}^2}{\theta} + L_{-2}\right)\frac{(z{-}t)^2}{2(\theta + 1)} + O((z{-}t)^3)\right)|\theta\rangle](t) \otimes 
        \\
        \otimes V^{\theta}[\left(-\theta-1 + L_{-1}(z{-}t) + \left(\frac{L_{-1}^2}{-\theta} + L_{-2}\right)\frac{(z{-}t)^2}{2} + O((z{-}t)^3)\right)|\theta\rangle](t) +\mathrm{reg}.
    \end{multline}
    Computing the singular part of the expansion of \eqref{eq:Prop_OPE_IskewV_proof1} at $z = t$ we obtain
    \begin{multline}
        I(z)(V^{\theta + 1}(t)\otimes V^{\theta-1}(t)-V^{\theta - 1}(t)\otimes V^{\theta+1}(t)) = \\
        = \frac{1}{\theta}\left(\frac{4\Delta_\theta}{(z-t)^3} + \frac{1}{(z-t)^2}(\partial_t\otimes 1 + 1\otimes \partial_t) + \frac{1}{(z-t)}(\partial_t \otimes 1-1\otimes\partial_t)^2\right)\left(V^{\theta}(t)\otimes V^{\theta}(t)\right) + \mathrm{reg}.
    \end{multline}

\subsection{Proof of Prop. \ref{prop:tau_isom_def}}
    In this proof we use the notions and results introduced in Section \ref{sec:fermions}.  
    
    We have
    \begin{equation}\label{eq:TrA2thrDet}
    \frac{1}{2}\mathrm{Tr}\left(A^{2}(z)\right) = \frac{1}{2}\left(\mathrm{Tr}\left(A(z)\right)\right)^2 - \det\left(A(z)\right) = \frac{1}{2}\left(\frac{\det^{\prime}\left(\Phi(z)\right)}{\det\left(\Phi(z)\right)}\right)^2 - \frac{\det\left(\Phi^{\prime}(z)\right)}{\det\left(\Phi(z)\right)}.
\end{equation}

Let $M_{\varsigma_1,\varsigma_2} = \langle \bar{\mathsf{V}}_n J^{\varsigma_2}(z)J^{\varsigma_1}(z_0)\rangle$. From equations \eqref{eq:TrA2thrDet} and  \eqref{eq:gen_RH_sol_cft} we obtain
\begin{equation}
 \frac{1}{2}\mathrm{Tr}\left(A^{2}(z)\right) = \frac{1}{2}\log'(\det(M))^2 + \frac{6}{z-z_0}\log'(\det(M)) + \frac{12}{(z-z_0)^2} -\frac{\det(M')}{\det(M)}.
\end{equation}

Using the expansions \eqref{eq:expand_JJ} and OPE we obtain
\begin{equation}\label{eq:detM_exp1}
    2\det(M) = {-}\frac{\theta_{j}}{(z{-}t_j)}\langle\bar{\mathsf{V}}_{n}^{(j),\otimes 2}I(z_0)\rangle {+} \theta_j \langle\tilde{\bar{\mathsf{V}}}_{n}^{(j),\otimes 2}I(z_0)\rangle {+} O((z{-}t_j)).
\end{equation}

Here by $\tilde{\bar{\mathsf{V}}}_{n}^{(j),\otimes 2}$ we denote the standard sum of compositions of periodic vertex operators, where instead of $\bar{V}_{[\sigma_{j}+l_{j}]_2,[\sigma_{j-1}+l_{j-1}]_2}^{\theta_{j}}(t_{j})\otimes \bar{V}_{[\sigma_{j}+l_{j}^{\prime}]_2,[\sigma_{j-1}+l_{j-1}^{\prime}]_2}^{\theta_{j}}(t_{j})$ we insert
\begin{multline}
    \mathcal{D}\bar{V}_{[\sigma]_{2}, l,l^{\prime},[\tilde{\sigma}]_{2}, \tilde{l},\tilde{l}^{\prime}}^{\theta, \wedge 2}(t_{j}) \coloneqq\mathcal{D}_{1}^{(j)}(\bar{V}_{[\sigma_{j}+l_{j}{+}1]_2,[\sigma_{j-1}{+}l_{j-1}{+}1]_2}^{\theta_{j}{+}1}(t_{j})\otimes \bar{V}_{[\sigma_{j}+l_{j}^{\prime}{+}1]_2,[\sigma_{j-1}+l_{j-1}^{\prime}{+}1]_2}^{\theta_{j}{-}1}(t_{j})) 
    -\\-
    \mathcal{D}_{2}^{(j)}(\bar{V}_{[\sigma_{j}+l_{j}{+}1]_2,[\sigma_{j-1}{+}l_{j-1}{+}1]_2}^{\theta_{j}{-}1}(t_{j})\otimes \bar{V}_{[\sigma_{j}+l_{j}^{\prime}{+}1]_2,[\sigma_{j-1}+l_{j-1}^{\prime}{+}1]_2}^{\theta_{j}{+}1}(t_{j})),
\end{multline}
where
\begin{equation}
    \mathcal{D}_1^{(j)} = \frac{1}{\theta_{j}}1\otimes \partial_{t_j} - \frac{1}{\theta_j{+}1}\partial_{t_j} \otimes 1,\;\; \mathcal{D}_2^{(j)} = \frac{1}{\theta_{j}} \partial_{t_j} \otimes 1- \frac{1}{\theta_j{+}1}  1\otimes \partial_{t_j}.
\end{equation}

Further using the expansions  \eqref{eq:expand_JJder} and OPE we find
\begin{multline}\label{eq:detMp_exp1}
    2\det(M') = \frac{2\Delta_{\theta_{j}}\theta_{j}}{(z{-}t_j)^3}\langle\bar{\mathsf{V}}_n^{(j),\otimes 2}I(z_0)\rangle {+}\\
    {+} \frac{2\Delta_{\theta_{j}}\theta_{j}}{(z{-}t_j)^2}\left(\frac{1}{2\Delta_{\theta_{j}}}(\partial_{t_{j}}\otimes 1 + 1\otimes \partial_{t_{j}})\big(\langle\bar{\mathsf{V}}_n^{(j),\otimes 2}I(z_0)\rangle\big) - \langle\tilde{\bar{\mathsf{V}}_n}^{(j),\otimes 2}I(z_0)\rangle\right) + O\big((z-t_j)^{-1}\big).
\end{multline}

From equations \eqref{eq:detM_exp1} and \eqref{eq:detMp_exp1} we obtain
\begin{equation}
   \frac{\det(M')}{\det(M)} =  -\frac{2\Delta_{\theta_j}}{(z-t_j)^2} - \frac{1}{z-t_j}\partial_{t_j}(\log(\langle\bar{\mathsf{V}}_n^{(j),\otimes 2}I(z_0)\rangle)) + O(1).
\end{equation}
\begin{equation}
    \log'(\det(M))^2 = \frac{1}{(z-t_j)^2} +\frac{2\langle\tilde{\bar{\mathsf{V}}}_{n}^{(j),\otimes 2}I(z_0)\rangle}{\langle\bar{\mathsf{V}}_n^{(j),\otimes 2}I(z_0)\rangle(z-t_j)} +  O(1).
\end{equation}

\begin{equation}
    \frac{1}{2}\oint_{C_{t_j}}\frac{\mathrm{d} z}{2\pi \ri}\frac{\mathrm{Tr}(A^{2}(z, z_0))}{z - z_0}= -\frac{\frac{13}{2}+2\Delta_{\theta_j}}{(z_0-t_j)^2} - \frac{\partial_{t_j}\langle\bar{\mathsf{V}}_n^{(j),\otimes 2}I(z_0)\rangle + \langle\tilde{\bar{\mathsf{V}}}_{n}^{(j),\otimes 2}I(z_0)\rangle}{2(z_0-t_j)\langle\bar{\mathsf{V}}_n^{(j),\otimes 2}I(z_0)\rangle}.
 \end{equation}

Lemma \ref{lemma:I_V_OPE_shifted} implies
\begin{equation}
    \langle\bar{\mathsf{V}}_n^{(j),\otimes 2}I(z_0)\rangle = \frac{1}{\theta_{j}(z_0-t_j)^3}\left(4\Delta_{\theta_j}\uptau^2 + 2(z_0-t_j)\uptau\partial_{t_j}(\uptau) + O((z_0-t_j)^2)\right).
\end{equation}
We use OPE and the expansion \eqref{eq:I_V2sh_pf_J_vac_am} to obtain
\begin{multline}
    I(z_0)\mathcal{D}\bar{V}_{[\sigma]_{2}, l,l^{\prime},[\tilde{\sigma}]_{2}, \tilde{l},\tilde{l}^{\prime}}^{\theta, \wedge 2}(t_j) =\\
    =\left(\frac{2(1-4\Delta_{\theta_j})}{\theta_j(z-t_j)^4} -\frac{3}{\theta_{j}(z-t_j)^3}(\partial_{t_{j}}\otimes 1 + 1\otimes \partial_{t_{j}}) \right)(V_{l_j}^{\theta_{j}}(t_j)\otimes V_{\tilde{l}_j}^{\theta_{j}}(t_j))+ O((z-t_j)^{-2}),
\end{multline}
which implies
\begin{equation}
    \langle\tilde{\bar{\mathsf{V}}}_{n}^{(j),\otimes 2}I(z_0)\rangle = \frac{2}{\theta_j(z_0-t_j)^4}\left((1-4\Delta_{\theta_j})\uptau^2 - 3\uptau\partial_{t_j}(\uptau)(z_0-t_j) + O((z_0-t_j)^2)\right).
\end{equation}
    
We obtain

\begin{equation}
    \frac{1}{2}\oint_{C_{t_j}}\frac{\mathrm{d} z}{2\pi \ri}\frac{\mathrm{Tr}(A^{2}(z, z_0))}{z - z_0} = -\frac{\frac{15}{2}+\frac{1}{2\Delta_{\theta_j}}+2\Delta_{\theta_j}}{(z_0-t_j)^2}+ \frac{\left(1+4\Delta_{\theta_j} - 8\Delta_{\theta_j}^2\right)\partial_{t_j}(\log(\uptau))}{4\Delta_{\theta_j}^2(z_0-t_j)}+O(1).
\end{equation}

\printbibliography[
heading=bibintoc,
title={References}
]

\printaddresses

\end{document}